\newcommand{\ie}{{\em i.e.}}
\newcommand{\eg}{{\em e.g.}}
\newcommand{\et}{{\em et al.}}
\DeclareMathOperator*{\argmax}{arg\,max}
\begin{document}
\title{Optimal Search Segmentation Mechanisms for \\Online Platform Markets}
%
%
\author{Zhenzhe Zheng\inst{1} \and
R. Srikant\inst{2} }
\authorrunning{Z. Zheng and R. Srikant}
%
\institute{Coordinated Science Lab, University of Illinois at Urbana-Champaign
\email{zhenzhe@illinois.edu}\\
 \and
Coordinated Science Lab, Department of Electrical and Computer Engineering, University of Illinois at Urbana-Champaign\\
\email{rsrikant@illinois.edu}}
\maketitle              
\begin{abstract}
Online platforms, such as Airbnb, hotels.com, Amazon, Uber and Lyft, can control and optimize many aspects of product search to improve the efficiency of marketplaces. Here we focus on a common model, called the discriminatory control model, where the platform chooses to display a subset of sellers who sell products at prices determined by the market and a buyer is interested in buying a single product from one of the sellers. Under the commonly-used model for single product selection by a buyer, called the multinomial logit model, and the Bertrand game model for competition among sellers, we show the following result: to maximize social welfare, the optimal strategy for the platform is to display all products; however, to maximize revenue, the optimal strategy is to only display a subset of the products whose qualities are above a certain threshold.
We  extend our results to Cournot competition model, and show that the optimal search segmentation mechanisms for both social welfare maximization and revenue maximization also have simple threshold structures. The threshold in each case  depends on the quality of all products, the platform's objective and seller's competition model, and can be computed  in linear time in the number of products.
\keywords{Online Platform Markets  \and Bertrand Competition Game \and Search Segmentation.}
\end{abstract}
\section{Introduction}
In recent years, we have witnessed the rise of many successful online platform markets, which have reshaped the economic landscape of modern world. The online platforms facilitate the exchange of goods and services between buyers and sellers. For example, buyers can purchase goods from sellers on Amazon, eBay and Etsy, arrange accommodation from hosts on Airbnb and Expedia, order transportation services from drivers on Uber and Lyft, and find qualified workers on online labor markets, such as Upwork and Taskrabbit. The total market value of online platforms has exceeded 4.3 trillion dollars worldwide, and is growing quickly~\cite{evans2016rise}.

One salient feature of these online platforms are that the market operators have fine-grained information about the underlying characteristics of transactions, and can leverage this knowledge to design effective and efficient market structure.
Compared with traditional markets, the modern online marketplaces have greater controls over price determination, search and discovery, information revelation, recommendation, etc. For example, Uber and Lyft adopt the \emph{full control model}, in which the ride-sharing platforms use online matching algorithms to determine matches between drivers and riders as well as the fee for the route~\cite{cannon2014uber,Chen:2015:PBH:2815675.2815681}. Amazon and Airbnb use the \emph{discriminatory control model}, where the platforms only control the list of products to display for  each buyer's search, and the potential matches and transaction prices are determined by the preference of buyers and the competition among sellers~\cite{Chen:2016:EAA,Grbovic:2018}. The platform can also use other types of control, such as commissions/subscriptions fees~\cite{Birge:2018:OCS:3219166.3219216}, to influence the outcomes of markets. The rich control options for online platforms have led to an increasing discussion about the design of online marketplaces with different optimization objectives~\cite{Arnosti:2014:MCD:2600057.2602893,Banerjee:2017:STM:3038912.3052578,kanoria2017facilitating,NBERw24282}.

In this paper, we investigate  social welfare and revenue optimization under the discriminatory control model for
online marketplaces.
In the discriminatory control model,  the platform has only control over \emph{search segmentation mechanisms - which products to display for each buyer's search}, and the transaction prices are endogenously determined by the competition among sellers.
Unlike traditional firms, most online platforms do not manufacture goods or provide services, and thus they also do not dictate the specific transaction prices. Instead, buyers and sellers jointly determine the prices at which the goods or services will be traded. For example, sellers set prices for their goods on Amazon, hosts decide on the price for their properties on Airbnb, and freelancers negotiate employers with hourly fee on Upwork. These prices depend on the demand and supply for comparable goods and services in the market, and
choosing different products to display for buyers impacts the transaction prices and then the social welfare and revenue.
Motivated by this, we study the role of search segmentation mechanisms in social welfare and revenue optimization in  the discriminatory control model with endogenous prices.

To calculate the  social welfare and revenue, we first need to specify  demand and supply in online marketplaces.
Much of prior work simply represent the demand/supply curves with non-increasing/non-decreasing distributions~\cite{Banerjee:2017:STM:3038912.3052578,Birge:2018:OCS:3219166.3219216}. Instead, we consider a demand and supply function derived from a basic market setting in which each seller has one unit of product to offer, and each buyer demands at most one unit of product chosen from the products displayed to her\footnote{Throughout the article, we use product to refer good/service, and use the terms of product and seller interchangeably.}. Given the quality and prices of products, the demand for each product is equivalent to the proportion of potential buyers that purchase such a product. Thus, the demand function is closely related to the purchase behaviors of buyers who face multiple substitutable products. We adopt the standard multinomial logit (MNL) model~\cite{McFa73} to describe buyers' choice behaviors, and then derive the demand as a softmax function.
With such a specific demand function, we can model the competition among sellers via a Bertrand price competition game, which is a useful model for investigating oligopolistic competition in real markets~\cite{vives2001oligopoly}.
For instance, the Bertrand game can model the situation where the hosts on Airbnb compete for potential guests by  setting prices for their properties.
The basic questions for the Bertrand competition game are  existence, uniqueness, closed-form expression and learning algorithm  of the equilibrium.
The results in~\cite{doi:10.1287/mnsc.1120.1664,doi:10.1287/msom.1060.0115}  have shown that there exists a unique (pure) Nash equilibrium in the Bertrand game with a MNL model.
Furthermore, 
the Nash equilibrium coincides with the solution of a system of first-order-condition equations.
We can then characterize the Nash equilibrium
in a ``closed'' form, and express the equilibrium social welfare/revenue by employing a variant of Lambert W function~\cite{Corless1996}. We also derive myopic learning strategies, \ie, best response dynamics, for sellers to reach the Nash equilibrium in practice.

The online platform can further optimize the equilibrium social welfare/revenue by employing search segmentation mechanisms.
Different sets of sellers involved in the Bertrand competition game lead to different equilibrium solutions. The goal of the search segmentation mechanisms is to efficiently choose a set of products to display for buyers (or in other words, choose a set of sellers to compete in the Bertrand game) that maximizes the equilibrium social welfare or revenue. This display control optimization problem is combinatorial in nature and the number of possible product sets can be very large,  particularly when there are many potential products to offer. One of our main contributions is to identify the efficient and optimal search segmentation mechanism, which turns out to have a simple structure. We show that the online platform \emph{will display all products to maximize equilibrium social welfare, but just display the top $k^*$ highest quality products to maximize equilibrium revenue.}  We also refer the optimal mechanism for revenue maximization as \emph{quality-order mechanism.} 
The optimal threshold $k^*$ depends on the quality of all products, and can be calculated in linear time in terms of the number of products. 
The optimality of such simple search segmentation mechanisms has crucial theoretical and practical implications. On the theoretical side, this result allows the platform to find the optimal set of displayed products in linear time, significantly reducing the computational complexity of searching for the optimal solution. On the practical side, optimality of quality-order mechanism is quite appealing as it guarantees that a lower quality product will not be chosen for display over a higher quality product.
Moreover, in order to increase the opportunity of being selected, sellers would improve the quality of their products as product quality is the selection criteria of the optimal mechanism, which will benefit all the market participants in the long term.

The  optimality of the quality-order mechanism for revenue maximization  is established by making a novel connection between the quasi-convexity of equilibrium revenue functions and the optimal control decision on selecting displayed products.
We show that in the Bertrand game with a given subset of sellers,
the equilibrium revenue can be expressed as a quasi-convex function with respective to an independent variable, which is a one-to-one transformation of the quality of a candidate product.
The property of quasi-convexity guarantees  that the maximum revenue can be obtained at one of the two endpoints, which corresponds to the options of displaying the current set of products or involving a new product with the highest quality among the remaining products.
With this critical observation, if the platform decides to add a new product, it will always select the available product with the highest quality. Thus, we can efficiently construct the optimal set of displayed products from any product set. Specifically, if the current product set does not contain all the top $k^*$ products, we can further improve the equilibrium revenue by repeatedly replacing one currently selected product with an unselected product with a higher quality.


Our work in this paper is related to work on the design of markets for networked platforms~\cite{Banerjee:2017:STM:3038912.3052578,Bimpikis:2014:CCN:2600057.2602882,8264340,8057125}. We present a detailed discussion of related work towards the end of the paper. Here, we briefly discuss the similarities and differences between our work and prior work on networked market platforms. In networked markets, there are buyers and sellers connected by a bipartite graph, where each link indicates that a specific buyer is allowed to buy from a specific seller. The goal is to remove links from the complete bipartite graph to maximize either social welfare or revenue. However, much of the prior work focuses on a linear price-demand curve which does not explicitly model situations where each buyer is interested in buying only one product (such as one copy of a book) and each buyer takes into account the quality of each product (available typically in the form of reviews) while making a buying decision. For such situations, economists use the MNL model, which we have adopted in this paper. On the other hand, compared to prior work on networked markets, we only consider a much simpler bipartite graph where there is only one representative buyer. Such a model is appropriate when there are no capacity constraints for products at a seller, for example, each seller may have many copies of a book and there is no danger of immediately selling out a particular book title. The model is also appropriate for hotels.com-type settings in situations when most hotels have multiple available rooms. In situations where multiple buyers are performing searches simultaneously and hotels are about to sell out of rooms, capacity constraints do matter. Such capacity-constrained situations have not been studied either in this paper or in prior work, and is a topic for future research.

We now summarize the main contributions of this paper.

$\bullet$ We introduce a stylized model to capture the main features of online platform markets.
We explicitly model the market, where each buyer is interested in purchasing  one product, and takes  into account the quality of products when making choice. Specifically, the demand function for products is derived from the multinomial logit (MNL) choice model, and the supply response of sellers
is described by the outcome of Bertrand competition game. We show that the Bertrand game exists a unique (pure) Nash equilibrium, and the best response dynamics converge to the equilibrium. We also explicitly express the social welfare and revenue under the equilibrium.


$\bullet$ We design efficient search segmentation mechanisms to optimize equilibrium social welfare and revenue under the Bertrand  model of competition.  
We first prove that it is optimal to display all products to maximize social welfare. 
For revenue maximization, we then show that the optimal mechanism, referred to as quality-order mechanism,
only needs to display the top $k^*$ highest quality products, where the optimal number of products $k^*$ can be found in linear time. 

$\bullet$ We prove the result for social welfare maximization by showing the equilibrium social welfare function is  decreasing with respective to an independent variable, which also decreases for involving a new product. 
We establish the optimality of the quality-order mechanism for revenue maximization  by making a novel connection between the quasi-convexity of equilibrium revenue function and the optimal decision on selecting displayed products.

$\bullet$ We  extend our results to another classical oligopolistic competition model: Cournot competition game.  We show that the optimal search segmentation mechanisms for both social welfare and revenue maximization in this model also falls into the simple quality-order mechanisms, in which the optimal threshold $k^*$ depends on the product quality and the platform's specific objective. 







\section{Preliminaries}\label{sec:preli}
 We consider a two-sided market with $n$ sellers $\mathbb{S}= \{1, 2, \cdots, n\}$ and one \emph{representative} buyer, representing a set of homogeneous buyers. Each seller $i\in \mathbb{S}$ offers a product with quality $\theta_i$ and price $p_i$. We denote the quality and price vectors by $\boldsymbol{\theta}=(\theta_1, \theta_2, \cdots, \theta_n)$ and  $\boldsymbol{p} = (p_1, p_2, \cdots, p_n)$, respectively.
 The quality vector $\boldsymbol{\theta}$ is fixed, while the price vector $\boldsymbol{p}$ is determined by the competition among sellers.
Without loss of generality, we assume the products' quality and prices are non-negative, \ie, $\theta_i\geq 0$ and $p_i \geq 0$, and the sellers  are sorted according to the product quality in a non-decreasing order, \ie, $\theta_1 \geq \theta_2 \geq  \cdots \geq \theta_n$. Given the quality $\boldsymbol{\theta}$ and prices $\boldsymbol{p}$ of all products, the buyer purchases one of the $n$ products, or adopts an outside option, \ie, buys nothing from this market. We normalize the problem parameters so that outside option's quality $\theta_0$ and price $p_0$ are zero, \ie, $\theta_0 = p_0 =0$.

 In the  random utility model~\cite{10.2307/184004}, the buyer derives utility $u_i$ from purchasing the product $i\in \mathbb{S}$ or selecting the outside option $i=0$ as follows
\begin{equation*}
u_i \triangleq \theta_i + \xi_i - p_i,
\end{equation*}
where $\xi_i$ is a random variable representing buyer's (private) preference about  the $i$th alternative.
Given the $n+1$ choices ($n$  products and the outside option), the buyer selects the alternative with the maximum {utility}. Under the standard assumption that the random variables $\{ \xi_i \}$ are independent and identically distributed (i.i.d.) with Gumbel distribution~\cite{anderson1992discrete,doi:10.1287/mksc.2.3.203}, it can be shown~\cite{anderson1992discrete,McFa73} that the buyer selects the alternative $i\in \{0\}\cup \mathbb{S}$ with probability
\begin{equation}\label{demand}
q_i(\boldsymbol{p}) \triangleq  Pr(u_i = \max_{j\in  \{0\}\cup \mathbb{S}} u_j) =\frac{a_i}{1+ \sum_{j\in \mathbb{S}} a_j},
\end{equation}
where $a_i= \exp(\theta_i-p_i)$ for all $i \in \mathbb{S}$. We refer to $q_i$ as \emph{demand} or \emph{market share} of the alternative $i \in \{0\} \cup \mathbb{S}$. We can also interpret $q_i$ as the expected sales of quantity of product $i$ normalized by the total number of potential buyers.
This choice model is known as multinomial logit (MNL) model in economic literature~\cite{anderson1992discrete,doi:10.1287/mksc.2.3.203,McFa73}. We use $\boldsymbol{q} = (q_0, q_1, \cdots, q_n)$ to denote the demands of  products.

Under the above model, we can also obtain an explicit form for the utility $\bar{u}$ of the representative buyer
\begin{equation*}
\bar{u} \triangleq \mathbb{E}[\max_{i\in \{0\} \cup \mathbb{S}} u_i] = \log(1+\sum_{i \in \mathbb{S}} a_i).
\end{equation*}
From the demand $q_i(\bm{p})$ in (\ref{demand}), we can express seller $i$'s expected revenue $r_i(\bm{p})$ in terms of prices
\begin{equation}\label{eqn:seller_revenue}
r_i(\boldsymbol{p}) \triangleq p_i \times q_i(\bm{p}) = p_i \times \frac{a_i}{1+ \sum_{j\in \mathbb{S}} a_j}.
\end{equation}
The social welfare of the two-sided market is measured by the sum of buyer's utility and the total revenue of sellers, \ie,
\begin{equation}\label{eqn:sw_price}
sw(\boldsymbol{p}) \triangleq \bar{u}+ \sum_{i\in \mathbb{S}} r_i(\boldsymbol{p}) = \log(1+\sum_{j\in \mathbb{S}} a_j) + \sum_{i \in \mathbb{S}} p_i \times \frac{a_i}{1+ \sum_{j \in \mathbb{S}} a_j}.
\end{equation}
The revenue of the market is the total revenue of all sellers, \ie,
\begin{equation}\label{eqn:revenue_price}
{re}({\bm p}) \triangleq \sum_{i \in \mathbb{S}} r_i({\bm p}) = \sum_{i \in \mathbb{S}} p_i \times \frac{a_i}{1+ \sum_{j \in \mathbb{S}} a_j}.
\end{equation}
We now note  the relation between price and demand in the MNL model, which would be quite useful for  optimization and analysis later. Using the price-demand model  in (\ref{demand}), we can express the price $p_i$ in terms of  demands $\boldsymbol{q}$:
\begin{equation}\label{eqn:price}
p_i(\boldsymbol{q}) = \theta_i + \log(1- \sum_{j \in {S}} q_j) - \log(q_i).
\end{equation}
The social welfare and revenue optimization would become convenient if we work with the demands $\textbf{q}$ rather than the prices $\textbf{p}$. For example, the social welfare  and revenue functions  are not concave in $\textbf{p}$~\cite{doi:10.1287/mnsc.42.7.992}, but  become jointly concave if we express the functions in terms of $\textbf{q}$~\cite{doi:10.1287/msom.1080.0221,song2007demand}. In Appendix~\ref{sec:monopolistic_market}, we leverage this property to derive the optimal prices for social welfare and revenue maximization in the full control model, where the platform can control both price and displayed products. 
\section{Bertrand Competition Game}\label{sec:oligopolistic}
In  discriminatory control model, the online platform can only control the list of products to display for buyers, and the transaction prices are endogenously determined by the oligopolistic competition among sellers.
In a Bertrand competition game, the seller of each product sets a price. Based on the prices of the products and the set of available products, the market produces a certain demand for each product. In our MNL model, the demand is simply the probability with which a product will be purchased by the buyer. This is the typical situation in a Airbnb-like model, where the owner of each rental unit sets a price, the platform controls the manner in which the rental units are displayed, and the renter selects a unit to rent.

In this section, we investigate the existence and uniqueness of equilibrium in the Bertrand  competition game, explicitly express the equilibrium social welfare/revenue, and derive the  best response dynamics to reach the Nash equilibrium.
We assume that only a subset $S \subseteq \mathbb{S}$ of sellers are involved in the game. In other words, we assume that the platform has chosen to display the products of a subset $S$ of the sellers. In the next section, we will show how the choice of  $S$ can be optimized by the platform to maximize either social welfare or revenue.



In the Bertrand competition game, seller $i\in {S}$ selects price $p_i$ to maximize her revenue $r_i(\bm{p})= p_i \times q_i(\bm{p})$, where the demand $q_i(\bm{p})$ is determined by the prices $\bm{p}$ of all products in (\ref{demand}). We can formally represent the Bertrand game as a triplet $G^b=\left({S}, (\mathcal{P}_i)_{i\in {S}}, (r_i)_{i\in {S}}\right)$, where ${S}$ is a set of players, $\mathcal{P}_i$ is the strategy space of player $i\in {S}$ (\ie, $\mathcal{P}_i \triangleq \{p_i | p_i \geq 0\}$), and $r_i(\bm{p})$ is the payoff of player $i\in {S}$. We represent the set of strategy profiles by $\mathcal{P}=\mathcal{P}_1\times \mathcal{P}_2 \times \cdots \times \mathcal{P}_n$. We also denote the strategy profile $\bm{p}\in \mathcal{P}$ as $\bm{p}=(p_i, \bm{p}_{-i})$, where $\bm{p}_{-i}$ is the strategies (or prices) of all the players except $i$.
For such Bertrand game, we have the following result from~\cite{doi:10.1287/msom.1060.0115}.
\begin{theorem}
There exists a unique (pure) Nash equilibrium in the Bertrand game $G^b=\left({S}, (\mathcal{P}_i)_{i\in {S}}, (r_i)_{i\in {S}}\right)$. A vector of prices $\bm{\bar{p}}=(\bar{p}_1,\bar{p}_2, \cdots, \bar{p}_n) \in \mathcal{P}$ satisfies $\partial r_i(\bm{\bar{p}})/ \partial p_i =0$ for all $i\in {S}$ if and only if $\bm{\bar{p}}$ is a Nash equilibrium in $\mathcal{P}$.
\end{theorem}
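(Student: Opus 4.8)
The plan is to prove existence, uniqueness, and the first-order characterization together by first pinning down each seller's best response and then collapsing the fixed-point problem to a single scalar equation. I would begin by analyzing, for a fixed profile $\bm{p}_{-i}$, the revenue $r_i(p_i, \bm{p}_{-i}) = p_i a_i /(1 + \sum_{j\in S} a_j)$ as a function of $p_i$ alone. Writing $\lambda = 1 + \sum_{j \in S} a_j$ and differentiating gives $\partial r_i/\partial p_i = \frac{a_i}{\lambda}\bigl[1 - p_i(1 - q_i)\bigr]$. I would then show that $p_i(1 - q_i)$ is strictly increasing in $p_i$, rising from $0$ at $p_i = 0$ to $+\infty$; hence the bracketed factor decreases strictly from $1$ to $-\infty$ and vanishes at a unique point. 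Combined with the boundary facts $r_i = 0$ at $p_i = 0$ and $r_i \to 0$ as $p_i \to \infty$, this shows $r_i$ is strictly quasi-concave in $p_i$ with a unique interior maximizer, namely the solution of $\partial r_i/\partial p_i = 0$. This already delivers both the uniqueness of each best response and the ``if and only if'' statement, since any equilibrium must place each coordinate at its unique best response (hence satisfy every first-order condition), and conversely any profile solving all first-order conditions has each coordinate at its global maximizer and is therefore an equilibrium.

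To obtain existence and uniqueness of the equilibrium itself, I would reduce the coupled system to one scalar equation in the aggregate variable $\lambda = 1 + \sum_{j \in S} a_j$ (the reciprocal of the outside-option share). At any solution of the first-order conditions the markup satisfies $p_i = 1/(1 - q_i)$; combining this with $a_i = e^{\theta_i - p_i}$ and $a_i = q_i \lambda$ eliminates $p_i$ and $a_i$ and yields, for each seller, $\psi(q_i) = e^{\theta_i}/\lambda$ with $\psi(q) \triangleq q\, e^{1/(1-q)}$. Since $\psi$ is a strictly increasing bijection from $[0,1)$ onto $[0,\infty)$, for each fixed $\lambda$ every $q_i$ is uniquely determined as a function $q_i(\lambda)$ that is strictly decreasing in $\lambda$. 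The defining relation for $\lambda$ is equivalent to the market-clearing identity $\sum_{j \in S} q_j = 1 - 1/\lambda$, so the entire system collapses to $G(\lambda) \triangleq \sum_{j \in S} q_j(\lambda) - (1 - 1/\lambda) = 0$. As $G$ is strictly decreasing, strictly positive for $\lambda$ close to $1$ (where $\sum_j q_j(\lambda) > 0$ while $1 - 1/\lambda \to 0$) and strictly negative as $\lambda \to \infty$, the intermediate value theorem gives a unique root $\lambda^*$, which recovers the unique equilibrium through $q_i(\lambda^*)$ and $p_i = 1/(1 - q_i)$.

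I expect the uniqueness claim, rather than existence, to be the main obstacle: existence alone could be obtained by bounding the best responses and invoking Brouwer's theorem on a compact convex set, but excluding multiple equilibria requires the monotone scalar reduction above. The technical crux is thus choosing the right aggregate variable $\lambda$ and verifying the monotonicity facts it rests on---that $\psi(q) = q\, e^{1/(1-q)}$ is strictly increasing, so each $q_i(\lambda)$ is well defined and decreasing, and that $G$ is therefore strictly monotone---after which existence and uniqueness both follow from a single one-dimensional intermediate-value argument.
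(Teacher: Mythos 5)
Your proposal is correct, but the comparison here is unusual: the paper does not prove this theorem at all---it imports existence, uniqueness, and the first-order-condition characterization wholesale from its cited reference---so your argument is necessarily a different (and more self-contained) route. Your first paragraph supplies the ingredient the paper never establishes: the explicit computation $\partial r_i/\partial p_i = \frac{a_i}{\lambda}\bigl[1 - p_i(1-q_i)\bigr]$, combined with the observation that $p_i(1-q_i)$ increases strictly from $0$ to $+\infty$, gives single-peakedness of $r_i$ in $p_i$ and hence the equivalence between Nash equilibria and solutions of the first-order system (the paper only remarks, later and without proof, that $r_i(p,\bm{p}_{-i})$ is strictly quasi-concave in $p$). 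Your second paragraph, by contrast, essentially reconstructs the derivation the paper performs immediately afterwards for its closed-form equilibrium: your markup identity $p_i = 1/(1-q_i)$ is the paper's equation (\ref{eqn:derivation}), your map $\psi(q) = q\,e^{1/(1-q)}$ is just $e$ times the paper's $V$-defining relation (\ref{eqn:v_func}) (so that $q_i(\lambda) = V(\bar{q}_0 e^{\theta_i-1})$ under the substitution $\lambda = 1/\bar{q}_0$), and your scalar equation $G(\lambda)=0$ is exactly the paper's equilibrium constraint (\ref{eqn:non_purchase}), whose unique solvability the paper likewise deduces from strict monotonicity of $V$. All the individual steps check out: the derivative formula is right, the monotonicity claims hold, the inverse map $\psi^{-1}$ is well defined on $[0,\infty)$, and the reconstruction of prices from $\lambda^*$ is consistent with the demand system, so existence and uniqueness both follow from the one-dimensional intermediate-value argument. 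What your version buys is that the theorem no longer rests on an external citation; the same monotone scalar reduction the paper uses only to \emph{compute} the equilibrium also \emph{certifies} that it exists and is unique.
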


We next calculate a closed-form expression for the Nash equilibrium prices $\bm{\bar{p}}$. For each seller $i\in {S}$, by the first-order condition $\partial r_i(\bm{\bar{p}})/ \partial p_i =0$, we have the following relation for $\bar{p}_i$:
\begin{equation}\label{eqn:derivation}
\bar{p}_i = \frac{1+ \sum_{j\in S} \bar{a}_j}{1+ \sum_{j\in S} \bar{a}_j -\bar{a}_i} =\frac{1}{1-\bar{q}_i},
\end{equation}
where $\bar{a}_i  \triangleq \exp(\theta_i-\bar{p}_i)$ and $\bar{q}_i$ is the demand of product $i$ at the equilibrium, \ie, $\bar{q}_i \triangleq {\bar{a}_i}/(1+ \sum_{j \in {S}} \bar{a}_j)$.
From the price function in (\ref{eqn:price}) and with some calculations applied to (\ref{eqn:derivation}), we  have the following equations
\begin{equation}\label{eqn:price_competition}
 \bar{q}_0 \times \exp(\theta_i-1) = \bar{q}_i \times \exp(\frac{\bar{q}_i}{1-\bar{q}_i}), \quad \forall i\in {S},
\end{equation}
where $\bar{q}_0 \triangleq 1-\sum_{j \in {S}} \bar{q}_j$ is the probability of the buyer that purchases nothing. We introduce a function $V(x): (0, +\infty) \rightarrow (0,1)$, such that for any $x\in (0, \infty)$, $V(x)$ is the solution $v\in (0,1)$ satisfying
\begin{equation}\label{eqn:v_func}
v\times \exp(\frac{v}{1-v}) = x.
\end{equation}
We can verify that $V(x)$ is a strictly increasing and concave function over $[0,+\infty)$.
This function is similar to the Lambert function $W(x)$~\cite{Corless1996}, which is the solution $w$ satisfying $w\times exp(w) = x$.
With the function $V(x)$ and (\ref{eqn:price_competition}), we can obtain a closed-form expression for the demand $\bar{q}_i = V(\bar{q}_0\times \exp(\theta_i-1)).$
Combing with the definition of $\bar{q}_0$, we can determine $\bar{q}_0$ by solving the following single-variable equation
\begin{equation}\label{eqn:non_purchase}
\sum_{i \in {S}} V(\bar{q}_0 \times \exp(\theta_i-1)) = 1- \bar{q}_0.
\end{equation}
This equation has a unique solution because $V(x)$ is a strictly increasing function. We also refer this equation as the equilibrium constraint.
The next theorem presents a closed-form expression for the Nash equilibrium solution in the Bertrand competition  game.
\begin{theorem}
In the Bertrand game $G^b=\left({S}, (\mathcal{P}_i)_{i\in {S}}, (r_i)_{i\in {S}}\right)$, the Nash equilibrium price $\bar{p}_i$ and the demand $\bar{q}_i$ for each product $i\in {S}$ are given by
$$
\bar{p}_i  = \frac{1}{1-V(\bar{q}_0\times \exp(\theta_i-1))} \quad \text{and} \quad \bar{q}_i = V(\bar{q}_0\times \exp(\theta_i-1)),
$$
where $\bar{q}_0$ is the unique solution to (\ref{eqn:non_purchase}).
\end{theorem}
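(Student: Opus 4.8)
The plan is to treat the statement as the packaging of the derivation that precedes it, and to make the logical chain tight in both directions. By the first theorem, the Bertrand game has a unique pure Nash equilibrium, and a price vector is that equilibrium \emph{if and only if} it satisfies the first-order conditions $\partial r_i(\bm{\bar{p}})/\partial p_i = 0$ for all $i \in S$. So it suffices to solve this system in closed form; by uniqueness, any solution we produce must be \emph{the} equilibrium. The first-order conditions have already been reduced in (\ref{eqn:derivation}) to $\bar{p}_i = 1/(1-\bar{q}_i)$, so the entire task collapses to determining the equilibrium demands $\bar{q}_i$, after which the prices follow immediately.

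The crux is to eliminate the prices and decouple the system into one scalar unknown. First I would substitute the price-to-demand inversion (\ref{eqn:price}), evaluated at equilibrium, i.e. $\bar{p}_i = \theta_i + \log \bar{q}_0 - \log \bar{q}_i$ with $\bar{q}_0 = 1 - \sum_{j\in S}\bar{q}_j$, into $\bar{p}_i = 1/(1-\bar{q}_i)$. Using the identity $1/(1-\bar{q}_i) = 1 + \bar{q}_i/(1-\bar{q}_i)$, rearranging, and exponentiating, the terms $\theta_i$, $\log\bar{q}_0$, $\log\bar{q}_i$ and the constant $1$ collapse into exactly (\ref{eqn:price_competition}), namely $\bar{q}_0\exp(\theta_i-1) = \bar{q}_i\exp(\bar{q}_i/(1-\bar{q}_i))$. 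The point of this manipulation is that each $\bar{q}_i$ is now pinned down implicitly by the \emph{same} univariate relation, parameterized only by its own $\theta_i$ and by the common scalar $\bar{q}_0$; all cross-coupling between sellers has been funneled into $\bar{q}_0$.

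Next I would invoke the auxiliary function $V$ of (\ref{eqn:v_func}). The one genuinely new piece of analysis is confirming that $V$ is well defined, i.e. that $v \mapsto v\exp(v/(1-v))$ is a strictly increasing bijection of $(0,1)$ onto $(0,\infty)$; I would verify this by computing its logarithmic derivative $1/v + 1/(1-v)^2 > 0$ on $(0,1)$ and checking the limits $0$ and $+\infty$ at the two endpoints (strict monotonicity of $V$ is anyway asserted in the text and may simply be cited). Granting this, (\ref{eqn:price_competition}) reads $\bar{q}_i = V(\bar{q}_0\exp(\theta_i-1))$, and since $V$ takes values in $(0,1)$, each $\bar{q}_i$ is automatically a legitimate probability. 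Imposing self-consistency through $\bar{q}_0 = 1 - \sum_{j\in S}\bar{q}_j$ then yields the single-variable equilibrium constraint (\ref{eqn:non_purchase}).

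Finally I would settle existence and uniqueness of $\bar{q}_0$. Since $V$ is strictly increasing and its argument $\bar{q}_0\exp(\theta_i-1)$ increases with $\bar{q}_0$, the left-hand side of (\ref{eqn:non_purchase}) is strictly increasing in $\bar{q}_0$ while the right-hand side $1-\bar{q}_0$ is strictly decreasing; comparing the two sides as $\bar{q}_0\to 0^+$ (left side $\to 0 <$ right side $\to 1$) and as $\bar{q}_0\to 1^-$ (left side strictly positive $>$ right side $\to 0$) gives a unique crossing by the intermediate value theorem, and guarantees $\bar{q}_0\in(0,1)$ so that $\sum_{j\in S}\bar{q}_j = 1-\bar{q}_0 < 1$ is a valid demand vector. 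This $\bar{q}_0$ determines every $\bar{q}_i = V(\bar{q}_0\exp(\theta_i-1))$ and hence every $\bar{p}_i = 1/(1-\bar{q}_i) = 1/(1 - V(\bar{q}_0\exp(\theta_i-1)))$, which is the claimed closed form. I do not expect a serious obstacle here: the heavy lifting is performed by the first theorem and by (\ref{eqn:derivation})–(\ref{eqn:price_competition}), so the proof is mainly an assembly that closes the equivalence loop; the only content beyond bookkeeping is the bijectivity of the map underlying $V$ (so that the formula is meaningful) and the monotonicity-plus-IVT argument that makes $\bar{q}_0$ unique.
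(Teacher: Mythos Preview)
Your proposal is correct and follows essentially the same approach as the paper: the theorem is packaged from the derivation (\ref{eqn:derivation})--(\ref{eqn:non_purchase}) that precedes it, reducing the first-order conditions to $\bar p_i = 1/(1-\bar q_i)$, using the price--demand inversion to obtain (\ref{eqn:price_competition}), inverting via $V$, and closing with the scalar constraint (\ref{eqn:non_purchase}). You are in fact more careful than the paper in verifying that $V$ is a well-defined bijection and in arguing existence of $\bar q_0$ via the intermediate value theorem rather than just uniqueness from monotonicity.
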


Substituting the equilibrium solutions into (\ref{eqn:sw_price}), we obtain the equilibrium social welfare in the Bertrand game with the sellers $S\subseteq \mathbb{S}$
\begin{equation}\label{nsw_bertrand}
\overline{sw}(S)= -\log\left(\bar{q}_0 \right) + \sum_{i \in {S}} \frac{\bar{q}_i }{1-\bar{q}_i}.
\end{equation}
By (\ref{eqn:revenue_price}), we can similarly get the equilibrium revenue in the Bertrand game with the set of sellers $S\subseteq \mathbb{S}$
\begin{equation}
\overline{re}(S) =  \sum_{i \in {S}} \frac{\bar{q}_i }{1-\bar{q}_i}.
\end{equation}

Instead of directly deriving the equilibrium strategies in one single step, in practice, the sellers may employ some simple,  natural and myopic learning algorithms, such as best response~\cite{fudenberg1998theory}, fictitious play~\cite{1310468} or no-regret learning algorithm~\cite{FREUND199979}, to interact with each other and eventually reach the equilibrium.
One straightforward procedure for sellers in online platform markets to reach the Nash equilibrium is best response dynamics.
Specifically, suppose that the current vector of price $\textbf{p}$ is not a Nash equilibrium,  and a seller $i \in S$ deviates by setting a new $p^*_i$, which is the optimal price with respective to the other prices $\textbf{p}_{-i}$, \ie,
$$
p^*_i = B(\textbf{p}_{-i}) \triangleq \argmax_{p\in [0,+\infty)}  r_i(p, \textbf{p}_{-i}).
$$
We can verify that the revenue function $r_i(p, \textbf{p}_{-i})$ is strictly quasi-concave in $p$, and thus it is not easy to explicitly solve the above optimization problem.
One key observation is that the revenue function is strictly concave in the domain of the demand,
which enables us to obtain closed-form expressions for the best response strategies, as shown in the following lemma.
\begin{lemma}\label{lem:best_response}
The best response price $p^*_i$ with respective to a fixed price vector $\textbf{p}_{-i}$ can be calculated as
$$
p^*_i  = \theta_i  - log\left((1+\sum_{j\in S \backslash \{i\}} a_j) \times  W(\frac{exp(\theta_i -1) }{1+\sum_{j\in S \backslash \{i\}} a_j})\right),
$$
where $W(x)$ is the Lambert function and $a_j = exp(\theta_j-p_j)$ for all $j\in S$.
\end{lemma}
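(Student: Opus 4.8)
The plan is to exploit exactly the observation flagged just before the statement: although $r_i(p,\textbf{p}_{-i})$ is only quasi-concave in the price $p$, it becomes strictly concave once re-expressed in the demand variable $q_i$, so the best response is the unique stationary point and can be pinned down by a single first-order condition which, despite being transcendental, is secretly a Lambert-$W$ equation. Since $\textbf{p}_{-i}$ is held fixed, every $a_j$ with $j\neq i$ is a constant, so I would abbreviate $A \triangleq 1+\sum_{j\in S\setminus\{i\}} a_j$, giving $q_i = a_i/(A+a_i)$ with $a_i=\exp(\theta_i-p_i)$ and $r_i = p_i\, a_i/(A+a_i)$.

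First I would perform the change of variables $p_i \mapsto q_i$. Inverting $q_i=a_i/(A+a_i)$ yields $a_i = A q_i/(1-q_i)$, hence $p_i = \theta_i-\log A-\log\frac{q_i}{1-q_i}$ and $r_i(q_i) = q_i\bigl(\theta_i-\log A\bigr) - q_i\log\frac{q_i}{1-q_i}$ on $q_i\in(0,1)$. A direct second-derivative computation gives $r_i''(q_i) = -\frac1{q_i}-\frac1{1-q_i}-\frac1{(1-q_i)^2}<0$, which is the promised strict concavity, so there is a unique interior maximizer characterized by $r_i'(q_i)=0$. Setting the derivative to zero collapses neatly to $p_i^* = 1/(1-q_i)$, equivalently $p_i^* = 1 + a_i/A$ in price coordinates, i.e. the same stationarity relation as (\ref{eqn:derivation}). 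Because the closed form I am aiming for will turn out to satisfy $p_i^* > 1 > 0$, the non-negativity constraint $p_i\geq 0$ is inactive, so this interior stationary point is genuinely the global maximizer over $[0,+\infty)$; I would note in passing that $r_i\to 0$ as $p_i\to 0^+$ and as $p_i\to+\infty$ to confirm no corner beats it.

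The remaining, and in my view the only genuinely nontrivial, step is to turn the implicit relation $p_i^* = 1 + \exp(\theta_i-p_i^*)/A$ into the stated closed form. Here I would substitute $w \triangleq p_i^* - 1$, which transforms the equation into $w\exp(w) = \exp(\theta_i-1)/A$, exactly the defining form of the Lambert function, so that $w = W\!\bigl(\exp(\theta_i-1)/A\bigr)$ and $p_i^* = 1 + W\!\bigl(\exp(\theta_i-1)/A\bigr)$. Finally, applying the identity $\log W(x)+W(x)=\log x$ (obtained by taking logarithms of $W(x)e^{W(x)}=x$) with $x=\exp(\theta_i-1)/A$ eliminates the explicit $+1$ and rewrites $p_i^* = \theta_i - \log\!\bigl(A\,W(\exp(\theta_i-1)/A)\bigr)$, which is precisely the claimed expression once $A$ is expanded. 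The main obstacle is recognizing that the first-order condition, though transcendental, is algebraically a Lambert-$W$ equation after the single substitution $w=p_i^*-1$; the concavity argument justifying the use of the first-order condition is routine given the hint, and the boundary/feasibility check only serves to rule out the degenerate corner at $p_i=0$.
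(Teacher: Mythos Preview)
Your proposal is correct and follows essentially the same route as the paper: both pass to the demand variable, invoke strict concavity to justify the first-order condition, and recognize the resulting transcendental equation as a Lambert-$W$ relation. The only cosmetic difference is that the paper sets $\frac{q_i^*}{1-q_i^*}=W(\cdot)$ directly and then recovers $a_i^*$ to get the formula, whereas you substitute $w=p_i^*-1$ (the same quantity at the optimum) and finish with the identity $\log W(x)+W(x)=\log x$; both arrive at the stated expression and both confirm $p_i^*\geq 1$ so the boundary is inactive.
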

The proof of Lemma~\ref{lem:best_response} is in Appendix~\ref{app:sec:1}.  We further have the following result for such best response dynamics in the Bertrand  game.
\begin{lemma}\label{lemma:potential_game}
From an arbitrary price vector $\textbf{p}$,  the best response dynamics will converge to the Nash equilibrium of the Bertrand game in a finite number of steps.
\end{lemma}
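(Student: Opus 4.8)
The plan is to show that $G^b$ admits a potential function, so that a single scalar quantity increases along every best response update, and then to upgrade this into convergence using the uniqueness of the equilibrium. Concretely, I would introduce
\[
\Phi(\bm{p}) \triangleq \sum_{i\in S}\bigl(\log p_i + \theta_i - p_i\bigr) - \log\Bigl(1+\sum_{j\in S} a_j\Bigr),\qquad a_j=\exp(\theta_j-p_j),
\]
and verify the identity $\partial\Phi/\partial p_i = 1/p_i - 1 + q_i(\bm{p}) = \partial(\log r_i)/\partial p_i$ for every $i\in S$. Because $\log(\cdot)$ is strictly increasing, player $i$'s best response---the maximizer of $r_i(\cdot,\bm{p}_{-i})$, unique by the strict quasi-concavity recorded just before Lemma~\ref{lem:best_response}---is precisely the maximizer of $\Phi(\cdot,\bm{p}_{-i})$. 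Hence $\Phi$ is an (ordinal) potential for the game, and every best response step weakly increases $\Phi$, strictly unless the deviating seller was already playing a best response.

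Two auxiliary facts complete the setup. First, $\Phi(\cdot,\bm{p}_{-i})$ is strictly concave in $p_i$, since $\partial^2\Phi/\partial p_i^2 = -1/p_i^2 - q_i(1-q_i) < 0$; thus its unique stationary point is the best response, and each non-equilibrium update yields a strict potential gain. Second, $\Phi$ is bounded above: $\log p_i - p_i \le -1$ and $-\log(1+\sum_j a_j)\le 0$, so $\Phi(\bm{p}) \le \sum_{i\in S}(\theta_i-1)$. Finally, because prices are nonnegative we have $a_j \le \exp(\theta_j)$, so the aggregate $1+\sum_{j\ne i}a_j$ faced by any seller is bounded; feeding this into the closed form of Lemma~\ref{lem:best_response} shows that every best response price lies in a fixed compact interval bounded away from $0$ and $+\infty$ (one checks $1 < p^*_i \le 1 + W(\exp(\theta_i-1))$). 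Consequently the iterates remain in a compact box after the first update.

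I would then assemble convergence as follows. Along the best response dynamics $\{\bm{p}^{(t)}\}$, the scalars $\Phi(\bm{p}^{(t)})$ form a nondecreasing sequence bounded above, hence convergent; in particular the dynamics can neither cycle nor escape to infinity. As the iterates lie in a compact set, the sequence has limit points, and at any limit point $\bm{p}^\ast$ continuity of the best response forbids a strict improvement (which would push $\Phi$ past its limit), so $\bm{p}^\ast$ is a Nash equilibrium. By the uniqueness of the Nash equilibrium established at the start of this section, every limit point equals $\bm{\bar{p}}$, and therefore the entire sequence converges to $\bm{\bar{p}}$.

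I expect the main difficulty to be locating the potential itself: the raw payoffs $r_i$ do not form an exact or constant-weighted potential game---one verifies that the Monderer--Shapley cross-partial symmetry $\partial^2 r_i/\partial p_i\partial p_j = \partial^2 r_j/\partial p_j\partial p_i$ fails---and it is the logarithmic reparametrization that produces the exact potential $\Phi$ for $\log r_i$ while leaving best responses unchanged. A secondary subtlety is the phrase ``finite number of steps'': the potential-plus-compactness argument gives asymptotic convergence to $\bm{\bar{p}}$, so I would read the claim as reaching any prescribed neighborhood of $\bm{\bar{p}}$ (an $\epsilon$-equilibrium) after finitely many updates, which is immediate once $\Phi(\bm{p}^{(t)})$ is known to converge and the best response map is continuous.
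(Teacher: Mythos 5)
Your proposal is correct and rests on the same central idea as the paper's own proof: an ordinal potential argument. In fact your $\Phi$ is exactly the logarithm of the paper's potential $G(\bm{p}) = \bigl(\prod_{i\in S} p_i a_i\bigr)/\bigl(1+\sum_{j\in S} a_j\bigr)$, so the two order unilateral deviations identically; the paper gets the ordinal property by noting $G$ is a positive, $p_i$-independent multiple of $r_i$, while you get it by matching $\partial\Phi/\partial p_i$ with $\partial(\log r_i)/\partial p_i$. Where you genuinely depart from the paper is the convergence step, and your version is the more rigorous one. The paper concludes with ``the potential is bounded and strictly increases, hence no cycles, hence the dynamics reach the maximum of the potential in finite steps'' --- a pattern that is valid for finite strategy spaces but does not go through on a continuum of prices, where a bounded, strictly increasing sequence of potential values need not attain its supremum in finitely many steps (or at all). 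Your completion --- a uniform compact box for the iterates extracted from the closed form of Lemma~\ref{lem:best_response}, monotone convergence of $\Phi(\bm{p}^{(t)})$, the observation that every limit point must be a Nash equilibrium, and uniqueness of the equilibrium forcing convergence of the whole sequence --- supplies exactly what the paper's argument is missing, and your reading of ``finite number of steps'' as reaching any prescribed $\epsilon$-neighborhood (an $\epsilon$-equilibrium) in finitely many updates is the honest interpretation of what either argument actually delivers; the literal finite-step claim is established by neither. Two minor points of care: the compactness claim should be stated as holding after every seller has updated at least once (not after the first single update), and, as in the paper, the limit-point argument implicitly needs every seller to update infinitely often (e.g., round-robin scheduling).
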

The basic idea to derive this result is to show the Bertrand game is an ordinal potential game~\cite{MONDERER1996124} with a finite value; the detailed proof of Lemma~\ref{lemma:potential_game} is in Appendix~\ref{app:sec:2}.

\section{Optimal Segmenting Mechanisms}
In online marketplaces, the platform has control over search segmentation mechanisms - which set of products to display for a buyer. The platform can display any set of products, and the competition among selected sellers then takes place endogenously through the Bertrand game in Section~\ref{sec:oligopolistic}.
The goal of the platform is to decide the optimal products $S^* \subseteq \mathbb{S}$ to display, in order to maximize the equilibrium social welfare/revenue.
For $n$ potential products in the market, there are $2^n-1$ possible sets of products,
thus an exhaustive search to determine the optimal set of displayed products is infeasible.
We also note that the equilibrium constraint~(\ref{eqn:non_purchase}) imposed by the Bertrand competition game is highly nonlinear, which presents another challenge in deriving the optimal search segmentation mechanism.
 In this section, we exploit the structure of  social welfare/revenue functions to efficiently derive the optimal search segmentation mechanisms.

\subsection{Social Welfare Maximization}
In the following theorem, we show the online platform would display all products to maximize social welfare.
\begin{theorem}\label{theo:bertrand_sw}
For social welfare maximization, the optimal search segmentation mechanism is to display all products $\mathbb{S}$ in the platform.
\end{theorem}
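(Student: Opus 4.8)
The plan is to prove the stronger local statement that \emph{displaying one more product never decreases the equilibrium social welfare}, and then obtain the theorem by induction: for any $S \subsetneq \mathbb{S}$ one reaches $\mathbb{S}$ by adding the missing products one at a time, so if each addition strictly increases $\overline{sw}$, then $\overline{sw}(\mathbb{S}) \ge \overline{sw}(S)$ for every $S$ and $\mathbb{S}$ is optimal. So fix $S$ and a product $k \notin S$, write $S' = S \cup \{k\}$, and let $\bar{q}_0^{S}$ and $\bar{q}_0^{S'}$ denote the no-purchase probabilities pinned down by the equilibrium constraint~(\ref{eqn:non_purchase}).

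First I would establish monotonicity of the no-purchase probability, $\bar{q}_0^{S'} < \bar{q}_0^{S}$; this is the ``independent variable'' of the argument. It follows cleanly from~(\ref{eqn:non_purchase}): setting $F_S(q) = q + \sum_{i\in S} V(q\,e^{\theta_i-1}) - 1$, the strict monotonicity of $V$ makes $F_S$ strictly increasing, so its unique root $\bar{q}_0^{S}$ is well defined; and since $F_{S'}(q) = F_S(q) + V(q\,e^{\theta_k-1}) > F_S(q)$ for all $q>0$, the root of $F_{S'}$ lies strictly to the left of that of $F_S$. Equivalently, treating $k$ as a product of variable quality $\theta_k$ and raising it from $-\infty$ (where $V(\bar{q}_0 e^{\theta_k-1})\to 0$ and $k$ is effectively absent) up to its true value traces a path along which $\bar{q}_0$ decreases monotonically.

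The crux is to show that the equilibrium social welfare, viewed along this addition path, is a strictly decreasing function of $\bar{q}_0$. The enabling observation is that along the path every equilibrium quantity is determined by $\bar{q}_0$ alone: for $i\in S$ we have $\bar{q}_i = V(\bar{q}_0 e^{\theta_i-1})$, while $\bar{q}_k = 1-\bar{q}_0 - \sum_{i\in S}\bar{q}_i$ by the constraint, so~(\ref{nsw_bertrand}) is a genuine single-variable function of $\bar{q}_0$. Using $\tfrac{q}{1-q} = q + \tfrac{q^2}{1-q}$ together with $\sum_{j\in S'}\bar{q}_j = 1-\bar{q}_0$ rewrites~(\ref{nsw_bertrand}) as $\overline{sw} = -\log \bar{q}_0 + (1-\bar{q}_0) + \sum_{j\in S'} \tfrac{\bar{q}_j^2}{1-\bar{q}_j}$, whose first two terms are manifestly decreasing in $\bar{q}_0$. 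I would then differentiate in $\bar{q}_0$, substituting $\tfrac{d\bar{q}_i}{d\bar{q}_0}$ obtained by implicitly differentiating the defining relation $\tfrac{\bar{q}_i}{1-\bar{q}_i} = \log \bar{q}_0 + \theta_i - 1 - \log \bar{q}_i$ of $V$ for each $i\in S$, and $\tfrac{d\bar{q}_k}{d\bar{q}_0} = -1-\sum_{i\in S}\tfrac{d\bar{q}_i}{d\bar{q}_0}$ for the added product, and verify that the total derivative is negative. I expect this sign verification to be the main obstacle: economically, the fall in $\bar{q}_0$ raises the buyer surplus $-\log\bar{q}_0$ and introduces the new seller's revenue, but it simultaneously lowers every incumbent's demand $\bar{q}_i$ and hence revenue, so the real content is proving that the surplus gain plus the new revenue strictly dominates the incumbents' aggregate revenue loss at every $\bar{q}_0$ in the relevant range.

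Combining the two pieces by the chain rule, $\tfrac{d\,\overline{sw}}{d\theta_k} = \tfrac{d\,\overline{sw}}{d\bar{q}_0}\cdot \tfrac{d\bar{q}_0}{d\theta_k} > 0$ since both factors are negative, whence $\overline{sw}(S') > \overline{sw}(S)$. Inducting over the products missing from $S$ then yields $\overline{sw}(\mathbb{S}) \ge \overline{sw}(S)$ for every $S$, establishing that displaying all products maximizes equilibrium social welfare.
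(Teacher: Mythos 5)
Your setup coincides with the paper's own proof of this theorem: you reduce to the incremental claim that adding one product raises equilibrium welfare, use the equilibrium constraint~(\ref{eqn:non_purchase}) and the strict monotonicity of $V$ to conclude that the addition strictly lowers $\bar{q}_0$, substitute the entrant's demand out through the constraint so that~(\ref{nsw_bertrand}) becomes a single-variable function of $\bar{q}_0$ (this is exactly the paper's expression~(\ref{eqn:sw_mechanism})), and then reduce everything to showing that this function is decreasing. All of that is correct and is precisely the paper's argument.

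The gap is that you never prove the decreasing property, and that is where essentially all of the mathematical content of the theorem lies. You yourself observe why it is not routine: in your decomposition $\overline{sw} = -\log \bar{q}_0 + (1-\bar{q}_0) + \sum_{j\in S'} \bar{q}_j^2/(1-\bar{q}_j)$, the incumbents' terms are \emph{increasing} in $\bar{q}_0$ while the entrant's term is decreasing, so the sign of $\overline{sw}'(\bar{q}_0)$ cannot be read off termwise, and "verify that the total derivative is negative" is exactly the statement of the paper's Lemma~\ref{lemma:sw_decreasing}. Its proof requires a genuine idea. After substituting $V'(x) = 1/\bigl(x\,(1/V(x) + 1/(1-V(x))^2)\bigr)$ and using $\bar{q}_0 + \sum_{i\in S}\bar{q}_i \leq 1$, the paper bounds
\begin{equation*}
\overline{sw}'(\bar{q}_0) < \frac{1}{\bar{q}_0}\Bigl( -1 + \sum_{i\in S} f(\bar{q}_i) \Bigr),
\qquad
f(q) \triangleq \frac{1-(1-q)^2}{\frac{(1-q)^2}{q}+1},
\end{equation*}
so the problem becomes showing $\sum_{i\in S} f(\bar{q}_i) \leq 1$ subject to $\sum_{i\in S}\bar{q}_i \leq 1$. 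Because $f$ is sigmoidal (convex on $[0,0.5]$, concave on $[0.5,1]$), this needs a case analysis exploiting the identity $f(q)+f(1-q)=1$, the bound $f(q)\leq q$ on $[0,0.5]$, and, when some demand exceeds $0.5$, a secant-line bound $f(q) \leq \frac{f(1-\bar{q}_1)}{1-\bar{q}_1}\, q$ applied to the remaining demands. Nothing in your proposal supplies this argument or a substitute for it; as written, the proposal is a correct outline whose central step --- the one you flag as "the main obstacle" --- is missing, so it does not yet constitute a proof.
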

\begin{proof}
We prove this theorem by showing that adding a new product will always improve the equilibrium social welfare. Suppose the platform has already selected sellers $S \subset \mathbb{S}$, and consider introducing a new product $j \in \mathbb{S} \backslash S$.
According to (\ref{nsw_bertrand}), we can express the equilibrium social welfare $\overline{sw}$ as
\begin{equation}\label{eqn:social_welfare1}
\overline{sw} = -\log{\bar{q}_0} +\sum_{i\in S} \frac{\bar{q}_i}{1-\bar{q}_i} + \frac{x_j\times \bar{q}_j}{1-x_j \times \bar{q}_j}.
\end{equation}
Here, $\bar{q}_i = V(\bar{q}_0 \times \exp(\theta_i -1))$ and $x_j$ is an indicator for product $j\in \mathbb{S}\backslash S$, where $x_j=1$ denotes product $j$ is selected for display; otherwise $x_j=0$. It is difficult to directly compare $\overline{sw}$ with $x_j=1$ and the one with $x_j=0$.
From~(\ref{eqn:non_purchase}), the demands ${\bm q}$ satisfy the following equilibrium constraint:
\begin{equation}\label{eqn:market_shares}
1-\bar{q}_0 =\sum_{i\in {S}} \bar{q}_i + x_j \times  \bar{q}_j = \sum_{i\in {S}} V(\bar{q}_0  \exp(\theta_i -1)) +x_j \times  V(\bar{q}_0  \exp(\theta_j -1)).
\end{equation}
Since $V(x)$ is an increasing function, we can observe from the above equation that $\bar{q}_0$ decreases when $x_j$ changes from $0$ to $1$.
Furthermore,  with (\ref{eqn:social_welfare1}) and (\ref{eqn:market_shares}), we can express the equilibrium social welfare as a function of $\bar{q}_0$:
\begin{equation}\label{eqn:sw_mechanism}
\overline{sw}(\bar{q}_0) = -\log{\bar{q}_0} +\sum_{i\in S} \frac{\bar{q}_i}{1-\bar{q}_i} + \frac{1-\bar{q}_0 - \sum_{i\in S} \bar{q}_i }{ \bar{q}_0+\sum_{i\in S} \bar{q}_i}.
\end{equation}
Thus, we only need to prove that $\overline{sw}(\bar{q}_0)$ is a decreasing function.
The  basic idea is to explicitly calculate the first derivative of $\overline{sw}(\bar{q}_0)$, and show $\overline{sw}'(\bar{q}_0)<0$.
We present the detailed proof of the following lemma in Appendix~\ref{app:sec:3}.
\begin{lemma}\label{lemma:sw_decreasing}
The social welfare $\overline{sw}(\bar{q}_0)$ is a decreasing function.
\end{lemma}
From this lemma and the above discussion,  we can always improve the equilibrium social welfare by adding a new product, which completes the proof.
\end{proof}


\subsection{Revenue Maximization}
The optimal search segmentation mechanism with the objective of revenue maximization is different from the optimal mechanism when the platform attempts to maximize social welfare. To illustrate this difference, we consider two cases:  a low quality case, \eg, $\theta_1 =\theta_2 = \cdots =\theta_n = 0.5$, and a high quality case,  \eg, $\theta_1=\theta_2 = \cdots =\theta_n = 10$.  From the result in Theorem~\ref{theo:bertrand_sw}, the optimal mechanisms  for social welfare maximization in  these two cases are to display all products. However, for  revenue maximization, it can be verified that the platform still displays all products in the low quality case, but only selects the first product in the high quality case.  We next show the design rationale for the optimal search segmentation mechanisms for the revenue maximization.

One critical decision the platform has to make is the following: given a set of products $S\subset \mathbb{S}$, whether to 
display the currently selected product set $S$, or add a new product $j$ from $\mathbb{S} \backslash S$.
We refer to such a decision problem as the ``incremental'' problem.
Similar to the discussion on social welfare maximization, given a set of selected products $S\subset \mathbb{S}$, we can represent the equilibrium revenue under these two decision options with the following function:
\begin{equation}\label{eqn:revenue_scenarios}
\overline{re} = \sum_{i\in S}\frac{\bar{q}_i}{1-\bar{q}_i}  + \frac{x_j \times \bar{q}_j}{1-x_j \times  \bar{q}_j}.
\end{equation}
We recall  that $x_j$ is an indicator for product $j\in \mathbb{S}\backslash S$, where $x_j=1$ indicates that product $j$ is selected for display; otherwise $x_j=0$. The  demands $\bar{q}_i$'s satisfy the following equilibrium constraint:
\begin{equation}\label{eqn:market_share_sum}
1-\bar{q}_0 = \sum_{i\in S} \bar{q}_i+x_j \times \bar{q}_j = \sum_{i\in S}  \bar{q}_i +x_j \times V(\bar{q}_0 \times  \exp(\theta_j-1)) .
\end{equation}
 Using  the above relation, we can further express the equilibrium revenue in   (\ref{eqn:revenue_scenarios}) as a function of $\bar{q}_0$:
 \begin{equation}\label{eqn:reve_equ}
 \overline{re}(\bar{q}_0) =\sum_{i\in S}\frac{\bar{q}_i}{1-\bar{q}_i} +  \frac{1}{\bar{q}_0+ \sum_{i\in S} \bar{q}_i } -1.
 \end{equation}
Since $V(x)$ is an increasing function, we have a critical observation from~(\ref{eqn:market_share_sum}): given a selected product set $S$, \emph{the quality $\theta_j$ of the potential product $j$ has a one-to-one and inverse relation with the demand  $\bar{q}_0$}, \ie, when $x_j=1$, involving the product with a higher quality $\theta_j$ leads to the lower value of $q_0$.
With this observation, we can derive the feasible range of the independent value $\bar{q}_0$.
On the one hand, when the platform selects the available product with the highest quality,  \ie, the product $j\in \mathbb{S} \backslash S$ with $\theta_j \geq  \theta_{j'}$ for all $j'\in \mathbb{S} \backslash S$, the demand $\bar{q}_0$ achieves its  lower bound at $\bar{q}_0^{min}$.
On the other hand, setting $x_j $ to $0$  represent the case that the platform does not select any new  product, and the corresponding  demand $\bar{q}_0^{max}$ in this case is the upper bound of $\bar{q}_0$.
Thus, we have  $\bar{q}_0 \in  \left[\bar{q}_0^{min}, \bar{q}_0^{max} \right]$ for all the  decision on selecting different possible product $j$.

\begin{figure}[!tbp]
  \centering
  \includegraphics[scale=0.8]{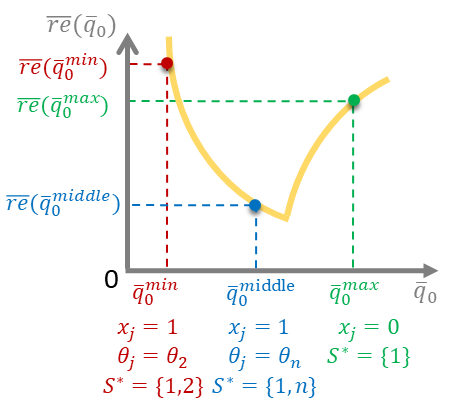}
  \caption{$\overline{re}(\bar{q}_0)$ is a quasi-convex revenue function for the possible product set to display when  the product $1$ has been selected.
  $\overline{re}(\bar{q}^{min}_0)$ is the revenue obtained by displaying $S^*= \{1,2\}$, $\overline{re}(\bar{q}^{middle}_0)$ is the revenue from showing $S^*= \{1,n\}$, and $\overline{re}(\bar{q}^{max}_0)$ is the revenue of displaying $S^*= \{1\}$.} \label{fig:quasi-convex}
\end{figure}

Based on the above discussion, we show that the optimal search segmentation mechanism is to choose the $k^*$ products with the best quality, for an appropriate value of $k^*$, using the following steps:
\begin{itemize}
\item First, we show that one should always display the product with the best quality to maximize revenue (Lemma~\ref{theo:s1}).
\item Then, we consider the decision of adding one product to  display. As discussed previously, we show that $re(q_0)$ is quasi-concave in $q_0$ which implies that the optimal decision is to add the next highest quality product or to not add a product at all. The quasi-convexity of $re(q_0)$ is shown in Lemma~\ref{lemma_quasi_convex_1} under a certain condition, which is relaxed in Appendix~\ref{app:sec:7}. Using the quasi-convexity of $re(q_0)$, in Lemma~\ref{theo:optimal_S=2}, we prove that if the optimal display set consists of $k^*$ products, then one should select the top $k^*$ products in terms of quality.
\item The final step is to find the optimal $k^*$. This can be done by the following calculation. For each possible value of $k^* \in \{2, \cdots, n\}$, we select the top $k^*$ products and find the revenue. We choose $k^*$ to maximize this revenue. This is clearly a linear-time algorithm in $n$, since one has to add one term to the expression for the revenue when we increase $k^*$ by one. This result is summarized in Theorem~\ref{theorem:revenue_bertrant_game}.
\end{itemize}

We first show that revenue maximization implies that the highest quality product is always selected for display.
 \begin{lemma}\label{theo:s1}
For revenue maximization,  it is optimal to always display the product with the highest quality.
 \end{lemma}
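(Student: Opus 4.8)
The plan is to prove Lemma~\ref{theo:s1} by an exchange argument: I start from an optimal display set and, if it omits the top-quality product, I swap that product in without lowering the equilibrium revenue. Concretely, suppose $S^*$ is revenue-optimal (an optimum exists since there are only finitely many display sets) and $1\notin S^*$. Let $j_0$ be the \emph{highest}-quality (smallest-index) member of $S^*$, so $\theta_{j_0}\le\theta_1$, and consider $S'=(S^*\setminus\{j_0\})\cup\{1\}$. Because products enter the equilibrium only through their qualities, passing from $S^*$ to $S'$ is exactly raising the quality of the top member from $\theta_{j_0}$ to $\theta_1$. Thus the lemma reduces to the monotonicity claim: \emph{raising the quality of the highest-quality product in a displayed set weakly increases the equilibrium revenue.}

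To prove this claim, write $S=T\cup\{t\}$ and $S'=T\cup\{t'\}$ with a common base $T$ and $\theta_{t'}\ge\theta_t\ge\max_{i\in T}\theta_i$, and let $\bar q_0,\{\bar q_i\}$ and $\bar q_0',\{\bar q_i'\}$ denote the two equilibria. I first collect the comparative statics implied by~(\ref{eqn:market_share_sum}) and the monotone inverse relation between the added quality and $\bar q_0$ noted there: since $\theta_{t'}\ge\theta_t$ we have $\bar q_0'\le\bar q_0$; since $V$ is increasing, $\bar q_i'\le\bar q_i$ for every $i\in T$, so the ``losses'' $\delta_i:=\bar q_i-\bar q_i'\ge0$; and from the equilibrium constraint $\bar q_{t'}'=1-\bar q_0'-\sum_{i\in T}\bar q_i'\ge\bar q_t$, so the ``gain'' $\Delta:=\bar q_{t'}'-\bar q_t\ge0$. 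Subtracting the two equilibrium constraints gives the key balance $\sum_{i\in T}\delta_i=\Delta+(\bar q_0'-\bar q_0)\le\Delta$.

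The final step compares the revenues $\overline{re}(S)=\sum_{i\in S}h(\bar q_i)$ using convexity of $h(x):=x/(1-x)$, whose derivative $h'(x)=1/(1-x)^2$ is positive and increasing. Convexity gives a lower bound on the gain, $h(\bar q_{t'}')-h(\bar q_t)\ge h'(\bar q_t)\,\Delta$, and an upper bound on each loss, $h(\bar q_i)-h(\bar q_i')\le h'(\bar q_i)\,\delta_i$. Hence $\overline{re}(S')-\overline{re}(S)\ge h'(\bar q_t)\,\Delta-\sum_{i\in T}h'(\bar q_i)\,\delta_i$. Because $t$ is the highest-quality product, $\bar q_t=V(\bar q_0 e^{\theta_t-1})\ge V(\bar q_0 e^{\theta_i-1})=\bar q_i$ and therefore $h'(\bar q_t)\ge h'(\bar q_i)$ for all $i\in T$; combining this with $\sum_{i\in T}\delta_i\le\Delta$ yields $\overline{re}(S')-\overline{re}(S)\ge h'(\bar q_t)\bigl(\Delta-\sum_{i\in T}\delta_i\bigr)\ge0$, which proves the claim and hence the lemma.

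The delicate point, and what I expect to be the main obstacle, is that raising one quality perturbs the \emph{entire} equilibrium through $\bar q_0$: it helps the improved product but simultaneously shrinks the demand of every other displayed product, so a naive term-by-term comparison is inconclusive. The argument succeeds only because two facts line up, namely the constraint-level identity forcing $\sum_{i\in T}\delta_i\le\Delta$ and the demand ordering $\bar q_t\ge\bar q_i$ that makes the improved product's marginal-revenue slope dominate. It is also essential that we swap product~$1$ in for the \emph{best} member of $S^*$ rather than the worst: replacing a low-quality member by a higher-quality product can in fact reduce revenue, since the additional Bertrand competition it induces can depress prices, so the choice of which element to exchange is not innocuous.
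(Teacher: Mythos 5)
Your proof is correct, and at the top level it follows the same strategy as the paper's own proof (Appendix~\ref{app:sec:4}): reduce the lemma to the claim that raising the quality of the \emph{highest}-quality displayed product weakly increases equilibrium revenue, by swapping product $1$ in for the best member of the display set. Where you genuinely differ is in how that monotonicity claim is established. The paper parametrizes the swap continuously through $\bar q_0$: writing the varying product's demand as $1-\bar q_0-\sum_{i}\bar q_i$ via the equilibrium constraint, it computes the derivative $\overline{re}'(\bar q_0)$ of the revenue in (\ref{eqn:reve_equ}) and shows it is negative using exactly the ordering $\bar q_{i^*}\geq \bar q_i$, so revenue is monotone along the path as the quality rises and $\bar q_0$ falls. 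You instead make a discrete two-point comparison of the equilibria: subtracting the two equilibrium constraints yields the mass balance $\sum_{i\in T}\delta_i\leq \Delta$, and the gradient inequalities for the convex function $h(x)=x/(1-x)$, anchored at the initial equilibrium demands, convert that balance into a revenue comparison once the slope ordering $h'(\bar q_t)\geq h'(\bar q_i)$ is invoked. The two arguments hinge on the same structural fact---the top product's demand is largest, so its marginal-revenue slope $1/(1-\bar q_t)^2$ dominates---but your finite-difference version avoids implicit differentiation of $V$ and any smoothness bookkeeping along the quality path, at the cost of a longer setup; the paper's calculation is terser because it reuses the $\overline{re}(\bar q_0)$ framework that it needs anyway for the quasi-convexity lemmas. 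Your closing caveat---that exchanging a \emph{non-top} member for a higher-quality product can decrease revenue---is accurate, and it is precisely why the paper must resort to quasi-convexity rather than monotonicity for the general case $k^*\geq 2$ in Lemma~\ref{theo:optimal_S=2}.
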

 The intuition behind the proof of this lemma  is to show that for any displayed product set, the revenue function in (\ref{eqn:reve_equ}) increases with the quality of the product with the highest quality in this set. The proof is in Appendix~\ref{app:sec:4}. 

Lemma~\ref{theo:s1} implies that when the optimal search segmentation mechanism is to display one product, \ie, $k^*=1$, the platform will choose the first product.
To obtain the result for the general case with $k^*\geq 2$, we need to establish the  quasi-convexity of the revenue function in~(\ref{eqn:reve_equ}).  It is non-trivial to directly verify this property
because the first term in the revenue function, \ie, $ \sum_{i\in S}\frac{\bar{q}_i}{1-\bar{q}_i}$, is increasing and concave  with respective to $q_0$, while the remaining term $ \frac{1}{\bar{q}_0+ \sum_{i\in S} \bar{q}_i } -1$ is decreasing and convex.
We first prove the desired quasi-convexity and design the optimal search segmentation mechanisms by assuming that all demands  $\bar{q}_i$'s are less than $0.5$, \ie, $q_1<0.5$, due to $q_i \leq q_1$ for all $i \in S$, meaning that no seller dominates the market.  This assumption simplifies the analysis, but still preserves the major intuition. Our results  also hold without this assumption, as shown in Appendix~\ref{app:sec:7}.
\begin{lemma}\label{lemma_quasi_convex_1}
For any selected product set $S$, the revenue function $\overline{re}(\bar{q}_0)$ in (\ref{eqn:reve_equ}) is quasi-convex in the range $\left[\bar{q}^{min}_0, \bar{q}^{max}_0\right]$, under the assumption of $q_1<0.5$.
\end{lemma}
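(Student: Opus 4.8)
The plan is to reduce the claim to a one-dimensional curvature analysis. First I would make the dependence of $\overline{re}$ on $\bar{q}_0$ fully explicit: for each $i\in S$ the equilibrium demand satisfies $\bar{q}_i = V(\bar{q}_0\exp(\theta_i-1))$, which by the definition of $V$ in (\ref{eqn:v_func}) is equivalent to the implicit relation $\log\bar{q}_i + \bar{q}_i/(1-\bar{q}_i) = \log\bar{q}_0 + \theta_i - 1$. Since quasi-convexity is preserved under a strictly increasing reparametrization of the domain, I would substitute $t=\log\bar{q}_0$; then each $\bar{q}_i$ depends only on $t+\theta_i-1$, and a short calculation yields the clean derivative $\frac{d\bar{q}_i}{dt}=\frac{\bar{q}_i(1-\bar{q}_i)^2}{D_i}$, where $D_i\triangleq 1-\bar{q}_i+\bar{q}_i^2>0$. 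Writing $\overline{re}=A+B$ with $A=\sum_{i\in S}\frac{\bar{q}_i}{1-\bar{q}_i}$ and $B=\frac{1}{s}-1$, $s\triangleq\bar{q}_0+\sum_{i\in S}\bar{q}_i$, this gives the compact forms $A'(t)=\sum_{i\in S}\frac{\bar{q}_i}{D_i}$ and $B'(t)=-\frac{1}{s^2}\big(\bar{q}_0+\sum_{i\in S}\frac{\bar{q}_i(1-\bar{q}_i)^2}{D_i}\big)$, where primes denote $d/dt$.

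Next I would invoke the standard sufficient condition for quasi-convexity of a $C^2$ function on an interval: if every stationary point is a strict local minimum, i.e. $\overline{re}'(t)=0\Rightarrow\overline{re}''(t)>0$, then $\overline{re}$ has at most one stationary point, is unimodal (first decreasing, then increasing), and is therefore quasi-convex; equivalently, $\overline{re}'$ exhibits a single sign change from negative to positive. This is exactly the property that guarantees, as in the discussion preceding this lemma, that the maximum of $\overline{re}$ over $[\bar{q}^{min}_0,\bar{q}^{max}_0]$ is attained at one of the two endpoints $\bar{q}^{min}_0$ or $\bar{q}^{max}_0$. The task thus reduces to evaluating $\overline{re}''$ at an arbitrary stationary point $t^\star$ and showing it is positive, where I may freely use the stationarity relation $A'(t^\star)=-B'(t^\star)=s'/s^2$ to eliminate terms.

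The main obstacle is the sign of $\overline{re}''=A''+B''$. Differentiating once more, the $A$-term is unconditionally helpful: $A''(t)=\sum_{i\in S}\frac{\bar{q}_i(1-\bar{q}_i)^3(1+\bar{q}_i)}{D_i^3}>0$, reflecting that $A$ is convex in $t$. The difficulty is entirely in $B$, whose second derivative $B''=\frac{2(s')^2-s\,s''}{s^3}$ has indefinite sign because the map $q\mapsto q(1-q)^2/D$ appearing in $s'$ is not monotone, so $s''$ need not be large; this is the concrete manifestation of the mixed concave/convex structure flagged just before the lemma. Here is where the hypothesis $\bar{q}_1<\tfrac12$ (hence $\bar{q}_i\le\bar{q}_1<\tfrac12$ for all $i\in S$) enters: it forces the auxiliary factors arising in $s''$ and in the cross-terms, such as $1-2\bar{q}_i$ and the bound $\bar{q}_i/(1-\bar{q}_i)<1$, to have definite signs, which lets me show that the positive curvature of $A$, combined with the stationarity substitution, dominates any negative contribution of $B$. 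Carrying out this bookkeeping term by term over $\sum_{i\in S}$, while controlling the products that couple distinct indices, is the routine but delicate part of the argument, and I expect it to be the crux. Finally, I would note that the restriction $\bar{q}_1<\tfrac12$ is only a simplification, and that the unrestricted case is treated separately in Appendix~\ref{app:sec:7}.
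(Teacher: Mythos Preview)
Your approach is essentially the same as the paper's: both verify the second-order sufficient condition for quasi-convexity, namely that $\overline{re}'=0$ forces $\overline{re}''>0$, use the stationarity relation to substitute, and rely on $\bar{q}_1<\tfrac12$ to pin down signs in the final step. The differences are organizational rather than conceptual. Your reparametrization $t=\log\bar{q}_0$ and the split $\overline{re}=A+B$ with the clean, unconditional observation $A''(t)>0$ are genuine simplifications over the paper, which works directly in $\bar{q}_0$, encodes $\bar{q}''_i$ through an auxiliary function $g(\bar{q}_i)=\bigl(\tfrac{1}{\bar{q}_i}+\tfrac{1}{(1-\bar{q}_i)^2}\bigr)^{-2}\bigl(\tfrac{1}{\bar{q}_i^2}-\tfrac{2}{(1-\bar{q}_i)^3}\bigr)$, and then pushes $\overline{re}''$ to positivity via a chain of ad hoc inequalities (in particular $\bar{q}_0+\sum_{i\in S}\bar{q}_i\ge\tfrac12$, which it derives from the equilibrium constraint) before reducing to a single-variable residual $h(\bar{q}_1)$ that is ultimately checked by plotting on $[0,\tfrac12)$. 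The advantage of your parametrization is that each $\bar{q}_i$ depends on $t$ through a pure translate $t+\theta_i-1$, so the second derivatives are structurally transparent and the convexity of $A$ is immediate; the paper's advantage is that it actually executes the final bookkeeping (albeit ending in a graphical verification), whereas your proposal leaves precisely that step --- showing that $A''$ plus the stationarity substitution dominates the possibly negative $B''$ --- as a plan rather than a computation.
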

The basic idea to prove this result is to check the second-order conditions of a quasi-convex function, \ie, at any point with zero slope, the second derivative is non-negative, \ie, $\overline{re}'(\bar{q}_0) = 0 \Rightarrow \overline{re}''(\bar{q}_0) > 0$. 
The details are in
Appendix~\ref{app:sec:5}. Equipped with Lemma~\ref{lemma_quasi_convex_1}, we can derive the optimal search mechanism for the case with $k^*\geq 2$.
\begin{lemma}\label{theo:optimal_S=2}
For revenue maximization, the optimal search segmentation mechanism is to display the top $k^*$ products if the cardinality of the optimal product set is $k^*\geq 2$, under the assumption of $q_1< 0.5$.
\end{lemma}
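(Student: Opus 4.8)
The plan is an exchange argument driven by the quasi-convexity of Lemma~\ref{lemma_quasi_convex_1}. Its core is the following exchange claim: \emph{if $S=T\cup\{c\}$ is a revenue-maximizing set and $b\notin S$ is a product with $\theta_b\ge\theta_c$, then $T\cup\{b\}$ is also revenue-maximizing.} Granting this, I would take any optimal set $S^*$; by Lemma~\ref{theo:s1} it contains product $1$, and by hypothesis $|S^*|=k^*$. Whenever $S^*\neq\{1,\dots,k^*\}$, let $b=\min\{i:i\notin S^*\}$ and $c=\max\{i:i\in S^*\}$. Since $S^*$ has $k^*$ elements but misses some index $\le k^*$, it must contain some index $>k^*$, so $b\le k^*<c$ and hence $\theta_b\ge\theta_c$. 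The exchange claim makes $(S^*\setminus\{c\})\cup\{b\}$ optimal as well, with the same cardinality, while strictly decreasing the integer potential $\sum_{i\in S^*} i$. This potential is bounded below by $1+2+\cdots+k^*$, attained only at $\{1,\dots,k^*\}$, so after finitely many exchanges the process terminates and certifies that $\{1,\dots,k^*\}$ is optimal.

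To prove the exchange claim I would view both candidate sets as single-product additions to the common base $T$ and apply Lemma~\ref{lemma_quasi_convex_1} to the function $\overline{re}(\bar{q}_0)$ in (\ref{eqn:reve_equ}) on $[\bar{q}_0^{min},\bar{q}_0^{max}]$. From (\ref{eqn:market_share_sum}) and the monotonicity of $V$, a higher-quality addition yields a smaller $\bar{q}_0$, so writing $\bar{q}_0(b),\bar{q}_0(c)$ for the two additions and $\bar{q}_0^{max}$ for adding nothing, we have $\bar{q}_0^{min}\le \bar{q}_0(b)\le \bar{q}_0(c)\le \bar{q}_0^{max}$. Optimality of $S=T\cup\{c\}$ gives $\overline{re}(\bar{q}_0(c))\ge\overline{re}(T)=\overline{re}(\bar{q}_0^{max})$, i.e. adding $c$ is at least as good as adding nothing. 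A strictly quasi-convex function on an interval is strictly decreasing and then strictly increasing about a unique minimizer $m$; this last inequality forces $\bar{q}_0(c)$ onto the non-increasing branch $[\bar{q}_0^{min},m]$, since on the increasing branch $\bar{q}_0(c)<\bar{q}_0^{max}$ would give the strict reverse inequality. Hence $\bar{q}_0(b)\le\bar{q}_0(c)\le m$, and monotonicity on the left branch yields $\overline{re}(\bar{q}_0(b))\ge\overline{re}(\bar{q}_0(c))$, i.e. $\overline{re}(T\cup\{b\})\ge\overline{re}(T\cup\{c\})=\overline{re}(S)$; as $S$ is optimal, so is $T\cup\{b\}$.

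I expect the main obstacle to be precisely this last step: converting ``adding the lower-quality product $c$ beats adding nothing'' into the geometric statement that $\bar{q}_0(c)$ lies on the non-increasing branch of the quasi-convex curve. This is where the valley shape and the location of the unique minimizer relative to $\bar{q}_0^{max}$ must be handled carefully, and where the \emph{strictness} furnished by the second-order argument behind Lemma~\ref{lemma_quasi_convex_1} (namely $\overline{re}'(\bar{q}_0)=0\Rightarrow\overline{re}''(\bar{q}_0)>0$) is needed to rule out the degenerate case $\bar{q}_0(c)=\bar{q}_0^{max}$ sitting on a flat right branch. A secondary point to make explicit is that each exchange preserves global optimality, so the hypothesis ``adding $c$ beats adding nothing'' remains available at every iteration; without maintaining optimality, quasi-convexity alone would not imply that a higher-quality addition dominates a lower-quality one, because $\overline{re}$ is non-monotone in the added product's quality.
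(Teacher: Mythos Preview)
Your proposal is correct and follows essentially the same quasi-convexity-based exchange idea as the paper. The paper's proof in Appendix~G only spells out the case $k^*=2$ (starting from $S=\{1\}$, quasi-convexity forces the maximum over single additions to sit at an endpoint, and the hypothesis $k^*\ge2$ rules out the ``add nothing'' endpoint) and then says ``the analysis for the case of $k^*>2$ follows the same principle''; your explicit exchange with the potential $\sum_{i\in S^*} i$, together with the observation that global optimality is preserved at each swap so the comparison against the ``add nothing'' endpoint remains available, is precisely what makes that principle rigorous.
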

The optimality of the top $k^*$ mechanism in this Lemma can be established by showing that replacing any product with a product of higher quality will increase the revenue (see Appendix G for the proof).

While the specific value of $k^*$ depends on the quality of all products ${\bm \theta}$, The platform can find the optimal $k^*$ in linear time by computing the revenue of each set with the top $k\in [1,n]$ products, and selecting the one with the maximum revenue.  Thus, from Lemma~\ref{theo:s1} and  Lemma~\ref{theo:optimal_S=2},
we obtain the main result for the revenue maximization.
\begin{theorem}\label{theorem:revenue_bertrant_game}
For revenue maximization, the optimal search segmentation mechanism is to display the top  $k^*$ products, where $k^*$ is determined by the quality of all products ${\bm \theta}$, and can be calculated in linear time.
\end{theorem}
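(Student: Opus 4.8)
The plan is to assemble Theorem~\ref{theorem:revenue_bertrant_game} from the two structural lemmas already in hand and then bound the cost of the resulting search. First I would pin down the \emph{shape} of the optimal set. By Lemma~\ref{theo:s1} the highest-quality product $1$ lies in every revenue-maximizing display set, so the optimal cardinality $k^*$ satisfies $k^*\geq 1$; if $k^*=1$ the optimal set is $\{1\}$, again by Lemma~\ref{theo:s1}. If instead $k^*\geq 2$, Lemma~\ref{theo:optimal_S=2} forces the optimal set to be exactly the top $k^*$ products $\{1,2,\dots,k^*\}$. Combining the two cases, \emph{whatever} the value of $k^*$, the optimal display set must be one of the $n$ nested candidates $\{1\}\subset\{1,2\}\subset\cdots\subset\mathbb{S}$. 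This is the conceptual core: the exchange reasoning behind Lemma~\ref{theo:optimal_S=2} collapses the $2^n-1$ arbitrary subsets down to a single increasing chain of length $n$.

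Given this reduction, locating $k^*$ becomes a scalar optimization over $k\in\{1,\dots,n\}$. For each candidate I would solve the equilibrium constraint~(\ref{eqn:non_purchase}) restricted to $S=\{1,\dots,k\}$ for its unique root $\bar{q}_0$ (unique because $V$ is strictly increasing), recover each $\bar{q}_i=V(\bar{q}_0\exp(\theta_i-1))$, and evaluate $\overline{re}=\sum_{i\leq k}\bar{q}_i/(1-\bar{q}_i)$. The point to stress is the incremental structure exploited throughout Section~4: passing from the top-$k$ set to the top-$(k+1)$ set only appends product $k+1$, adding one summand to the revenue and one term to the equilibrium constraint. Hence the sweep over $k$ costs $O(n)$ beyond the per-step root solve, and $k^* = \argmax_k \overline{re}(\{1,\dots,k\})$ is read off directly.

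Two points need care when writing the final argument. The lighter one is stating the cost model honestly: the linear-time claim counts the $n$ nested candidates and the single new term per step, treating the scalar root-finding for $\bar{q}_0$ in~(\ref{eqn:non_purchase}) as a bounded sub-routine, which is justified since $V$ is strictly increasing and the root is therefore unique and locatable by bisection. The more substantive point, and where I expect the real obstacle to sit, is the scope of the hypothesis $q_1<0.5$: Lemmas~\ref{lemma_quasi_convex_1} and~\ref{theo:optimal_S=2} are established only under this ``no dominant seller'' assumption, so to state Theorem~\ref{theorem:revenue_bertrant_game} unconditionally I would invoke the relaxation in Appendix~\ref{app:sec:7}, where one must verify that quasi-convexity of $\overline{re}(\bar{q}_0)$—and hence the endpoint/exchange argument—persists even when some $\bar{q}_i\geq 0.5$. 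Once that restriction is lifted, the nested-chain reduction and the $O(n)$ sweep carry over verbatim, yielding the theorem.
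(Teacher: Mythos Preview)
Your proposal is correct and follows essentially the same route as the paper: reduce to the nested chain $\{1\}\subset\{1,2\}\subset\cdots\subset\mathbb{S}$ via Lemma~\ref{theo:s1} (for $k^*=1$) and Lemma~\ref{theo:optimal_S=2} (for $k^*\geq 2$), then sweep over $k$ in linear time, invoking Appendix~\ref{app:sec:7} to drop the $q_1<0.5$ hypothesis. Your treatment is in fact slightly more careful than the paper's, since you make explicit that the linear-time claim counts nested candidates with one new term per step and treats the scalar root-finding for~(\ref{eqn:non_purchase}) as a bounded subroutine.
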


\section{Extensions to Cournot Competition Game}
In oligopolistic markets, another popular model to capture sellers' competition is Cournot game~\cite{daughety2005cournot}, in which sellers compete via controlling the supplies to products.
Specifically, each seller selects the number of units  she wants to sell, and the number of items that a seller wants to sell influences the availability of products and thus their prices. In other words, the prices are determined by the seller indirectly, by influencing supply. 
Although the Bertrand game appears to be more appropriate to capture the sellers' price competition in practice, such as the Airbnb or hotels.com case,  the Cournot game can also be used to model a specific type of price competition, \ie, a two-stage quantity precommitment price competition~\cite{doi:10.1287/opre.2018.1760,10.2307/3003636}. 
In this case, sellers compete on quantity in the first stage,  and then compete on price in the second stage with the fixed committed quantity. Sellers on Airbnb or hotels.com type platform can first compete via the number of units they sell, and then compete by the price per unit. 
It has been shown in~\cite{doi:10.1287/opre.2018.1760,10.2307/3003636} that under certain conditions, the equilibrium in Cournot  game is the equilibrium of such two-stage quantity precommitment price competition game. 
In this section, we discuss the existence and uniqueness of equilibrium, and then derive the optimal segmenting mechanisms for the Cournot game model. 

In a Cournot competition, seller $i \in {S}$ chooses a demand $q_i$ to maximize her revenue $r_i(\bm{q}) =p_i(\bm{q}) \times q_i $, where the price $p_i(\bm{q})$ of the product $i$ is determined by the demand vector $\bm{q}$ in (\ref{eqn:price}).
Thus, we express the revenue of sellers in terms of demand vector $\bm{q}$. Similar to the Bertrand game, we can represent a Cournot game as a triplet $G^c=\left({S}, (\mathcal{Q}_i)_{i\in {S}}, (r_i)_{i\in {S}}\right)$, where ${S}$ is a set of players, $\mathcal{Q}_i$ is the strategy space of player $i\in {S}$, and $r_i(\bm{q})$ is the payoff of player $i\in {S}$.
We represent the set of strategy profiles $\mathcal{Q}=\mathcal{Q}_1\times \mathcal{Q}_2 \times \cdots \times \mathcal{Q}_n$.
According to price-demand relation in (\ref{demand}), the requirement of non-negative prices implies the feasible strategy space
\begin{equation*}
\mathcal{Q}=\left\{\bm{q}:0  \leq \sum_{i\in \bar{S}} q_i \leq \frac{\sum_{i\in \bar{S}}\exp(\theta_i)}{1+\sum_{i\in \bar{S}}\exp(\theta_i)},  \forall \bar{S} \subseteq  {S}. \right\},
\end{equation*}
which is convex and compact.
We also denote a feasible strategy profile $\bm{q}\in \mathcal{Q}$ as $\bm{q}=(q_i, \bm{q}_{-i})$. Using the facts that the payoff function $r_i(q_i, \bm{q}_{-i})$ is a strictly concave function with respective to $q_i$ and the feasible strategy profile space ${\bm Q}$ is convex and compact, the Cournot game $G^c$ is a concave game as defined in Rosen's paper~\cite{10.2307/1911749}.
From Rosen's result~\cite{10.2307/1911749}, we know that there exists a unique (pure) Nash equilibrium, and the Nash equilibrium can be obtained from
the system of first-order-condition equations.

Setting the partial derivative $\partial r_i(q_i, \bm{q}_{-i})/ \partial q_i$ to be zero for all $i \in {S}$, we have
$$
\frac{\partial r_i(q_i, \bm{q}_{-i})}{\partial q_i}=\theta_i -1 -\frac{q_i}{1-\sum_{j \in {S}} q_j} -\log\left(\frac{q_i}{1-\sum_{j \in {S}} q_j}\right)=0, \  \forall i \in {S}.
$$
We define a variable $w_i \triangleq q_i/(1-\sum_{j \in {S}} q_j)$, and the above equations become
$$
\exp(\theta_i-1) = w_i \times \exp(w_i), \quad \forall i \in {S}.
$$
Thus, we  get $w_i = W(\exp(\theta_i-1))$, where $W(x)$ is the Lambert function. According to the definition of $w_i$, we  obtain the equilibrium demands 
$$
\hat{q}_i = \frac{w_i}{1+\sum_{j \in {S}} w_j} = \frac{W(\exp(\theta_i-1))}{1+\sum_{j=1}^{n} W(\exp(\theta_j-1))}, \quad \forall i \in {S}.
$$
We can verify that such a demand vector $\hat{\bm{q}}$ satisfies the feasible constraint, \ie, $\hat{\bm{q}}\in {\bm Q}$. Substituting $\hat{\bm{q}}$ into (\ref{eqn:price}), we obtain the corresponding equilibrium prices
$$
\hat{p}_i = 1+ W(\exp(\theta_i-1)).
$$
We summarize the Nash equilibrium of the Cournot competition game in the following theorem.
\begin{theorem}
There exists a unique (pure) Nash equilibrium in the Cournot game $G^c=\left({S}, (\mathcal{Q}_i)_{i\in {S}}, (r_i)_{i\in {S}}\right)$. The Nash equilibrium demand $\hat{q}_i$ and price $\hat{p}_i$ for each seller $i\in {S}$ are given by
$$
\hat{q}_i = \frac{W(\exp(\theta_i-1))}{1+ \sum_{j \in {S}} W(\exp(\theta_j-1))},  \quad \hat{p}_i= 1+W(\exp(\theta_i-1)).
$$
\end{theorem}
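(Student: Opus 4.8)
The plan is to split the proof into two parts: first establishing existence and uniqueness of the Nash equilibrium, and then deriving the explicit closed-form expressions for $\hat{q}_i$ and $\hat{p}_i$.

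For the first part, I would invoke Rosen's theory of concave games~\cite{10.2307/1911749}. This requires verifying two structural properties. First, that the joint strategy space $\mathcal{Q}$ is convex and compact, which follows directly from its definition as an intersection of finitely many linear half-space constraints indexed by subsets $\bar{S} \subseteq S$. Second, and more substantively, that each seller's payoff $r_i(q_i, \bm{q}_{-i}) = p_i(\bm{q}) \times q_i$ is strictly concave in its own variable $q_i$ with the competitors' demands $\bm{q}_{-i}$ held fixed. Substituting the price expression from~(\ref{eqn:price}), one writes $r_i = q_i\bigl(\theta_i + \log(1-\sum_{j\in S} q_j) - \log q_i\bigr)$, and I would compute $\partial^2 r_i / \partial q_i^2$ to confirm it is strictly negative throughout the feasible region. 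With both conditions in hand, Rosen's theorem guarantees a unique pure Nash equilibrium, and concavity of each payoff in its own coordinate ensures the equilibrium is exactly the simultaneous solution of the first-order stationarity conditions.

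For the explicit form, I would set $\partial r_i / \partial q_i = 0$. Differentiating the expression above yields the stationarity condition
$$
\theta_i - 1 - \frac{q_i}{1-\sum_{j\in S} q_j} - \log\!\left(\frac{q_i}{1-\sum_{j\in S} q_j}\right) = 0, \quad \forall i \in S.
$$
The key simplification is the substitution $w_i \triangleq q_i / (1-\sum_{j\in S} q_j)$, which collapses each equation to $\exp(\theta_i - 1) = w_i\exp(w_i)$. By the defining property of the Lambert function, this gives $w_i = W(\exp(\theta_i-1))$ uniquely. Inverting the substitution — summing the relations $q_i = w_i(1-\sum_j q_j)$ over $i$ to solve for $1-\sum_j q_j$ and then back-substituting — recovers the stated $\hat{q}_i$, and plugging $\hat{q}_i$ into~(\ref{eqn:price}) and using $\log w_i = \theta_i - 1 - w_i$ delivers $\hat{p}_i = 1 + W(\exp(\theta_i-1))$.

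The last step is to verify that the candidate $\hat{\bm{q}}$ actually lies in the feasible set $\mathcal{Q}$, since the first-order conditions were derived assuming an interior optimum; this amounts to checking the half-space constraint for every subset $\bar{S}\subseteq S$. I expect the main obstacle to be this feasibility verification together with the strict-concavity computation: both require careful handling of the coupling term $1-\sum_j q_j$ appearing inside the logarithm, since the payoff is not separable across sellers. The feasibility check in particular is where one must confirm that the positivity of each $W(\exp(\theta_i-1))$ and the resulting demand values keep all prices non-negative simultaneously across every subset, so that the stationary point is genuinely interior rather than on a boundary face of $\mathcal{Q}$.
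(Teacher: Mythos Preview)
Your proposal is correct and follows essentially the same route as the paper: invoke Rosen's concave-game framework for existence and uniqueness (using convexity/compactness of $\mathcal{Q}$ and strict concavity of each $r_i$ in $q_i$), write out the first-order conditions, apply the change of variables $w_i = q_i/(1-\sum_j q_j)$ to decouple the system into $w_i\exp(w_i)=\exp(\theta_i-1)$, solve via the Lambert function, and then back-substitute for $\hat q_i$ and $\hat p_i$, finally checking feasibility. The paper treats the concavity and feasibility verifications as routine one-line assertions rather than obstacles, so you are likely anticipating more difficulty there than is actually present.
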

Substituting equilibrium solutions into (\ref{eqn:sw_price}), we can get the equilibrium social welfare in the Cournot game with the seller set $S\subseteq \mathbb{S}$
\begin{equation}\label{nsw_cournot}
\widehat{sw}(S) = \log(1+\sum_{i \in {S}} w_i ) +\frac{\sum_{i \in {S}} (w_i^2 +w_i)}{1+\sum_{i \in {S}} w_i}.
\end{equation}
Similarly, for the set of sellers $S\subseteq \mathbb{S}$, the equilibrium revenue in the Cournot game is
\begin{equation}\label{nre_cournot}
\widehat{re}(S) = \frac{\sum_{i \in {S}} (w_i^2 +w_i)}{1+\sum_{i \in {S}} w_i}.
\end{equation}
\subsection{Social Welfare Maximization}
In contrast to the Bertrand game, it may be possible for the platform to  achieve higher equilibrium social welfare by just displaying  a subset of products in the Cournot game.
We construct a simple instance to illustrate this difference. Suppose there is only one product with high quality, and the remaining products have low quality, \eg, $\theta_1=10$ and $\theta_i=0$ for all $i\in \mathbb{S} \backslash \{1\}$. By the equilibrium social welfare in (\ref{nsw_cournot}), we can verify that the optimal search segmentation mechanism in the Cournot game is just to select the first product, while the optimal mechanism in the Bertrand game is to display all products according to the result in Theorem~\ref{theo:bertrand_sw}.

Before presenting the main result, we first show an important lemma for the Cournot game.
\begin{lemma}\label{lem:social_cournot}
For social welfare maximization in a Cournot game, the optimal search segmentation mechanism is to display the top $k$ products if the cardinality of the optimal product set is $k$.
\end{lemma}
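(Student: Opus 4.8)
The plan is to establish the claim by a local-exchange argument showing that the globally optimal set cannot skip a higher-quality product while retaining a lower-quality one. Recall from (\ref{nsw_cournot}) that the Cournot equilibrium welfare depends on $S$ only through the \emph{fixed} quantities $w_i = W(\exp(\theta_i-1))$, and that $w_1 \ge w_2 \ge \cdots \ge w_n > 0$, since $W$ is increasing and $\theta_i \ge 0$ forces $w_i \ge W(e^{-1}) > 0$. Suppose, for contradiction, that the optimal set $S^*$ has cardinality $k$ but is not the prefix $\{1,\dots,k\}$. Then there exist $i \in S^*$ and $j \notin S^*$ with $j < i$, hence $w_j \ge w_i$.

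First I would isolate the common part $C = S^* \setminus \{i\}$ and study the single-variable function $f(w) = \widehat{sw}(C \cup \{\,\text{one product of value } w\,\})$, namely
\[
f(w) = \log(1 + A_C + w) + \frac{(B_C + w^2) + (A_C + w)}{1 + A_C + w},
\]
where $A_C = \sum_{\ell \in C} w_\ell$ and $B_C = \sum_{\ell \in C} w_\ell^2$. The elementary observations driving the proof are: (i) $f(0) = \widehat{sw}(C)$, so evaluating $f$ at $0$ corresponds to \emph{removing} $i$ from $S^*$; (ii) $f(w_i) = \widehat{sw}(S^*)$ and $f(w_j) = \widehat{sw}(S^* - i + j)$; and (iii) a direct differentiation gives $f'(w) = P(w)/(1+A_C+w)^2$ with $P(w) = w^2 + (3 + 2A_C)w + (2 + A_C - B_C)$ an upward-opening parabola, so $f$ is quasi-convex on $[0,\infty)$ (either increasing throughout, or strictly decreasing and then increasing).

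The heart of the argument is to combine two inequalities that optimality of $S^*$ forces. Since swapping $i$ for $j$ cannot help, $f(w_i) \ge f(w_j)$; and since deleting $i$ cannot help, $f(w_i) \ge f(0)$. From $f(w_i) \ge f(w_j)$ with $0 < w_i < w_j$ and quasi-convexity, $f$ cannot be globally increasing, and $w_i$ must lie strictly to the left of the minimizer of $f$, that is, on its strictly decreasing branch, which is equivalent to $P(w_i) \le 0$. But then, because $w_i > 0$ and $f$ is strictly decreasing on $[0, w_i]$, we get $f(w_i) < f(0)$, contradicting $f(w_i) \ge f(0)$. Ties ($w_j = w_i$) are handled separately: equal-quality products are interchangeable, so the swap preserves welfare and the optimum can be taken to be a prefix. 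This contradiction shows $S^* = \{1,\dots,k\}$.

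The step I expect to be the main obstacle is making the sign analysis fully rigorous: one must verify that $f(w_i) \ge f(w_j)$ with $w_i < w_j$ indeed forces $P(w_i) \le 0$ (ruling out the case where both roots of $P$ are positive, which is immediate since their sum $-(3+2A_C)$ is negative, so $P(0) = 2+A_C-B_C \le 0$ must hold) and then upgrading the monotonicity to the \emph{strict} inequality $f(w_i) < f(0)$ needed to contradict the removal condition. The conceptual crux, and the reason a one-line ``swapping in a better product raises welfare'' argument fails here, is that $f$ is genuinely only quasi-convex rather than monotone; it is precisely the interplay between the swap inequality and the deletion inequality that yields the contradiction.
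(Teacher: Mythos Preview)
Your proposal is correct and essentially the same as the paper's proof. Both arguments introduce the identical single-variable function (your $f$ is the paper's $g$), compute its derivative as an upward-opening quadratic over a positive denominator, use Vi\`ete to conclude quasi-convexity on $[0,\infty)$, and then leverage the hypothesis ``optimal cardinality is $k$'' to pin down the behaviour at the endpoints.

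The only difference is framing. The paper argues constructively: quasi-convexity on $[0,w_j]$ forces the maximum to sit at an endpoint, $g(0)$ is ruled out because that would make a $(k{-}1)$-set strictly better, hence $g(w_j)\ge g(w_i)$, and one iterates the swap until reaching the prefix. You argue by contradiction: global optimality of $S^*$ gives both $f(w_i)\ge f(w_j)$ and $f(w_i)\ge f(0)$; the first inequality places $w_i$ on the strictly decreasing branch, whence $f(w_i)<f(0)$, contradicting the second. These are dual readings of the same quasi-convexity picture; your version is slightly tighter in that it avoids the explicit iteration, while the paper's version makes the ``swap-to-prefix'' mechanics more visible. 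Your handling of ties and your remark that the sum of roots $-(3+2A_C)<0$ eliminates the two-positive-roots pathology are both on point.
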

As in the case of revenue maximization in Bertrand competition, the idea to establish this result is also to make a connection of quasi-convexity of equilibrium social welfare/revenue functions with the decision on displaying products. The detailed proof of this lemma is in Appendix~\ref{app:sec:8}. 
 
We have shown that to maximize equilibrium social welfare, the platform may not display all products in a Cournot game.
To find the optimal number of products $k^*$,
 which depends on the product quality vector ${\bm \theta}$, the platform can calculate equilibrium social welfare for each possible $k$, and select the one with the maximum social welfare. By Lemma~\ref{lem:social_cournot}, for each candidate $k$, we only need to consider the set containing the top $k$ products. Thus, the platform can determine the optimal $k^*$ in linear time, leading to the
following main result.
\begin{theorem}\label{theorem:cournot_game}
For social welfare maximization, when sellers compete via a Cournot game, the optimal search segmentation mechanism is to display the top $k^*$ products, where $k^*$ is determined by the quality of all products ${\bm \theta}$, and can be calculated in linear time.
\end{theorem}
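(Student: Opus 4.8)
The plan is to combine the structural result of Lemma~\ref{lem:social_cournot} with a direct incremental evaluation over the candidate cardinalities. Lemma~\ref{lem:social_cournot} already collapses the combinatorial search: it guarantees that whenever the optimal display set has cardinality $k$, that set must be exactly the top $k$ products by quality. Since the products are sorted so that $\theta_1 \ge \theta_2 \ge \cdots \ge \theta_n$, the only sets we ever need to consider are the nested prefixes $S_k = \{1, 2, \ldots, k\}$ for $k \in \{1, \ldots, n\}$, reducing the $2^n - 1$ candidate subsets to just $n$ candidates. The optimal mechanism is therefore $S^* = S_{k^*}$ where $k^* = \argmax_{k \in \{1,\ldots,n\}} \widehat{sw}(S_k)$, which immediately gives the claimed top-$k^*$ structure and makes explicit that $k^*$ is a function of $\bm{\theta}$ through the quantities $w_i = W(\exp(\theta_i - 1))$.

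Next I would establish the linear-time claim by exploiting the additive structure of the social welfare formula in~(\ref{nsw_cournot}). Writing $A_k \triangleq \sum_{i=1}^k w_i$ and $B_k \triangleq \sum_{i=1}^k (w_i^2 + w_i)$, the equilibrium social welfare of the prefix $S_k$ is $\widehat{sw}(S_k) = \log(1 + A_k) + B_k/(1 + A_k)$, so it depends on the set only through the two running sums $A_k$ and $B_k$. These sums admit the $O(1)$ updates $A_k = A_{k-1} + w_k$ and $B_k = B_{k-1} + w_k^2 + w_k$, so after a single pass over the sorted products---adding one term to each sum and re-evaluating the closed form---we obtain all $n$ values $\widehat{sw}(S_1), \ldots, \widehat{sw}(S_n)$ in $O(n)$ total time, and taking the maximizer yields $k^*$ in linear time.

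The bulk of the real work has been discharged into Lemma~\ref{lem:social_cournot}, whose quasi-convexity argument is the genuine obstacle; conditioned on that lemma the theorem is essentially a bookkeeping argument. The one point that warrants care is confirming that no subset outside the prefix family can beat the best prefix: this is exactly what Lemma~\ref{lem:social_cournot} rules out, since any optimal set of size $k$ coincides with $S_k$, so the global maximum over all subsets equals $\max_k \widehat{sw}(S_k)$. I would also note that, unlike the Bertrand social-welfare case, here $k^* < n$ is genuinely possible (as the instance $\theta_1 = 10$, $\theta_i = 0$ for $i \neq 1$ shows), so the sweep over all $k$ cannot be short-circuited to ``display everything'' and the full linear scan is what delivers optimality.
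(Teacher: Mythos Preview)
Your proposal is correct and mirrors the paper's own argument: invoke Lemma~\ref{lem:social_cournot} to collapse the search to the nested prefixes $S_k$, then sweep over $k=1,\ldots,n$ and pick the maximizer. Your explicit running sums $A_k,B_k$ for the incremental evaluation spell out the linear-time claim more carefully than the paper does, but the approach is the same.
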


\subsection{Revenue Maximization}
Similar to social welfare maximization, the platform also displays a subset of products to maximize equilibrium revenue in a Cournot game. As before, we first present a useful lemma.
\begin{lemma}\label{lem_revenue_cournot}
For revenue maximization in the Cournot game, the optimal search segmentation mechanism is to display the top $k$ products if the cardinality of the optimal product set is $k$.
\end{lemma}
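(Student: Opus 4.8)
The plan is to exploit the fact that, unlike in the Bertrand game, the Cournot equilibrium quantities decouple across the displayed set: each $w_i = W(\exp(\theta_i-1))$ depends only on $\theta_i$, and since $W(\exp(\cdot-1))$ is strictly increasing, the top $k$ products by quality are exactly the $k$ products with the largest $w_i$. Rewriting (\ref{nre_cournot}) as $\widehat{re}(S) = 1 + \frac{\sum_{i\in S} w_i^2 - 1}{1 + \sum_{i\in S} w_i}$, the equilibrium revenue is a function of the multiset $\{w_i : i \in S\}$ alone, which is what makes a clean quasi-convexity argument possible.

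First I would isolate the ``incremental'' function. Fix a base set $S_0 \subseteq \mathbb{S}$ and let $A = \sum_{i\in S_0} w_i$ and $B = \sum_{i\in S_0} w_i^2$. Adding one product whose parameter is $w \ge 0$ gives revenue
\[
f(w) \;=\; 1 + \frac{w^2 + (B-1)}{w + (1+A)},
\]
and note that $f(0) = \widehat{re}(S_0)$, so the ``do not add'' option is encoded as the endpoint $w = 0$. Differentiating gives $f'(w) = \big(w^2 + 2(1+A)w - (B-1)\big)/(w+1+A)^2$, whose numerator is an upward parabola in $w$; hence on $[0,\infty)$ either $f' \ge 0$ throughout (when $B \le 1$) or $f$ strictly decreases to a unique minimizer $w^\star > 0$ and strictly increases afterwards (when $B > 1$). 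In both cases $f$ is quasi-convex on $[0,\infty)$, which is the promised connection between quasi-convexity and the display decision.

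The main step is then a global optimality argument rather than a single exchange, because -- and this is the key obstacle -- swapping a lower-quality product for a higher-quality one does not always increase Cournot revenue (the U-shape of $f$ makes this fail when both parameters lie on the decreasing branch). So I would argue by contradiction: suppose the revenue-maximizing set $S^*$ is not a prefix, so there exist $a \in S^*$ and $b \in \mathbb{S}\setminus S^*$ with $w_b > w_a$. Take $S_0 = S^*\setminus\{a\}$ as the base and read off two consequences of the optimality of $S^*$: removal cannot help, i.e. $f(w_a) = \widehat{re}(S^*) \ge \widehat{re}(S_0) = f(0)$; and the swap $a \mapsto b$ cannot help, i.e. $f(w_a) \ge f(w_b)$. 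When $B \le 1$, $f$ is increasing, so $f(w_b) > f(w_a)$ contradicts the swap inequality immediately. When $B > 1$, the inequality $f(w_a) \ge f(0)$ together with the U-shape forces $w_a$ to lie on the increasing branch, beyond the point $w^{\star\star} > w^\star$ where $f$ first returns to the level $f(0)$; but then $w_b > w_a > w^\star$ places both on the strictly increasing branch, giving $f(w_b) > f(w_a)$ and again contradicting the swap inequality.

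This contradiction shows that every selected product has quality at least that of every unselected one, so $S^*$ is the set of top $|S^*|$ products, exactly as claimed; the prefix enumeration over $k$ then finds the global optimum in linear time. As a consistency check, the empty base ($A=B=0$) reduces $f$ to $f(w)=w$, recovering that a single displayed product is always the highest-quality one. I expect the delicate part to be the second optimality argument: one must use both the removal and the swap conditions simultaneously, since neither alone suffices, and must locate $w_a$ precisely relative to the minimizer $w^\star$ of the quasi-convex $f$.
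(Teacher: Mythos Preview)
Your proposal is correct and follows essentially the same approach as the paper: the paper defines the analogous one-variable function $\hat g(w)=\frac{\sum_{t\in S\setminus\{i\}}(w_t^2+w_t)+(w^2+w)}{1+\sum_{t\in S\setminus\{i\}}w_t+w}$, computes the same derivative numerator $w^2+2(1+A)w+(1-B)$, establishes quasi-convexity by the same case split on the sign of $1-B$, and uses the size-$k$ hypothesis (equivalently your removal inequality $f(w_a)\ge f(0)$) to rule out the left endpoint and conclude that the swap to the higher-quality product weakly improves revenue. Your organization is slightly different in that you make explicit use of both the removal and the swap inequalities and locate $w_a$ on the increasing branch directly, whereas the paper phrases the same step as ``the maximum on $[0,w_j]$ is attained at $w_j$''; but the underlying argument is identical.
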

The key idea to prove this lemma directly follows from the proof for Lemma~\ref{lem:social_cournot}, and we defer it to 
the Appendix~\ref{app:sec:9}.  With this lemma, we can derive the following result for revenue maximization in the Cournot game.
\begin{theorem}\label{theorem:revenue_cournot_game}
For revenue maximization, when sellers compete via a Cournot game, the optimal search segmentation mechanism is to involve the top $k^*$ products, where $k^*$ is determined by the quality of all products ${\bm \theta}$, and can be calculated in linear time.
\end{theorem}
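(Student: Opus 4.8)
The plan is to reduce the exponential search over all $2^n-1$ candidate display sets to a linear search over the $n$ nested prefix sets $S_k \triangleq \{1,2,\ldots,k\}$, and then exploit the additive structure of the equilibrium revenue in (\ref{nre_cournot}) to evaluate all of these prefixes in linear time. All the genuine combinatorial difficulty is absorbed into Lemma~\ref{lem_revenue_cournot}, which we take as given; the theorem itself is then an assembly step.

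First I would invoke Lemma~\ref{lem_revenue_cournot}: whatever the cardinality $k^*$ of an optimal display set turns out to be, the lemma guarantees that an optimal set of that cardinality is the top $k^*$ products by quality. Since the optimal set has \emph{some} cardinality in $\{1,\ldots,n\}$, it follows that the global maximum over all nonempty subsets is already attained by one of the prefix sets, so it suffices to compute
\[
\widehat{re}^{\,*} = \max_{1 \le k \le n} \widehat{re}(S_k).
\]
Here I would also note that because the quality vector is sorted in non-increasing order and $W$ is increasing on $[0,+\infty)$, the Cournot weights $w_i = W(\exp(\theta_i-1))$ satisfy $w_1 \ge w_2 \ge \cdots \ge w_n > 0$; hence ``top $k$ products by quality'' coincides with ``largest $k$ weights,'' confirming that the prefix sets $S_k$ are exactly the right objects to enumerate.

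Second, for the complexity claim I would write the equilibrium revenue (\ref{nre_cournot}) on the prefix set $S_k$ as a ratio $\widehat{re}(S_k) = A_k / B_k$, with $A_k \triangleq \sum_{i=1}^{k}(w_i^2 + w_i)$ and $B_k \triangleq 1 + \sum_{i=1}^{k} w_i$. The key observation is the incremental recursion $A_k = A_{k-1} + (w_k^2 + w_k)$ and $B_k = B_{k-1} + w_k$, starting from $A_0 = 0$, $B_0 = 1$: each step appends a single new term and requires a single new evaluation of the Lambert function $W$. Consequently the whole sequence $\widehat{re}(S_1), \ldots, \widehat{re}(S_n)$ is produced with $O(n)$ arithmetic operations and $n$ Lambert-function evaluations, and one returns $k^* = \argmax_{1 \le k \le n} A_k / B_k$ together with the set $S_{k^*}$ (ties broken arbitrarily). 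Optimality follows from the first step and the exhaustive-over-prefixes comparison, while the linear-time claim follows from the recursion.

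The step I expect to carry all the weight is the reduction in the first paragraph, i.e.\ Lemma~\ref{lem_revenue_cournot} itself, whose proof (by the quasi-convexity/replacement argument inherited from the Cournot social-welfare case) is what removes the combinatorial explosion. Granting that lemma, there is no genuine obstacle left in the theorem: the only thing to verify carefully is the bookkeeping that makes the per-prefix update $O(1)$ and guarantees that the $n$ prefix evaluations indeed cover the global optimum rather than merely the optima within each fixed cardinality.
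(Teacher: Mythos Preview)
Your proposal is correct and follows essentially the same approach as the paper: invoke Lemma~\ref{lem_revenue_cournot} to reduce the search to the $n$ nested prefix sets, then evaluate $\widehat{re}(S_k)$ for each $k$ and take the maximum. You actually spell out the incremental $O(1)$ update $A_k = A_{k-1}+(w_k^2+w_k)$, $B_k = B_{k-1}+w_k$ more explicitly than the paper does, but the underlying argument is identical.
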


 \section{Related Work}
Our work is related to the burgeoning literature that studies online platform marketplaces of using control levels other than pricing to influence the market outcomes~\cite{Arnosti:2014:MCD:2600057.2602893,Banerjee:2017:STM:3038912.3052578,Birge:2018:OCS:3219166.3219216,kanoria2017facilitating}.
 Kanoria and Saban designed a framework to facilitate the search for buyers and sellers on matching platforms, and found that simple restrictions on what buyers/sellers can access would boost social welfare~\cite{kanoria2017facilitating}. Arnosti~\et~investigated the welfare loss due to the uncertainty about seller availability in asynchronous dynamic matching markets, and also found that limiting the visibility of sellers can improve social welfare~\cite{Arnosti:2014:MCD:2600057.2602893}.
Our result, displaying only a subset of  products to buyers can increase the equilibrium revenue, extends the  findings in these two pieces of work to the context of revenue optimization.
Banerjee~\et~studied how the platform should control which sellers and buyers are visible to each other, and provided polynomial-time approximation algorithms to optimize social welfare and throughput~\cite{Banerjee:2017:STM:3038912.3052578}. In their model, supply and demand are associated with public distributions.
By contrast, we adopt the MNL model to derive a specific demand system, and use the Bertrand game to capture  supply response to this demand system, doing so leads to very different optimization problems.
There are also other types of control mechanisms that can be used to efficiently operate the online platforms, such as
commission rates and subscription fees~\cite{Birge:2018:OCS:3219166.3219216} and prices and wages for buyers and sellers~\cite{47759}.

Revenue management under the MNL and general demand model has been extensively studied in economics, marketing and operation management~\cite{doi:10.1287/msom.1080.0221,doi:10.1111/poms.12191,song2007demand,doi:10.1287/mnsc.1030.0147}.
The model considered in this paper is closely related to that in assortment optimization, which is an active area in revenue management research. In the problem of assortment optimization,
the demand of products are governed by the variants of attraction-based choice models, such as  MNL model~\cite{McFa73},  mixed nested logit model~\cite{CARDELL1980423} and nested logit model~\cite{doi:10.1068/a090285}, and each product is associated with a fixed price.
The objective is to find a set of products, or an assortment to offer that maximizes the expected revenue. In~\cite{doi:10.1287/mnsc.1030.0147}, Talluri and van Ryzin studied the assortment optimization problem under the MNL model, and showed that the optimal assortment includes a certain number of products with the highest prices.
We also derive a similar result, but use the criteria of quality rather than price to rank the potential products.
In our setting a key difference from this line of work is that the product prices are determined endogenously by the outcome of oligopolistic  competition games instead of being given beforehand.
Pricing multiple differentiated products in the context of the MNL model is another fairly active direction in  literature~\cite{doi:10.1287/msom.1080.0221,doi:10.1287/mnsc.42.7.992,song2007demand}.
Different from the assortment optimization problem, in this setting, all the products are displayed, and the objective is to choose pries for the products to maximize revenue.
In contrast, we focus on search segmentation mechanisms with endogenous prices, where the platform only controls the set of displayed products, to optimize the equilibrium social welfare/revenue.

Bertrand competition, proposed by Joseph Bertrand in 1883, and Cournot competition, introduced in 1838 by Antoine Augustin Cournot, are fundamental economic models that represent sellers competing in a single market, and have been studied comprehensively in economics.
Due to the motivation that many sellers compete in more than one market in modern dynamic and diverse economy, a recent and growing literature has studied Cournot competitions in network environments~\cite{10.1007/978-3-319-13129,Bimpikis:2014:CCN:2600057.2602882,8264340,8057125}. 
The work \cite{10.1007/978-3-319-13129,Bimpikis:2014:CCN:2600057.2602882} focused on characterizing and computing Nash equilibria, and investigated the impact of changes in the (bipartite) network structure on seller's profit and buyer's surplus.
\cite{8264340} and \cite{8057125} analyzed the efficiency loss of networked Cournot competition game via the metric of price of anarchy.
While all these previous works focused on the objective of social welfare maximization in networked Cournot competition, we consider the objective of both social welfare and revenue maximization in the networked 
Bertrand competition, and also in the networked Cournot competition. We further provide efficient segmenting mechanisms to optimize the social welfare/revenue under the Nash equilibrium.
 \section{Conclusion and Future Work}
 In this paper, we have studied the problems of social welfare maximization and revenue maximization in designing search space for online platform markets.
In the discriminatory control model, the platform can only control the search segmentation mechanisms, \ie, determine the list of products to display for buyers, and the products' prices are determined endogenously by the competition among sellers.
 Under the standard buyer choice model, namely the multinomial logit mode,
we have developed efficient and optimal search segmentation mechanisms to maximize the equilibrium social welfare and revenue under Bertrand competition game.
For social welfare maximization, it is optimal to display all the products. 
For revenue maximization, the optimal search mechanism, referred as quality-order mechanism, is to display the top $k^*$ highest quality products, where $k^*$ can be computed  in at most linear time in the number of products.
We extend our results to Cournot competition game, and show that the optimal search segmentation mechanisms are also the simple quality-order mechanisms, for the objectives of both social welfare and revenue maximization. 

 One possible direction for future work is to extend the quality-order mechanisms to more complex demand models (such as  mixed MNL model and  nested logit model) and general buyer-seller (bipartite) networks. Another interesting research topic is to design the optimal search mechanisms in dynamic setting, where buyers arrive and depart, and sellers have limited capacity for products.

\bibliographystyle{abbrv}
\bibliography{sigmetrics}

\appendix

\section{Monopolistic Market}\label{sec:monopolistic_market}
In this section, we consider an online marketplace in the full control model as a monopolistic market, where the platform (or monopoly) jointly determines the price vector $\bm{p}$ and the set of displayed products to maximize social welfare or revenue. We first fix the displayed products as $S \subseteq \mathbb{S}$, and investigate the optimal monopoly prices.

It turns out that social welfare function in (\ref{eqn:sw_price}) and revenue function in (\ref{eqn:revenue_price})  are not concave in prices~\cite{doi:10.1287/mnsc.42.7.992}. The key observation here is that the social welfare and revenue functions become jointly concave if we work with the demand vector~\cite{doi:10.1287/msom.1080.0221,song2007demand}. Using the price-demand relation in (\ref{eqn:price}), we can express the social welfare  as a function of demand vector $\boldsymbol{q}$:
\begin{equation}\label{eqn:sw_market_share}
sw(\boldsymbol{q})= -(1-\sum_{j \in {S}} q_j)\times \log(1-\sum_{j \in {S}} q_j) + \sum_{i \in {S}}(\theta_i -\log(q_i))\times q_i.
\end{equation}
Similarly, the revenue function can be re-written as
\begin{equation}\label{eqn:re_market_share}
{re}({\bm q}) = \sum_{i \in {S}} (\theta_i  + \log(1-\sum_{j \in {S}} q_j )   -\log(q_i) ) \times q_i.
\end{equation}
According to (\ref{demand}), the requirement of non-negative prices implies that the feasible demand space is given by 
\begin{equation}\label{eqn:feasible_q}
\mathcal{Q}=\left\{\bm{q}:0  \leq \sum_{i\in S} q_i \leq \frac{\sum_{i\in S}\exp(\theta_i)}{1+\sum_{i\in S}\exp(\theta_i)},  \forall S \subseteq  \mathbb{S}. \right\},
\end{equation}
which is convex and compact.
Therefore, the problem of social welfare maximization in the monopolistic market can be formulated as
\begin{eqnarray}
\nonumber
 \max_{\bm{q}\in \mathcal{Q}}  \quad  sw(\boldsymbol{q})= -(1-\sum_{j \in {S}} q_j ) \log(1-\sum_{j \in {S}} q_j ) + \sum_{i \in {S}}(\theta_i -\log(q_i)) q_i.
\end{eqnarray}
 This optimization problem is  a standard concave maximization problem. By Karush-Kuhn-Tucker (KKT) condition~\cite{boyd2004convex}, we set partial derivative $\frac{ \partial sw(\boldsymbol{q}) }{  \partial q_i } =0 $, and re-formulate this equation to get
 $$
 exp(\theta_i) = \frac{q^*_i }{1-\sum_{j \in {S}} q^*_j},  \quad \quad \forall i \in {S}.
 $$
Solving this system of equations, we can derive the optimal market shares $\bm{q}^*$ and the corresponding optimal prices $\bm{p}^*$ for social welfare maximization in the monopolistic market:
$$
q_i^* = \frac{\exp(\theta_i)}{1+ \sum_{j \in {S}} \exp(\theta_j)}, \quad p_i^*=0, \quad \forall i\in {S}.
$$
In the optimal solution for social welfare maximization, the platform sets prices of all products to be zero, and the demands are proportional to their product quality. The revenue of the seller is zero, and the utility of the buyer is maximized. The optimal social welfare for displaying products $S \subseteq \mathbb{S}$ is
\begin{equation}\label{eqn:sw_monopoly}
sw^*(S)=\log(1 + \sum_{i \in {S}} \exp(\theta_i)).
\end{equation}

Similarly, revenue maximization in the monopolistic market is also a convex optimization problem.
can be formulated as
\begin{eqnarray}
\nonumber
 \max_{\bm{q}\in \mathcal{Q}}  \quad  {re}(\boldsymbol{q})= \sum_{i \in {S}} (\theta_i  + \log(1-\sum_{j \in {S}} q_j )   -\log(q_i) ) \times q_i.
\end{eqnarray}
By setting the partial derivative $\frac{ \partial {re}(\boldsymbol{q}) }{  \partial q_i } =0$, we can get
$$
\theta_i-1  = \log ( \frac{q^*_i}{1-\sum_{j \in {S}} q^*_j}  ) + \frac{\sum_{j \in {S}} q^*_j}{1-\sum_{j \in {S}} q^*_j}  \quad \quad  i \in {S}.
$$
Using the relation between $q_i$ and $p_i$ in (\ref{demand}), we can express the above equations  in terms of $p_i$'s to simplify the calculation:
\begin{equation}\label{eqn:derivative_system}
exp(\theta_i -1)  =  a_i \times exp(\sum_{j \in {S}} a_j ), \quad \quad  i \in {S}.
\end{equation}
Solving this system of equations, we can obtain
\begin{equation}\label{eqn:sum_a}
\sum_{j \in {S}} a_j = W(\sum_{j \in {S}}  exp(\theta_j -1)),
\end{equation}
where $W(x)$ is the Lambert function~\cite{Corless1996}, and is the solution $w$ satisfying
$
w\times \exp(w) = x.
$
Substituting (\ref{eqn:sum_a}) back into (\ref{eqn:derivative_system}), we can calculate the optimal price $p^*$ and demand $q^*$ for revenue maximization in the monopolistic market
$$
 p_i^*  =1 + \omega,  \quad q_i^*  =    \frac{\exp\left( ( \theta_i-1 )- \omega  \right)}{1+ \omega}, \quad \forall i\in {S},
$$
where $\omega \triangleq W(\sum_{j \in {S}}  exp(\theta_j -1))$.
In the optimal solution for revenue maximization, the platform sets the same price for all products, and the demands are also proportional to the product quality. The optimal revenue for the set of selected sellers $S \subseteq \mathbb{S}$ is
\begin{equation}\label{eqn:re_monopoly}
re^*({S}) = W(\sum_{j \in {S}}  exp(\theta_j -1)  ).
\end{equation}
From (\ref{eqn:sw_monopoly}) and (\ref{eqn:re_monopoly}), we can observe that both the optimal social welfare and  revenue in the monopolistic market are increasing with respective to the number of displayed products. Thus, in the  full control model, the platform displays all products $\mathbb{S}$ to buyers, and sets the optimal prices ${\bm p^*}$ to maximize the social welfare or revenue.
\section{Proof for Lemma~\ref{lem:best_response}}\label{app:sec:1}
\begin{proof}
From the relation between price and demand in (\ref{eqn:price}), we can write the seller $i$'s revenue in (\ref{eqn:seller_revenue}) with respective to the demand $q_i$:
\begin{equation}\label{eqn:revenue_deamnd}
r_i(q_i) = q_i  \times (\theta_i + \log(q_0) - \log(q_i)).
\end{equation}
Here, we have used $q_0=1- \sum_{j \in S} q_j$. The variable $q_0$ depends on all $q_j$'s, which makes it difficult to calculate the optimal demand $q^*_i$. We next express $q_0$ only using $q_i$.
From (\ref{demand}), we have  $q_0 \times a_j =q_j,  \forall j\in S$.
We summarize these equations over all $j\in S \backslash \{i\}$, and obtain
$
q_0 \times \sum_{j\in S \backslash \{i\}} a_j = \sum_{j\in S \backslash \{i\}} q_j.
$
Combining with $q_0 = 1- \sum_{j \in S} q_j$, we have
\begin{equation}\label{eqn:q_0}
q_0  = \frac{1-q_i}{1+\sum_{j\in S \backslash \{i\}} a_j }.
\end{equation}
We note that given a vector of fixed prices $\textbf{p}_{-i}$, the $a_j=exp(\theta_j -p_j), \forall j\in S \backslash\{i\} $ are constants. Using (\ref{eqn:q_0}), we can rewrite the revenue function $r_i(q_i)$ as
$$
r_i(q_i) = q_i  \times \left(\theta_i   + \log(\frac{q_i}{1-q_i}) -  \log(1+\sum_{j\in S \backslash \{i\}} a_j )\right),
$$
and that it is strictly concave.
We calculate the first derivative of $r_i(q_i)$, and set it to be zero:
\begin{eqnarray}
\nonumber
& & r'_i(q_i) = \theta_i  - \log(\frac{q_i}{1-q_i}) -  \log(1+\sum_{j\in S \backslash \{i\}} a_j ) -\frac{q_i}{1-q_i} -1, \\
\nonumber
& & r'_i(q^*_i) =  0   \Rightarrow  (\frac{q^*_i}{1-q^*_i}) \times exp(\frac{q^*_i}{1-q^*_i}) =  \frac{exp(\theta_i -1) }{1+\sum_{j\in S \backslash \{i\}} a_j}, \\
& \Rightarrow &  \frac{q^*_i}{1-q^*_i}  =  W(\frac{exp(\theta_i -1) }{1+\sum_{j\in S \backslash \{i\}} a_j}),\label{eqn:equilibrium}
\end{eqnarray}
where $W(x)$ is the Lambert function.
With $q_0 \times a^*_i= q^*_i$ and (\ref{eqn:q_0}), we have
\begin{equation*}
 \frac{a^*_i}{1+\sum_{j\in S \backslash \{i\}} a_j }  = \frac{q^*_i}{1-q^*_i}
\end{equation*}
Together with  (\ref{eqn:equilibrium}), we can get
\begin{eqnarray*}
& &  a^*_i =    W(\frac{exp(\theta_i -1) }{1+\sum_{j\in S \backslash \{i\}} a_j}) \times (1+\sum_{j\in S \backslash \{i\}} a_j) \\
& \Rightarrow  &  p^*_i  = \theta_i  - log\left( W(\frac{exp(\theta_i -1) }{1+\sum_{j\in S \backslash \{i\}} a_j}) \times (1+\sum_{j\in S \backslash \{i\}} a_j) \right).
\end{eqnarray*}
We need to guarantee that the new price $p^*_i$ is non-negative. From the definition of Lambert W function, we have $W(x) \leq x$, and thus we can derive from the above equation that $p^*_i \geq 1$. Since  the other prices $\bm{p}_{-i}$ remain the same and are non-negative, the new price vector $(p^*_i,\bm{p}_{-i})$ is feasible. 
\end{proof}
\section{Proof for Lemma~\ref{lemma:potential_game}}\label{app:sec:2}
\begin{proof}
We first show that the Bertrand game is an ordinal potential game. We construct a potential function
$$
G(\textbf{p}) = \frac{\prod_{i\in S} p_i \times a_i }{1+ \sum_{j\in S} a_j},
$$
which satisfies the following property for every price vector $\textbf{p}$, every seller $i \in S$ and every unilateral deviation $p'$ by $i$:
\begin{equation}\label{eqn:potential_function}
r_i(p_i, \textbf{p}_{-i}) > r_i(p'_i, \textbf{p}_{-i})  \Leftrightarrow G(p_i, \textbf{p}_{-i}) > G(p'_i, \textbf{p}_{-i}).
\end{equation}
Since the revenue function $r_i(q_i)$ in (\ref{eqn:revenue_deamnd}) is strictly concave, the deviating seller $i$'s revenue strictly increases. By (\ref{eqn:potential_function}), the potential function also strictly increases  after each iteration of the best response dynamics. Thus, no cycles are possible.
Since the potential function has a finite value, the best response dynamics eventually reach the maxima of the potential function, \ie, the Nash equilibrium, in finite steps.
\end{proof}
\section{Proof for Lemma~\ref{lemma:sw_decreasing}}\label{app:sec:3}
\begin{proof}
By implicit differentiation of $V(x)$ in (\ref{eqn:v_func}), we can calculate the derivative of $V(x)$:
$$
V'(x) = \frac{1}{x\times \left(\frac{1}{V(x)}+\frac{1}{(1-V(x))^2}\right)}.
$$
We can then get the derivative of $\overline{sw}(\bar{q}_0)$ in (\ref{eqn:sw_mechanism})
$$
\overline{sw}'(\bar{q}_0) = -\frac{1}{\bar{q}_0} + \sum_{i\in S} \frac{\bar{q}'_i}{(1-\bar{q}_i)^2} - \frac{1+\sum_{i\in S} \bar{q}'_i}{ \bar{q}_0 +\sum_{i\in S} \bar{q}_i},
$$
where $\bar{q}'_i = \exp(\theta_i-1) \times V'(\bar{q}_0\times \exp(\theta_i-1))$.
Substituting the specific form of $V'(x)$ into the above equation, we further have
\begin{eqnarray}
\nonumber
\overline{sw}'(\bar{q}_0) & < & -\frac{1}{\bar{q}_0} + \sum_{i\in S} \frac{\bar{q}'_i}{(1-\bar{q}_i)^2} - \sum_{i\in S} \bar{q}'_i  \\
\nonumber
&  = &  -\frac{1}{\bar{q}_0} + \sum_{i\in S} \bar{q}'_i \times \frac{1-(1-\bar{q}_i)^2}{(1-\bar{q}_i)^2}\\
& = & \frac{1}{\bar{q}_0} \left( -1 + \sum_{i\in S} \frac{1-(1-\bar{q}_i)^2}{\frac{(1-\bar{q}_i)^2}{\bar{q}_i}+1} \right). \label{eqn:derivative}
\end{eqnarray}
The first inequality is due to $0 \leq \bar{q}_0 + \sum_{i\in S} \bar{q}_i \leq 1$.
When $S=\varnothing$, $\overline{sw}'(\bar{q}_0)$ is negative. It is easy to check that $\overline{sw}'(\bar{q}_0)$ is also negative when $|S|=1$.
Thus, without loss of generality, we can assume that at least two products have been selected, \ie, $|S|\geq 2$.
We introduce an increasing sigmoidal function
$$f(q) \triangleq \frac{1-(1-q)^2}{\frac{(1-q)^2}{q}+1},$$
which is convex in range $[0,0.5]$ and concave in range $[0.5, 1]$. We plot the graph of $f(q)$ in Figure~\ref{fig:functionf}. We also plot the graphs of linear functions $f_1(q)=q$ and $f_2(q)=a\times q$, where the slope is $a= \frac{f(1-q_1)}{1-q_1}$ with $q_1 \geq 0.5$. We summarize two important properties of the function $f(q)$.

\begin{figure}[!tbp]
  \centering
  \includegraphics[scale=0.5]{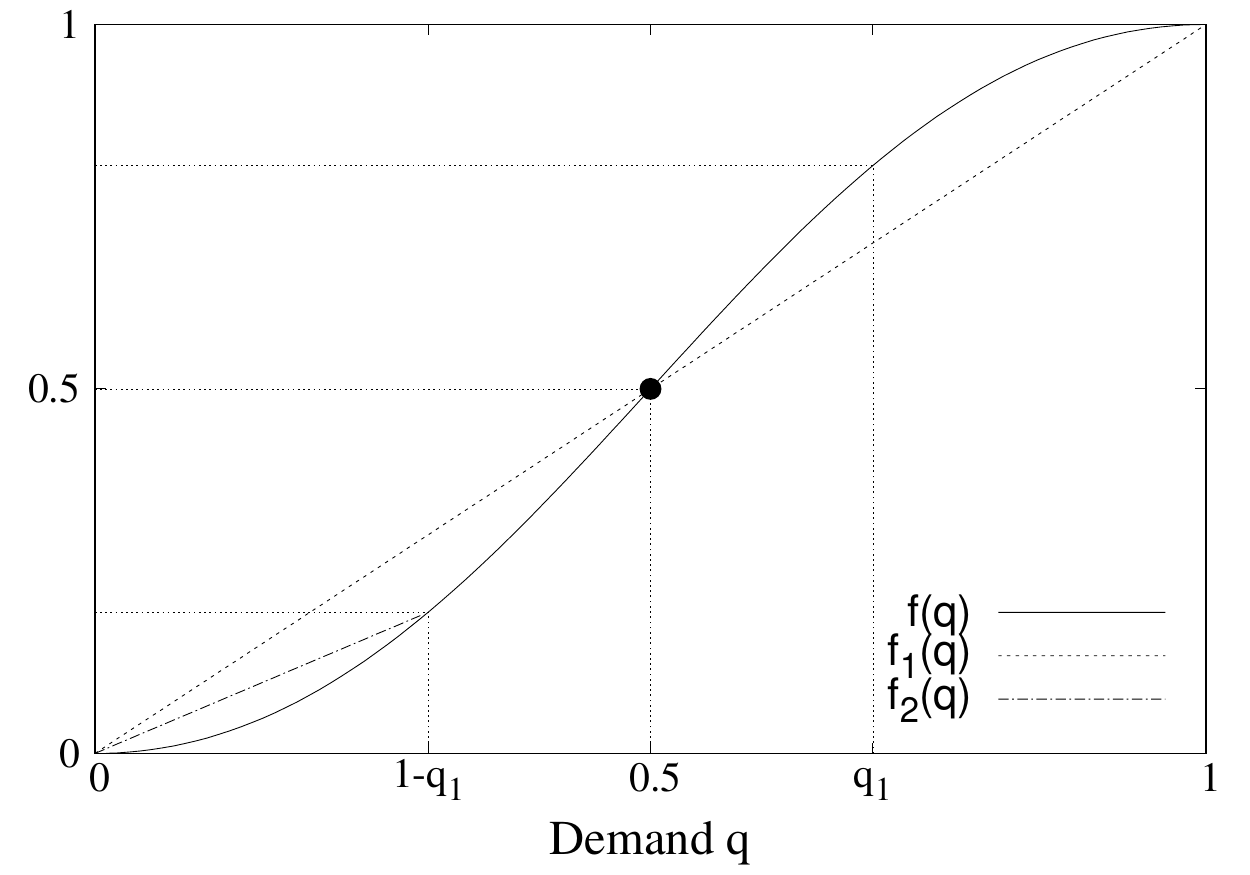}
  \caption{The function $f(q)$ is sigmoidal over the interval $[0,1]$, and is convex in $[0,0.5]$ and concave in $[0.5,1]$.}
  \label{fig:functionf}
\end{figure}

$\bullet$ For any $q\in [0,1]$, $f(q)+f(1-q)=1$.

$\bullet$ For any $q\in [0,0.5]$, $f(q)\leq f_1(q) $, and $f(q)\leq f_2(q)$ for any $q\in [0, 1- q_1]$ with $q_1\geq 0.5$.

To prove (\ref{eqn:derivative}) is non-positive, we consider the following optimization problem:
$$ \text{Maximize}   \sum_{i\in S} f(\bar{q}_i), \ \ \ \ \text{Subject to}  \  \sum_{i\in S} \bar{q}_i=1,$$
and show the optimal objecitve is always no more than $1$, \ie, $\sum_{i\in S} f(\bar{q}_i)\leq 1$ by separately considering the following two cases:

$\bullet$ When $|S|=2$, according to the first property of function $f(q)$, we have  $\sum_{i\in S} f(q_i) = 1$.

$\bullet$ When $|S|>2$, we further consider two different scenarios.
If $\bar{q}_i\leq 0.5$ for all $i\in S$, by the second property of $f(q)$, we have
$$
\sum_{i\in S} f(\bar{q}_i) \leq \sum_{i\in S} \bar{q}_i =1.
$$
In the other case, one of the $\bar{q}_i$'s is larger than $0.5$. Since $\sum_{i\in S }\bar{q}_i =1$ and $0 \leq \bar{q}_i \leq \bar{q}_1$,  we can claim $\bar{q}_1>0.5$ and $\bar{q}_i <  1 - \bar{q}_1$ for all $i\in S\backslash \{1\}$. By the second property of $f(q)$, we have
$$
f(\bar{q}_i) \leq f_2(\bar{q}_i) = \frac{f(1-\bar{q}_1)}{1-\bar{q}_1} \times \bar{q}_i, \quad \forall i \in S\backslash\{1\}.
$$
Thus, we can further get
\begin{eqnarray*}
f(\bar{q}_1) + \sum_{i \in S\backslash\{1\}} f(\bar{q}_i) & \leq & f(\bar{q}_1) +  \frac{f(1-\bar{q}_1)}{1-\bar{q}_1} \times \sum_{i \in S\backslash\{1\}} \bar{q}_i \\
&  = &  f(\bar{q}_1) +  f(1-\bar{q}_1) =  1.
\end{eqnarray*}
The first equality comes from $\sum_{i\in S} \bar{q}_i=1$, and the second equality is due to the first property of $f(q)$.

From the above discussion, we have proved that the maximum value of $\sum_{i\in S} f(\bar{q}_i)$ is no more than $1$. By (\ref{eqn:derivative}), it follows that $\overline{sw}'(\bar{q}_0) < 0$,
\end{proof}
\section{Proof for Lemma~\ref{theo:s1}}\label{app:sec:4}
 \begin{proof}
 Suppose the displayed product set $S$ does not contain the product with the highest quality.
We denote the product with the highest quality in $S$ as $i^*$, \ie, $ \bar{q}_{i^*} \geq \bar{q}_i, \forall i\in S$. We only need to show that the revenue does not decrease if we replace  the product $i^*$ with the product $1$, which is equivalent to increase the quality of product $i^*$ from $\theta_{i^*}$ to $\theta_{1}$.
From (\ref{eqn:reve_equ}), we can calculate the derivative of the revenue function with the product set  $S$:
\begin{equation}\label{eq_kl}
 \overline{re}'(\bar{q}_0)  = \sum_{i\in S \backslash i^* }\frac{\bar{q}'_i}{(1-\bar{q}_i)^2}-\frac{1+ \sum_{i\in S \backslash i^* }\bar{q}'_i}{(1-\bar{q}_{i^*})^2}.
\end{equation}
Here, we use the equilibrium constraint $\bar{q}_{i^*}+ \sum_{i\in S \backslash i^* } \bar{q}_i =1- \bar{q}_0$.
Since $ \bar{q}_{i^*} \geq \bar{q}_i, \forall i  \in S \backslash i^*$,
we can derive that  $\overline{re}'(\bar{q}_0) \leq 0$, and thus $\overline{re}(\bar{q}_0)$ is a non-increasing function.
Furthermore, as the quality $\theta_{i^*}$ has an inverse relation with the demand  $\bar{q}_0$ from (\ref{eqn:market_share_sum}),
increasing $\theta_{i^*}$ to $\theta_{1}$ is equivalent to decrease $\bar{q}_0$, which would not decrease the revenue.
From the above discussion, we can obtain the result.
 \end{proof}

\section{Proof for Lemma~\ref{lemma_quasi_convex_1}}\label{app:sec:5}
\begin{figure}[!tbp]
  \centering
  \includegraphics[scale=0.5]{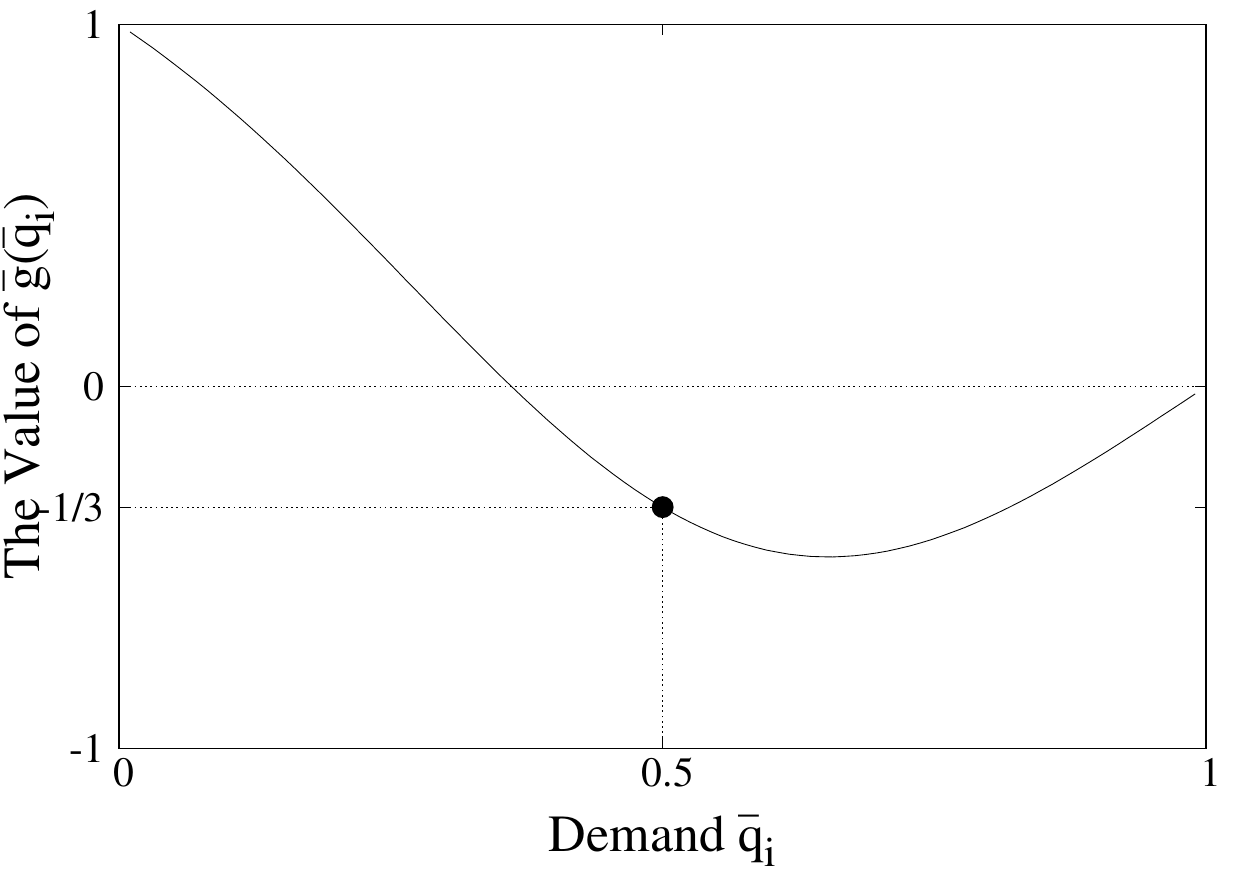}
  \caption{The function $g(\bar{q})$ has one critical property: for any pair of $\bar{q}_i$ and $\bar{q}_j$ with $0.5 \geq \bar{q}_i \geq \bar{q}_j\geq 0$, the relation $-1/3 \leq g(\bar{q}_i) \leq g(\bar{q}_j ) \leq 1$ holds.}
  \label{fig:functiong}
\end{figure}

\begin{proof}
The basic idea can be illustrated  via the case of $k^* = 2$ and $S=\{1\}$.  From~(\ref{eqn:reve_equ}), we have the revenue function for this case
 \begin{equation}\label{eqn:revenue_3}
 \overline{re}(\bar{q}_0) = \frac{\bar{q}_1}{1-\bar{q}_1} + \frac{1}{\bar{q}_0+ \bar{q}_1}-1, \quad  \bar{q}_0 \in \left[\bar{q}^{min}_0, \bar{q}^{max}_0\right].
 \end{equation}
We check the second-order conditions of a quasi-convex function, \ie, at any point with zero slope, the second derivative is non-negative:
$
re'(\bar{q}_0) = 0 \Rightarrow re''(\bar{q}_0) > 0.
$
The first derivative of the revenue function is
$$
\overline{re}'(\bar{q}_0) = \frac{\bar{q}'_1}{(1-\bar{q}_1)^2} - \frac{1+\bar{q}'_1}{ \left( \bar{q}_0 + \bar{q}_1  \right)^2},
$$
and the corresponding second derivative is
\begin{equation}\label{eqn:second_derivative1}
\overline{re}''(\bar{q}_0)   =   \frac{\bar{q}''_1}{(1-\bar{q}_1)^2} + \frac{2\times (\bar{q}'_1)^2}{(1-\bar{q}_1)^3}
   - \frac{\bar{q}''_1}{(\bar{q}_0+ \bar{q}_1 )^2} + \frac{2\times (1+\bar{q}'_1 )^2 }{(\bar{q}_0 + \bar{q}_1 )^3}.
\end{equation}
With the expression of $\bar{q}'_i  = \frac{1}{\bar{q}_0 \times \left( \frac{1}{\bar{q}_i}  + \frac{1}{(1-\bar{q}_i )^2}   \right)}$, we can calculate that
$
\bar{q}''_i = \frac{\bar{q}'_i}{\bar{q}_0 } \times  \left( -1 +g(\bar{q}_i) \right),
$
where
\begin{equation}\label{fun_g}
g(\bar{q}_i) \triangleq   \frac{1}{ \left(\frac{1}{\bar{q}_i }   + \frac{1}{(1- \bar{q}_ i)^2 } \right)^2 } \times \left( \frac{1}{\bar{q}^2_i} -\frac{2}{(1-\bar{q}_i)^3}   \right).
\end{equation}
The function $g(\bar{q})$ has one critical property needed for the later analysis: $g(\bar{q})$  is a decreasing function over the range $[0,0.5]$,
$g(0)= 1$ and  $g(0.5)= -1/3$.
We can verify this property by showing the first derivative $g'(\bar{q})$ is negative in the range  $[0,0.5]$. 
We also plot the graph of $g(\bar{q})$ in Figure~\ref{fig:functiong}.
 We then rewrite the second derivative in (\ref{eqn:second_derivative1}) as
\begin{eqnarray}
\nonumber
\overline{re}''(\bar{q}_0)  & = & \left( \frac{\bar{q}'_1 }{ (1-\bar{q}_1)^2 }  -  \frac{\bar{q}'_1}{(\bar{q}_0 + \bar{q}_1 )^2}  \right) \times \frac{1}{\bar{q}_0} \times (-1+g(\bar{q}_1 ))\\
& & + \frac{2 \times (\bar{q}'_1)^2 }{(1-\bar{q}_1)^3}  + \frac{2\times (1+\bar{q}'_1 )^2 }{(\bar{q}_0+\bar{q}_1)^3}. \label{eqn:second_derivative}
\end{eqnarray}
Since $\overline{re}'(\bar{q}_0) =  0$, we have
$$
 \frac{ \bar{q}'_1 }{ (1-\bar{q}_1)^2 }         =   \frac{1 +  \bar{q}'_1}{ ( \bar{q}_0+\bar{q}_1)^2}.
$$
Combine with the fact that $-1+g(\bar{q}_1)\leq 0$, we can further relax the second derivative in (\ref{eqn:second_derivative}):
{
\begin{eqnarray}
(\ref{eqn:second_derivative})
\nonumber
& \geq  & \frac{1+ \bar{q}'_1 }{(\bar{q}_0+\bar{q}_1)^2 } \times   \frac{1}{\bar{q}_0}   \times ( -1+ g(\bar{q}_1 ) )\\
\nonumber
& & +  \frac{1}{(\bar{q}_0 + \bar{q}_1)^2}  \times  \frac{1}{\bar{q}_0}  \times \left( \bar{q}'_1 \times (1-g(\bar{q}_1 ))   \right) \\
\nonumber
 & & + \frac{2 \times (\bar{q}'_1)^2 }{(1-\bar{q}_1)^3}  + \frac{2 \times (1+\bar{q}'_1 )^2}{(\bar{q}_0+\bar{q}_1)^3} \\
  \nonumber
   & = & \frac{1+ \bar{q}'_1 }{(\bar{q}_0+\bar{q}_1)^3}  \times \Bigg[ \left( 1 + \frac{ \bar{q}_1}{\bar{q}_0}  \right)  \times ( -1+ g(\bar{q}_1 )) \\
   \nonumber
& &     + \frac{\bar{q}_0+\bar{q}_1}{(1+ \bar{q}'_1 ) \times \bar{q}_0 }  \left( \bar{q}'_1 \times (1-g(\bar{q}_1 ) ) \right) \\
 & & +  \frac{2\times (\bar{q}_0+\bar{q}_1)^3}{(1+ \bar{q}'_1 ) \times \bar{q}_0} \times  \left(  \frac{ (\bar{q}'_1)^2\times \bar{q}_0 }{(1-\bar{q}_1)^3}  \right) + 2 \times (1+\bar{q}'_1) \Bigg]. \label{eqn:relax0}
 \end{eqnarray}
}
We next show the following three inequalities for later analysis

$\bullet$ $\frac{\bar{q}_0+\bar{q}_1}{(1+ \bar{q}'_1 ) \times \bar{q}_0 } \geq 1$,

$\bullet$ $ (-1+g(\bar{q}_1)) +2\geq 0$,

$\bullet$ $ (\bar{q}_0+\bar{q}_1)\geq 0.5$.

The first two inequalities are easy to verify. For the last inequality, we first have $(\bar{q}_0+\bar{q}_1) \geq \bar{q}_1 $. From the definition of $\bar{q}^{min}_0$, the equality $(\bar{q}_0+\bar{q}_1) = 1- \bar{q}_1$ holds when $\bar{q}_0  = \bar{q}^{min}_0$. For any $\bar{q}_0  \in \left[ \bar{q}^{min}_0,\bar{q}^{max}_0\right]$, we further have $(\bar{q}_0+\bar{q}_1)  \geq 1- \bar{q}_1$ because $\bar{q}_1 =V(\bar{q}_0 \times exp(\theta_1-1))$ is an increasing function with respective to $\bar{q}_0$. Combining these two inequalities, we have $(\bar{q}_0+\bar{q}_1) \geq \max\{ \bar{q}_1, 1-\bar{q}_1 \}$, resulting in that  $(\bar{q}_0+\bar{q}_1) \geq 0.5$.
With these three inequalities, we can further relax (\ref{eqn:relax0}):
\begin{eqnarray}
\nonumber
(\ref{eqn:relax0})  & \geq &  \frac{1+ \bar{q}'_1 }{(\bar{q}_0+\bar{q}_1)^3}  \times \Bigg[   \frac{ \bar{q}_1}{\bar{q}_0}   \times ( -1+ g(\bar{q}_1 )) +   \bar{q}'_1 \times (1-g(\bar{q}_1 ))   \\
\nonumber
 & & +  \frac{1}{2}  \times   \frac{ \bar{q}'_1}{(1-\bar{q}_1)^3}  \times \frac{1}{\frac{1}{\bar{q}_1} + \frac{1}{(1-\bar{q}_1)^2}}  + 2 \times \bar{q}'_1  \Bigg]\label{eqn:relax1} \\
  \nonumber
 & = &  \frac{1+ \bar{q}'_1 }{(\bar{q}_0+\bar{q}_1)^3}  \times  h(\bar{q}_1),
\end{eqnarray}
where we define function $h(\bar{q}_1)$ to be
$$
h(\bar{q}_1)  \triangleq  \frac{ \bar{q}_1}{\bar{q}_0}   ( -1+ g(\bar{q}_1 )) +  \bar{q}'_1   \times (1-g(\bar{q}_1 )) + \frac{1}{2} \frac{ \bar{q}'_1}{(1-\bar{q}_1)^3}  \frac{1}{\frac{1}{\bar{q}_1} + \frac{1}{(1-\bar{q}_1)^2}} +2  \bar{q}'_1.
$$
We further simplify this function as
\begin{eqnarray*}
h(\bar{q}_1)  & = & \frac{\bar{q}'_1}{  (1-\bar{q}_1)^2 }    \times \Bigg[ \left( (1-\bar{q}_1)^2 +\bar{q}_1 \right) \times (-1+g(\bar{q}_1))   \\
& &   + (3-g(\bar{q}_1)) \times  (1-\bar{q}_1)^2   + \frac{1}{2} \times \frac{1}{\frac{1}{\bar{q}_1} + \frac{1}{(1-\bar{q}_1)} -1 } \Bigg].
\end{eqnarray*}


To prove $re''(\bar{q}_0) > 0$, we only need to show that  ${h}(\bar{q}_i)> 0$.
Since  $\bar{q}_1< 0.5$, we have $g(\bar{q}_1) > -1/3$ from Figure~\ref{fig:functiong}, under which we can check  $h(\bar{q}_1) > 0$ for all  $0\leq \bar{q}_1< 0.5$ by plotting the figure of $h(\bar{q}_1)$.
 Thus, we have proved the quasi-convexity of the revenue function $\overline{re}(\bar{q}_0)$ when $\bar{q}_1< 0.5$.
\end{proof}

\section{Proof for Lemma~\ref{theo:optimal_S=2}}\label{app:sec:6}
 \begin{proof}
The basic idea behind the proof can be illustrated via the case of $k^* = 2$.   From Lemma~\ref{theo:s1}, we know the first product would always be involved.  Thus, the remaining part is to prove it is optimal to select the second product.  The revenue when the platform displays the first product and another product is  
 \begin{equation}\label{eqn:revenue_2}
 \overline{re}(\bar{q}_0) = \frac{\bar{q}_1}{1-\bar{q}_1} + \frac{1}{\bar{q}_0+ \bar{q}_1}-1, \quad  \bar{q}_0 \in \left[\bar{q}^{min}_0, \bar{q}^{max}_0\right].
 \end{equation}
We recall  that  $\overline{re}(\bar{q}^{min}_0)$ denotes the revenue of selecting the first two products,
and  $\overline{re}(\bar{q}^{max}_0)$ represents the revenue of only selecting the first product.
From~Lemma~\ref{lemma_quasi_convex_1}, we know that the revenue function $\overline{re}(\bar{q}_0)$ is  quasi-convex over the interval $\left[\bar{q}^{min}_0, \bar{q}^{max}_0 \right]$, which implies
$$
\overline{re}(\bar{q}_0) \leq \max\left\{ \overline{re}(\bar{q}^{min}_0), \overline{re}(\bar{q}^{max}_0)\right\}, \quad  \forall \bar{q}_0 \in \left[\bar{q}^{min}_0, \bar{q}^{max}_0 \right].
$$
$\overline{re}(\bar{q}^{max}_0)$ cannot be the maximum value of $\overline{re}(\bar{q}_0)$, because otherwise the size of the optimal product set is $1$, which contradicts the assumption in this lemma.  We then have
$
\overline{re}(\bar{q}_0) \leq  \overline{re}(\bar{q}^{min}_0),
$
meaning that the platform always selects the first two products when $k^*=2$. The analysis for the case of $k^*>2$ follows the same principle. Thus, we have completed  the proof to this lemma.
 \end{proof}

\section{Results for $\bar{q}_1\geq 0.5$ in Bertrand Competition Game}\label{app:sec:7}

When there is one seller that dominates the market, \ie, $\bar{q}_1 \geq 0.5$, in the revenue maximization, we need more complicated analysis to establish the optimality of the  top $k^*$ search segmentation mechanism.  We have proved the result for the case of $k^*=1$ in Lemma~\ref{theo:s1}. We now further consider the special cases of $k^*=2$ and $k^*\geq 3$ without any restriction of $q_1$, respectively.

\begin{theorem}\label{theorem:appendix1}
For revenue maximization, the optimal search segmentation mechanism is to display the first two products if the cardinality of the optimal product set is two, \ie, $k^*=2$.
\end{theorem}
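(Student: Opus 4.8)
The plan is to reduce the statement, exactly as in the proof of Lemma~\ref{theo:optimal_S=2}, to a quasi-convexity claim. By Lemma~\ref{theo:s1} the highest-quality product $1$ belongs to every revenue-optimal set, so if the optimal cardinality is $2$ the candidate sets are precisely $\{1,j\}$ for $j\in\mathbb{S}\backslash\{1\}$, and by the one-to-one inverse relation between $\theta_j$ and $\bar{q}_0$ in~(\ref{eqn:market_share_sum}) these are parametrized by $\bar{q}_0\in[\bar{q}^{min}_0,\bar{q}^{max}_0]$, with $\{1,2\}$ sitting at the left endpoint $\bar{q}^{min}_0$ and $S=\{1\}$ at $\bar{q}^{max}_0$. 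First I would show that $\overline{re}(\bar{q}_0)$ in~(\ref{eqn:reve_equ}) is quasi-convex on $[\bar{q}^{min}_0,\bar{q}^{max}_0]$ \emph{without} the hypothesis $\bar{q}_1<0.5$. Granting this, $\overline{re}(\bar{q}_0)\le\max\{\overline{re}(\bar{q}^{min}_0),\overline{re}(\bar{q}^{max}_0)\}$ for every $\bar{q}_0$ in the range; since $k^*=2$ excludes the endpoint $S=\{1\}$ (i.e. $\bar{q}^{max}_0$) from being optimal, the maximum over all two-element sets is forced to $\bar{q}^{min}_0$, which is exactly $S^*=\{1,2\}$, completing the proof.

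Everything therefore hinges on the quasi-convexity in the regime $\bar{q}_1\ge0.5$, and the honest difficulty is that the relaxation used for $\bar{q}_1<0.5$ does not survive here. I would reuse the exact objects from Lemma~\ref{lemma_quasi_convex_1}: the first derivative, the second derivative $\overline{re}''(\bar{q}_0)$, the critical-point identity $\frac{\bar{q}'_1}{(1-\bar{q}_1)^2}=\frac{1+\bar{q}'_1}{(\bar{q}_0+\bar{q}_1)^2}$, and the closed forms $\bar{q}'_1=[\bar{q}_0(\tfrac{1}{\bar{q}_1}+\tfrac{1}{(1-\bar{q}_1)^2})]^{-1}$ and $\bar{q}''_1=\tfrac{\bar{q}'_1}{\bar{q}_0}(-1+g(\bar{q}_1))$, all of which hold for every $\bar{q}_1\in(0,1)$. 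The trouble is that the lower bound $h(\bar{q}_1)$ obtained there relied on $g(\bar{q}_1)>-1/3$, which is valid only for $\bar{q}_1<0.5$; once $g$ drops below $-1/3$ the bracket in $h$ vanishes at $\bar{q}_1=0.5$ and turns negative just above it (the coefficient of $g$ inside the bracket is the positive quantity $\bar{q}_1$), so the crude relaxation no longer certifies $\overline{re}''>0$.

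To get past this, I would abandon the lossy bound and analyze the \emph{exact} $\overline{re}''(\bar{q}_0)$ at a critical point. Using the critical-point identity to eliminate $\bar{q}_0$ via $(\bar{q}_0+\bar{q}_1)^2=(1+\bar{q}'_1)(1-\bar{q}_1)^2/\bar{q}'_1$, and substituting the closed forms for $\bar{q}'_1$ and $\bar{q}''_1$, collapses $\overline{re}''$ into a single-variable expression in $\bar{q}_1$ whose sign I would then verify on $[0.5,1)$, crucially keeping the previously-discarded positive terms $\frac{2(\bar{q}'_1)^2}{(1-\bar{q}_1)^3}$ and $\frac{2(1+\bar{q}'_1)^2}{(\bar{q}_0+\bar{q}_1)^3}$ in full so they can absorb the negative contribution $\bar{q}_1 g(\bar{q}_1)$. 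The main obstacle I anticipate is the near-degeneracy at $\bar{q}_1=0.5$, where the sigmoids $f$ and $g$ switch convexity and the leading estimate is exactly zero: no single clean global inequality will work, so I expect to split $[0.5,1)$ into subintervals and bound $g$ by its true (and fairly mild, roughly $-1/2$) minimum there rather than by $-1/3$. Should this direct route become unwieldy, a fallback is a pairwise comparison establishing $\overline{re}(\{1,2\})\ge\overline{re}(\{1,j\})$ for every $j$ by showing revenue is monotone in $\theta_j$ over the restricted sub-range $[\bar{q}^{min}_0,\bar{q}^{(n)}_0]$ of feasible second-product choices.
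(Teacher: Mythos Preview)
Your reduction to quasi-convexity of $\overline{re}(\bar{q}_0)$ on $[\bar{q}^{min}_0,\bar{q}^{max}_0]$ and the subsequent endpoint argument match the paper exactly. Where you diverge is in \emph{how} you propose to establish quasi-convexity once $\bar{q}_1\ge 0.5$. You want to push the second-order-condition machinery of Lemma~\ref{lemma_quasi_convex_1} through: use the critical-point identity to eliminate $\bar{q}_0$, collapse $\overline{re}''$ to a single-variable expression in $\bar{q}_1$, and verify positivity on $[0.5,1)$ by keeping the previously discarded positive terms. That elimination is indeed well-posed (the identity $(\bar{q}_0+\bar{q}_1)^2/(1-\bar{q}_1)^2=1+\bar{q}_0(\tfrac{1}{\bar{q}_1}+\tfrac{1}{(1-\bar{q}_1)^2})$ pins down $\bar{q}_0$ in terms of $\bar{q}_1$ independently of $\theta_1$), so the plan is coherent, but you leave the final sign check open and anticipate case-splitting near the degenerate point $\bar{q}_1=0.5$.

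The paper takes an entirely different, first-order route that avoids $\overline{re}''$ altogether. It factors $\overline{re}'(\bar{q}_0)=A\cdot\bigl((\bar{q}_0+2)\bar{q}_1^2+(\bar{q}_0^2+\bar{q}_0-1)\bar{q}_1-\bar{q}_0\bigr)$ with $A>0$, reads off the sign of $\overline{re}'$ at both endpoints (splitting into $\theta_1\le 2$ and $\theta_1>2$), and then, instead of differentiating again, treats $\overline{re}'=0$ as a quadratic in $\bar{q}_1$ defining a curve $\bar{q}_1(\bar{q}_0)$ in the $(\bar{q}_0,\bar{q}_1)$-plane. Counting its intersections with the equilibrium curve $\bar{q}_1=V(\bar{q}_0 e^{\theta_1-1})$ (at most one in Case~A, at most two in Case~B) and combining with the endpoint signs yields quasi-convexity directly. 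This geometric root-counting sidesteps the delicate second-derivative estimates you are worried about; conversely, your approach, if the single-variable positivity check on $[0.5,1)$ goes through cleanly, would give a uniform proof that does not split on $\theta_1$. Your monotonicity fallback, however, is weaker than you suggest: in the paper's Case~B the function genuinely decreases then increases on $[\bar{q}^{min}_0,\bar{q}^{max}_0]$, so monotonicity on the feasible sub-range $[\bar{q}^{min}_0,\bar{q}^{(n)}_0]$ is not automatic.
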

\begin{proof}
The basic idea is the same as that in the proof of Lemma~\ref{theo:optimal_S=2}, showing the quasi-convexity of the revenue function $\overline{re}(\bar{q}_0)$. Without the assumption of $\bar{q}_1< 0.5$, we need some new techniques to establish this result.
\begin{lemma}
The revenue function $\overline{re}(\bar{q}_0)$ is quasi-convex over the range $[\bar{q}_{0}^{min}, \bar{q}_{0}^{max}]$.
\end{lemma}
\begin{proof}
We first calculate the derivative of $\overline{re}(\bar{q}_0)$ in (\ref{eqn:revenue_3}),
$$
\overline{re}'(\bar{q}_0) = \frac{\bar{q}'_1}{(1-\bar{q}_1)^2} -\frac{1+\bar{q}'_1}{(\bar{q}_0+ \bar{q}_1)^2}.
$$
Using the equation that $\bar{q}'_1=  \frac{1}{\bar{q}_0\times \left( \frac{1}{\bar{q}_1}+ \frac{1}{(1-\bar{q}_1)^2} \right) }$, we further have
\begin{equation}\label{eqn:revenue_derivative}
 \overline{re}'(\bar{q}_0) = A\times \left(( \bar{q}_0+2) \times \bar{q}^2_1  + ( \bar{q}_0^2 + \bar{q}_0 - 1 ) \times \bar{q}_1-  \bar{q}_0  \right),
\end{equation}
where $A = \frac{1}{((1-\bar{q}_1)^2 + \bar{q}_1 )}\times \frac{1}{\bar{q}_0} \times \frac{1}{(\bar{q}_1+\bar{q}_0)^2}$ and is positive.
We evaluate the values of  $\overline{re}'(\bar{q}_0)$  at two end points $\bar{q}^{min}_0$ and $\bar{q}^{max}_0$, respectively.
If $\bar{q}_0$ is equal to $\bar{q}^{min}_0$, representing the platform displays  the first two products, then the demands satisfy the relation:  $2\times \bar{q}_1 = 1- \bar{q}^{min}_0.$
With this equation, we can derive
$$\overline{re}'(\bar{q}^{min}_0) = A \times \left( -3\times  \bar{q}^{min}_0 - (\bar{q}^{min}_0)^3  \right) < 0.$$
Similarly, when $\bar{q}_0$ is equal to $\bar{q}^{max}_0$, we have
$ \bar{q}_1 = 1- \bar{q}^{max}_0$ and
$$\overline{re}'(\bar{q}^{max}_0) = A\times \left( 1-2\times \bar{q}^{max}_0 \right).$$
Based on the definition of $V(x)$ in (\ref{eqn:v_func}), we have the following relation
\begin{eqnarray}
\nonumber
& &  (1-\bar{q}^{max}_0) \times  exp\left( \frac{1-\bar{q}^{max}_0}{\bar{q}^{max}_0} \right) =\bar{q}^{max}_0  \times exp(\theta_1-1) \\
 \Rightarrow & & \frac{1-\bar{q}^{max}_0}{\bar{q}^{max}_0}  =  W(exp(\theta_1-1)). \label{eqn:revenue_3}
\end{eqnarray}
For such scenario, we further consider two different cases:

$\bullet$ Case A: for the first case with $\theta_1 \leq 2$, according to the equation in (\ref{eqn:revenue_3}),
we  have  $\bar{q}^{max}_0 \geq 1/2$, and thus $re'(\bar{q}^{max}_0) \leq 0$.

\begin{figure}[!tbp]
  \centering
   \includegraphics[scale=0.5]{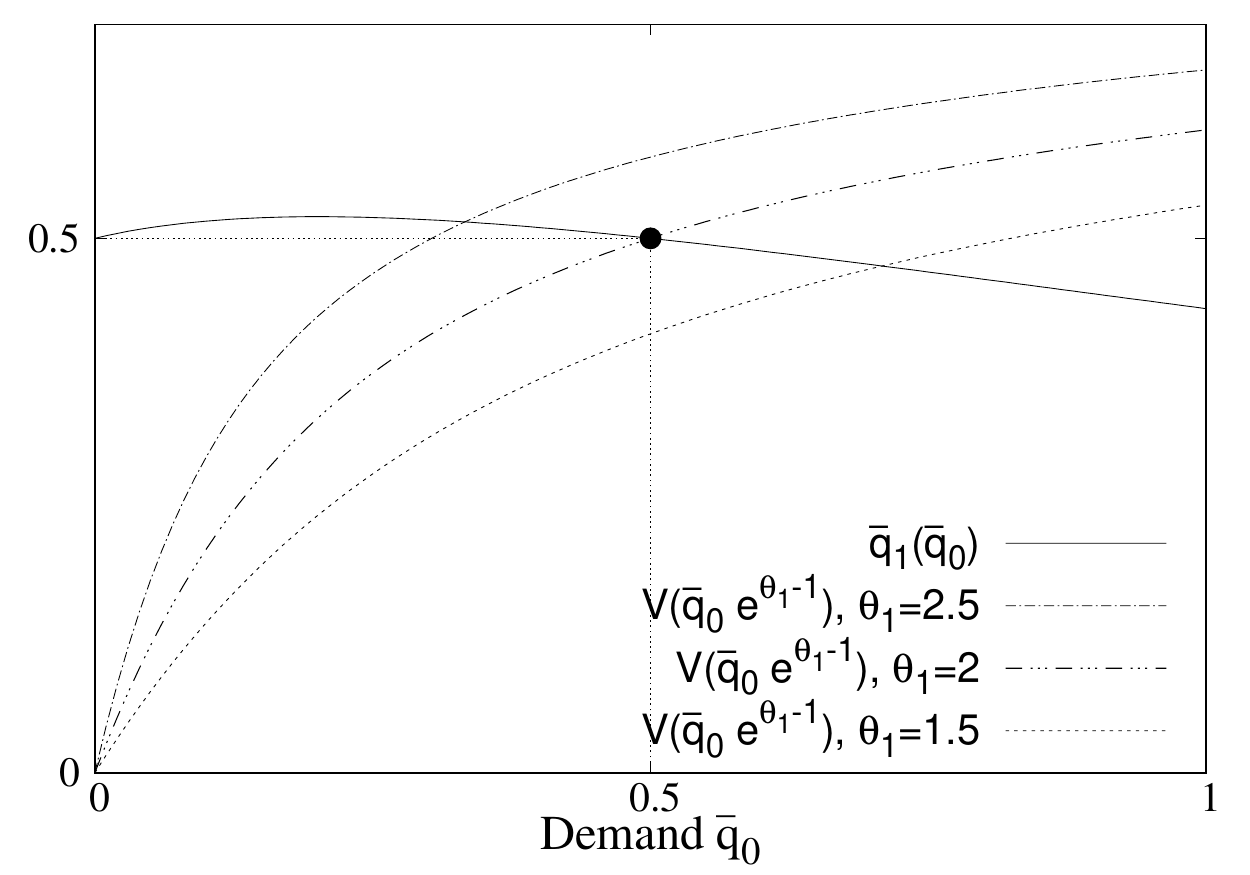}
  \caption{The  possible intersections of functions $\bar{q}_1(\bar{q}_0)$ and $V(\bar{q}_0\times \exp(\theta_1-1) )$ with different quality parameter $\theta_1$. }
  \label{fig:func_intersection}
\end{figure}

$\bullet$ Case B: for the other case with $\theta_1 > 2$, we have $\bar{q}^{max}_0 < 1/2$, and thus $re'(\bar{q}^{max}_0) > 0$.

We next show the number of roots for the equation $re'(\bar{q}_0) = 0$ in the above two cases, respectively.
The main trick here is to regard $\bar{q}_0$ as a fixed value, and express $\bar{q}_1$ as a function of $\bar{q}_0$ by setting $re'(\bar{q}_0) = 0$ in~(\ref{eqn:revenue_derivative}). Based on this idea, we have
$$
\bar{q}_1(\bar{q}_0) = \frac{-\left( \bar{q}_0^2 + \bar{q}_0 -1 \right) + \sqrt{\bar{q}_0^4  +2\times\bar{q}_0^3 +3 \times \bar{q}_0^2 +6\times \bar{q}_0+1 } }{2 \times (\bar{q}_0 + 2)}.
$$
We recall that $\bar{q}_1$ can also be expressed as
$V(\bar{q}_0\times \exp(\theta_1-1) ).$ The intersections of these two functions represent the roots for the equation $re'(\bar{q}_0) = 0$. We plot these two functions in Figure~\ref{fig:func_intersection}.

In Case A discussed  before, \ie, $\theta \leq 2$,  we can observe from the figure that the functions $\bar{q}_1(\bar{q}_0)$ and $V(\bar{q}_0\times \exp(\theta_1-1) )$ have at most one intersection, implying $re'(\bar{q}_0) = 0$ has at  most one root. This is because when $\theta\leq 2$, the intersection of these two functions can only occur in the range $\bar{q}_0 \in [0.5,1]$. We further have that $\bar{q}_1(\bar{q}_0)$ decreases, while $V(\bar{q}_0\times \exp(\theta_1-1) )$ increases in this interval $\bar{q}_0 \in [0.5,1]$. Together with the results that $re'(\bar{q}^{min}_0) < 0$ and $re'(\bar{q}^{max}_0) \leq 0$ in Case A, we can derive that the root of $re'(\bar{q}_0) = 0$ cannot be located in $[\bar{q}^{min}_0, \bar{q}^{max}_0]$ and then $re'(\bar{q}_0)\leq 0$ for $\bar{q}_0\in [\bar{q}^{min}_0, \bar{q}^{max}_0]$. This means that $\overline{re}(\bar{q}_0)$ is a non-increasing function and thus is quasi-convex in such case.

For Case B, \ie, $\theta > 2$, we can observe from Figure~\ref{fig:func_intersection} that $\bar{q}_1(\bar{q}_0)$ and $V(\bar{q}_0\times \exp(\theta_1-1) )$ have at most two intersections, and thus the equation $re'(\bar{q}_0) = 0$ could have one or two roots. In Case B, we have $re'(\bar{q}^{min}_0) < 0$ and $re'(\bar{q}^{max}_0) > 0$, and thus only one root $\bar{q}^*_0$ can locate $[\bar{q}^{min}_0, \bar{q}^{max}_0]$. This implies that $\overline{re}(\bar{q}_0)$ decreases in $[\bar{q}^{min}_0,\bar{q}^*_0]$ and increases in  $[\bar{q}^*_0, \bar{q}^{max}_0]$. Thus, $\overline{re}(\bar{q}_0)$ is also quasi-convex in Case B.

From the above discussion, we can conclude that revenue function $\overline{re}(\bar{q}_0)$ is quasi-convex over the interval $\left[\bar{q}^{min}_0, \bar{q}^{max}_0\right]$.
\end{proof}
Following the principle of the proof in Lemma~\ref{theo:optimal_S=2}, we can derive the result of this Theorem.
\end{proof}

The next theorem extends the previous result to the scenario when the cardinality of the optimal product set is larger than $2$.
\begin{theorem}\label{theorem:appendix2}
For revenue maximization, the platform will always display the top $k^*$ products if the optimal search segmentation mechanism is to select $k^*\geq 3$ products.
\end{theorem}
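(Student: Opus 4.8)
The plan is to reuse the two-stage template that already delivered Lemma~\ref{theo:optimal_S=2} and Theorem~\ref{theorem:appendix1}: first establish that the equilibrium revenue $\overline{re}(\bar{q}_0)$ in (\ref{eqn:reve_equ}) is quasi-convex on $[\bar{q}^{min}_0,\bar{q}^{max}_0]$ for every already-selected set $S$ with $|S|=k^*-1\geq 2$ --- crucially now \emph{without} the dominance assumption $\bar{q}_1<0.5$ --- and then run the exchange argument. For the exchange step, suppose the optimal set $S^*$ with $|S^*|=k^*\geq 3$ is not the top $k^*$ products; then it contains a product $a$ that can be swapped for a strictly higher-quality product outside $S^*$. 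Fixing $S=S^*\backslash\{a\}$ (which has $k^*-1\geq 2$ products and, by Lemma~\ref{theo:s1}, still contains product $1$), the candidate additions correspond to points of the incremental problem, where $\bar{q}^{max}_0$ is ``add nothing'' and $\bar{q}^{min}_0$ is ``add the best available product,'' and adding $a$ gives some interior value $\bar{q}^{(a)}_0$. Quasi-convexity forces $\overline{re}(\bar{q}^{(a)}_0)\leq\max\{\overline{re}(\bar{q}^{min}_0),\overline{re}(\bar{q}^{max}_0)\}$; since $S^*$ is optimal with cardinality $k^*>|S|$, the endpoint $\bar{q}^{max}_0=\overline{re}(S)$ cannot strictly dominate $\overline{re}(S^*)$, so $\overline{re}(\bar{q}^{min}_0)\geq\overline{re}(S^*)$. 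Hence replacing $a$ by the best available product does not decrease revenue, and iterating this swap produces the top $k^*$ set, exactly as in Lemma~\ref{theo:optimal_S=2}.

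The real work, therefore, is the quasi-convexity claim for $|S|\geq 2$ in full generality. I would follow the second-order recipe of Lemma~\ref{lemma_quasi_convex_1}: differentiate (\ref{eqn:reve_equ}) twice, substitute $\bar{q}'_i=1/(\bar{q}_0(1/\bar{q}_i+1/(1-\bar{q}_i)^2))$ and $\bar{q}''_i=(\bar{q}'_i/\bar{q}_0)(-1+g(\bar{q}_i))$ with $g$ as in (\ref{fun_g}), and verify the implication $\overline{re}'(\bar{q}_0)=0\Rightarrow\overline{re}''(\bar{q}_0)>0$. The new difficulty relative to Lemma~\ref{lemma_quasi_convex_1} is twofold: the curvature term is now a \emph{sum} $\sum_{i\in S}$ of $g(\bar{q}_i)$-contributions rather than a single $g(\bar{q}_1)$, and one selected share may exceed $0.5$, so the bound $g\geq -1/3$ no longer holds for the dominant product. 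The structural fact I intend to exploit is that with at least two selected products the aggregate mass $\bar{q}_0+\sum_{i\in S}\bar{q}_i$ stays bounded below (the analogue of the $(\bar{q}_0+\bar{q}_1)\geq 0.5$ estimate used before), and that at most one product can have share above $0.5$ while all remaining shares are small; the small shares keep their $g(\bar{q}_i)$ terms in the favorable regime $g>-1/3$, and the positive convex contribution from the $1/(\bar{q}_0+\sum_{i\in S}\bar{q}_i)$ term should absorb the single possibly-negative $g(\bar{q}_1)$ contribution.

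When a single product does dominate the market, I expect the clean second-order test to fail and to fall back on the root-counting argument of Theorem~\ref{theorem:appendix1}: regard $\bar{q}_0$ as a parameter, evaluate $\overline{re}'$ at the endpoints $\bar{q}^{min}_0$ and $\bar{q}^{max}_0$ using the equilibrium constraint (\ref{eqn:market_share_sum}) to pin down the relations among the shares there, and argue that $\overline{re}'$ undergoes at most one sign change on $[\bar{q}^{min}_0,\bar{q}^{max}_0]$, from negative to positive, which is equivalent to quasi-convexity. The main obstacle is precisely this root-counting in the multi-product setting: in Theorem~\ref{theorem:appendix1} it reduced to intersecting the two explicit curves $\bar{q}_1(\bar{q}_0)$ and $V(\bar{q}_0\exp(\theta_1-1))$ of Figure~\ref{fig:func_intersection}, whereas with $|S|\geq 2$ the stationarity equation couples all selected shares and no longer factors through a single product. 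I would handle this by isolating the dominant share, treating the aggregate of the remaining (small) shares as a single increasing well-behaved function of $\bar{q}_0$, and thereby recovering a two-curve comparison to which the sign analysis of Figure~\ref{fig:func_intersection} again applies.
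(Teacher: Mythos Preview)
Your overall architecture --- quasi-convexity of $\overline{re}(\bar{q}_0)$ on $[\bar{q}^{min}_0,\bar{q}^{max}_0]$ plus the exchange/swap argument --- matches the paper, and your treatment of the non-dominant regime ($\bar{q}_1<0.5$) via the second-order test is exactly the paper's Case~A (their Lemma~\ref{lem:quasi-convex_s=2:appendix}).

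Where you diverge is Case~B ($\bar{q}_1\geq 0.5$). You propose to extend the root-counting argument of Theorem~\ref{theorem:appendix1} by ``isolating the dominant share, treating the aggregate of the remaining (small) shares as a single increasing well-behaved function of $\bar{q}_0$'' to recover a two-curve picture. This step is where the plan is shaky: the stationarity condition
\[
\sum_{i\in S}\frac{\bar{q}'_i}{(1-\bar{q}_i)^2}=\frac{1+\sum_{i\in S}\bar{q}'_i}{(\bar{q}_0+\sum_{i\in S}\bar{q}_i)^2}
\]
does \emph{not} factor through the aggregate $\sum_{i\neq 1}\bar{q}_i$; the left side depends on each $\bar{q}_i$ individually through the nonlinear $1/(1-\bar{q}_i)^2$. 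So the reduction to the explicit quadratic $\bar{q}_1(\bar{q}_0)$ used in Figure~\ref{fig:func_intersection} does not go through, and you have no concrete mechanism for bounding the number of sign changes of $\overline{re}'$.

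The paper sidesteps this entirely. Instead of proving quasi-convexity on the full interval when $\bar{q}_1\geq 0.5$, it \emph{excludes} that region by a domination argument. Let $\bar{q}^*_0$ be the value at which $\bar{q}_1=0.5$, splitting $[\bar{q}^{min}_0,\bar{q}^{max}_0]$ into $[\bar{q}^{min}_0,\bar{q}^*_0]$ (where Case~A applies) and $[\bar{q}^*_0,\bar{q}^{max}_0]$. On the latter interval, for $S=\{1,2\}$ and any candidate $j$, the paper constructs a \emph{virtual} product $j'$ with share $\bar{q}_{j'}=\bar{q}_2+\bar{q}_j$, uses the superadditivity $\frac{\bar{q}_2+\bar{q}_j}{1-(\bar{q}_2+\bar{q}_j)}\geq\frac{\bar{q}_2}{1-\bar{q}_2}+\frac{\bar{q}_j}{1-\bar{q}_j}$ to bound the three-product revenue by the two-product revenue $\overline{re}_b(\bar{q}_0)=\frac{\bar{q}_1}{1-\bar{q}_1}+\frac{1}{\bar{q}_0+\bar{q}_1}-1$, and then invokes the already-proved $|S|=1$ quasi-convexity (Theorem~\ref{theorem:appendix1}) together with an explicit inequality $W(e^{\theta_1-1})\geq 2/(e^{2-\theta_1}+1)$ to show that $\overline{re}_b$ on $[\bar{q}^*_0,\bar{q}^{max}_0]$ is dominated by the revenue of displaying product~1 alone. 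That contradicts $k^*\geq 3$, so the problematic interval is simply empty for the optimization, and only Case~A is needed.

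In short: your proposal has the right skeleton, but the load-bearing step in the dominant regime is different from what the paper does and, as you have written it, does not visibly work. The paper's virtual-product/domination trick is the missing idea.
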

\begin{proof}
The basic idea behind the proof can be illustrated via the case of $k^* = 3$.  Without loss of generality, we can further assume that we have already selected the product set $S$ with the first two products $1$ and $2$. We only need to prove it is optimal to involve the third highest quality product.
The revenue function in this case is
$$
\overline{re}(\bar{q}_0) = \frac{\bar{q}_1}{1-\bar{q}_1} + \frac{\bar{q}_2}{1-\bar{q}_2} + \frac{1}{\bar{q}_0+ \bar{q}_1+ \bar{q}_2} -1.
$$
We continue the proof by considering the following two cases: $\bar{q}_1< 0.5$ and $\bar{q}_1 \geq 0.5$, respectively.

$\bullet$ Case A: in the first case, we require $\bar{q}_1 < 0.5$, which implies all the demands $\bar{q}_i$'s are less than $0.5$.
The basic idea is again to show the quasi-convexity of the revenue function $\overline{re}(\bar{q}_0)$ over the interval $\left[\bar{q}^{min}_0, \bar{q}^{max}_0\right]$, which is quite similar to the proof in Lemma~\ref{lemma_quasi_convex_1}.
 For completeness, we also present the detailed proof for the case of $k^*=3$ here.
\begin{lemma}\label{lem:quasi-convex_s=2:appendix}
For the case of $k^*=3$, the revenue function $\overline{re}(\bar{q}_0)$  is quasi-convex over the interval $\left[\bar{q}^{min}_0,\bar{q}^{max}_0\right]$ when $\bar{q}_1 < 0.5$.
\end{lemma}
\begin{proof}
We check the second-order conditions of a quasi-convex function, \ie, at any point with zero slope, the second derivative is non-negative:
$
re'(\bar{q}_0) = 0 \Rightarrow re''(\bar{q}_0) > 0.
$
The first derivative of the revenue function is
$$
\overline{re}'(\bar{q}_0) = \frac{\bar{q}'_1}{(1-\bar{q}_1)^2} + \frac{\bar{q}'_2}{(1-\bar{q}_2)^2} - \frac{1+\bar{q}'_1+ \bar{q}'_2}{ \left( \bar{q}_0 + \bar{q}_1 + \bar{q}_2 \right)^2},
$$
and the corresponding second derivative is
\begin{eqnarray}
\nonumber
\overline{re}''(\bar{q}_0)  & = &  \frac{\bar{q}''_1}{(1-\bar{q}_1)^2} + \frac{2\times (\bar{q}'_1)^2}{(1-\bar{q}_1)^3} + \frac{\bar{q}''_2}{(1-\bar{q}_2)^2} + \frac{2\times (\bar{q}'_2)^2}{(1-\bar{q}_2)^3}\\
& &   - \frac{\bar{q}''_1+ \bar{q}''_2}{(\bar{q}_0+ \bar{q}_1 + \bar{q}_2)^2} + \frac{2\times (1+\bar{q}'_1 + \bar{q}'_2)^2 }{(\bar{q}_0 + \bar{q}_1 + \bar{q}_2)^3}. \label{eqn:second_derivative1:appendix}
\end{eqnarray}
With the expression of $\bar{q}'_i  = \frac{1}{\bar{q}_0 \times \left( \frac{1}{\bar{q}_i}  + \frac{1}{(1-\bar{q}_i )^2}   \right)}$, we can calculate that
$
\bar{q}''_i = \frac{\bar{q}'_i}{\bar{q}_0 } \times  \left( -1 +g(\bar{q}_i) \right),
$
where
\begin{equation}\label{fun_g:appendix}
g(\bar{q}_i) \triangleq   \frac{1}{ \left(\frac{1}{\bar{q}_i }   + \frac{1}{(1- \bar{q}_ i)^2 } \right)^2 } \times \left( \frac{1}{\bar{q}^2_i} -\frac{2}{(1-\bar{q}_i)^3}   \right).
\end{equation}
 We then rewrite the second derivative in (\ref{eqn:second_derivative1:appendix}) as
\begin{eqnarray}
\nonumber
\overline{re}''(\bar{q}_0)  & = & \left( \frac{\bar{q}'_1 }{ (1-\bar{q}_1)^2 }  -  \frac{\bar{q}'_1}{(\bar{q}_0 + \bar{q}_1 + \bar{q}_2)^2}  \right) \times \frac{1}{\bar{q}_0} \times (-1+g(\bar{q}_1 ))\\
\nonumber
& & + \left( \frac{\bar{q}'_2 }{ (1-\bar{q}_2)^2 }  -  \frac{\bar{q}'_2}{(\bar{q}_0 + \bar{q}_1 + \bar{q}_2)^2}  \right) \times \frac{1}{\bar{q}_0} \times (-1+g(\bar{q}_2 ) )\\
& & + \frac{2 \times (\bar{q}'_1)^2 }{(1-\bar{q}_1)^3} +  \frac{2 \times (\bar{q}'_2)^2 }{(1-\bar{q}_2)^3} + \frac{2\times (1+\bar{q}'_1 + \bar{q}'_2)^2 }{(\bar{q}_0+\bar{q}_1+ \bar{q}_2)^3}. \label{eqn:second_derivative:appendix}
\end{eqnarray}
Since $\overline{re}'(\bar{q}_0) =  0$, we have
$$
 \frac{ \bar{q}'_1 }{ (1-\bar{q}_1)^2 }   +  \frac{\bar{q}'_2 }{ (1-\bar{q}_2)^2 }      =   \frac{1 +  \bar{q}'_1 + \bar{q}'_2}{ ( \bar{q}_0+\bar{q}_1+ \bar{q}_2)^2}.
$$
Combining with the fact that $-1+g(\bar{q}_1) \leq  -1+g(\bar{q}_2) \leq 0$, we can further relax the second derivative in (\ref{eqn:second_derivative:appendix}):
{
\small
\begin{eqnarray}
(\ref{eqn:second_derivative:appendix})
\nonumber
& \geq  & \frac{1+ \bar{q}'_1 + \bar{q}'_2}{(\bar{q}_0+\bar{q}_1+ \bar{q}_2)^2} \times   \frac{1}{\bar{q}_0}   \times ( -1+ g(\bar{q}_1 ) ) \\
\nonumber
& &  +  \frac{1}{(\bar{q}_0 + \bar{q}_1 + \bar{q}_2)^2}   \frac{1}{\bar{q}_0}  \left( \bar{q}'_1 \times (1-g(\bar{q}_1 ))  +  \bar{q}'_2 \times (1-g(\bar{q}_2 ) \right) \\
\nonumber
 & & + \frac{2 \times (\bar{q}'_1)^2 }{(1-\bar{q}_1)^3} +  \frac{2 \times (\bar{q}'_2)^2 }{(1-\bar{q}_2)^3} + \frac{2 \times (1+\bar{q}'_1 + \bar{q}'_2)^2}{(\bar{q}_0+\bar{q}_1+ \bar{q}_2)^3} \\
  \nonumber
   & = & \frac{1+ \bar{q}'_1 + \bar{q}'_2}{(\bar{q}_0+\bar{q}_1+ \bar{q}_2)^3}  \times \Bigg[ \left( 1 + \frac{ \bar{q}_1}{\bar{q}_0} + \frac{\bar{q}_2}{\bar{q}_0} \right)  \times ( -1+ g(\bar{q}_1 )) \\
 \nonumber
 & & + \frac{\bar{q}_0+\bar{q}_1+ \bar{q}_2}{(1+ \bar{q}'_1 + \bar{q}'_2) \times \bar{q}_0 }  \left( \bar{q}'_1 \times (1-g(\bar{q}_1 ))  +  \bar{q}'_2 \times (1-g(\bar{q}_2 )) \right) \\
 \nonumber
 & & +  \frac{2\times (\bar{q}_0+\bar{q}_1+ \bar{q}_2)^3}{(1+ \bar{q}'_1 + \bar{q}'_2) \times \bar{q}_0} \times  \left(  \frac{ (\bar{q}'_1)^2\times \bar{q}_0 }{(1-\bar{q}_1)^3} +  \frac{(\bar{q}'_2)^2 \times \bar{q}_0 }{(1-\bar{q}_2)^3}  \right) \\
& & + 2 \times (1+\bar{q}'_1 + \bar{q}'_2) \Bigg]. \label{eqn:relax0:appendix}
 \end{eqnarray}
\normalsize
}
We next show the following three inequalities for later analysis

$\bullet$ $\frac{\bar{q}_0+\bar{q}_1+ \bar{q}_2}{(1+ \bar{q}'_1 + \bar{q}'_2) \times \bar{q}_0 } \geq 1$,

$\bullet$ $ (-1+g(\bar{q}_1)) +2\geq 0$,

$\bullet$ $ (\bar{q}_0+\bar{q}_1+ \bar{q}_2)\geq 0.5$.

The first two inequalities are easy to verify. For the last inequality, we first have $(\bar{q}_0+\bar{q}_1+ \bar{q}_2) \geq \bar{q}_1 $. From the definition of $\bar{q}^{min}_0$, the equality $(\bar{q}_0+\bar{q}_1+ \bar{q}_2)  = 1- \bar{q}_1$ holds when $\bar{q}_0  = \bar{q}^{min}_0$. For any $\bar{q}_0  \in \left[ \bar{q}^{min}_0,\bar{q}^{max}_0\right]$, we further have $(\bar{q}_0+\bar{q}_1+ \bar{q}_2)  \geq 1- \bar{q}_1$ because $\bar{q}_i =V(\bar{q}_0 \times exp(\theta_i-1))$ is an increasing function with respective to $\bar{q}_0$. Combining these two inequalities, we have $(\bar{q}_0+\bar{q}_1+ \bar{q}_2) \geq \max\{ \bar{q}_1, 1-\bar{q}_1 \}$, resulting in that  $(\bar{q}_0+\bar{q}_1+ \bar{q}_2) \geq 0.5$.
With these three inequalities, we can further relax (\ref{eqn:relax0:appendix}):
\begin{eqnarray}
\small
\nonumber
(\ref{eqn:relax0:appendix})  & \geq &  \frac{1+ \bar{q}'_1 + \bar{q}'_2}{(\bar{q}_0+\bar{q}_1+ \bar{q}_2)^3}  \times \Bigg[  \left(  \frac{ \bar{q}_1}{\bar{q}_0} + \frac{\bar{q}_2}{\bar{q}_0} \right)  \times ( -1+ g(\bar{q}_1 )) \\
 \nonumber
& & +   \bar{q}'_1 \times (1-g(\bar{q}_1 ))  +  \bar{q}'_2 \times (1-g(\bar{q}_2 ))  \\
\nonumber
 & & +  \frac{1}{2}      \left(  \frac{ \bar{q}'_1}{(1-\bar{q}_1)^3}  \frac{1}{\frac{1}{\bar{q}_1} + \frac{1}{(1-\bar{q}_1)^2}} +  \frac{\bar{q}'_2 }{(1-\bar{q}_2)^3}  \frac{1}{\frac{1}{\bar{q}_2} + \frac{1}{(1-\bar{q}_2)^2}}  \right)  \\
 \nonumber
& & + 2 \times (\bar{q}'_1 + \bar{q}'_2) \Bigg]\label{eqn:relax1} \\
  \nonumber
 & = &  \frac{1+ \bar{q}'_1 + \bar{q}'_2}{(\bar{q}_0+\bar{q}_1+ \bar{q}_2)^3}  \times  (h(\bar{q}_1)  +  h(\bar{q}_2)),
\end{eqnarray}
\normalsize
where we define function $h(\bar{q}_i), i\in S$ to be
$$
h(\bar{q}_i)  \triangleq  \frac{ \bar{q}_i}{\bar{q}_0}   ( -1+ g(\bar{q}_1 )) +  \bar{q}'_i   \times (1-g(\bar{q}_i )) + \frac{1}{2} \frac{ \bar{q}'_i}{(1-\bar{q}_i)^3}  \frac{1}{\frac{1}{\bar{q}_i} + \frac{1}{(1-\bar{q}_i)^2}} +2  \bar{q}'_i.
$$
We further simplify this function as
\begin{eqnarray*}
h(\bar{q}_i)  & = & \frac{\bar{q}'_i}{  (1-\bar{q}_i)^2 }    \times \Bigg[ \left( (1-\bar{q}_i)^2 +\bar{q}_i \right) \times (-1+g(\bar{q}_1))   \\
& &   + (3-g(\bar{q}_i)) \times  (1-\bar{q}_i)^2   + \frac{1}{2} \times \frac{1}{\frac{1}{\bar{q}_i} + \frac{1}{(1-\bar{q}_i)} -1 } \Bigg].
\end{eqnarray*}

\begin{figure}[!tbp]
  \centering
  \includegraphics[scale=0.5]{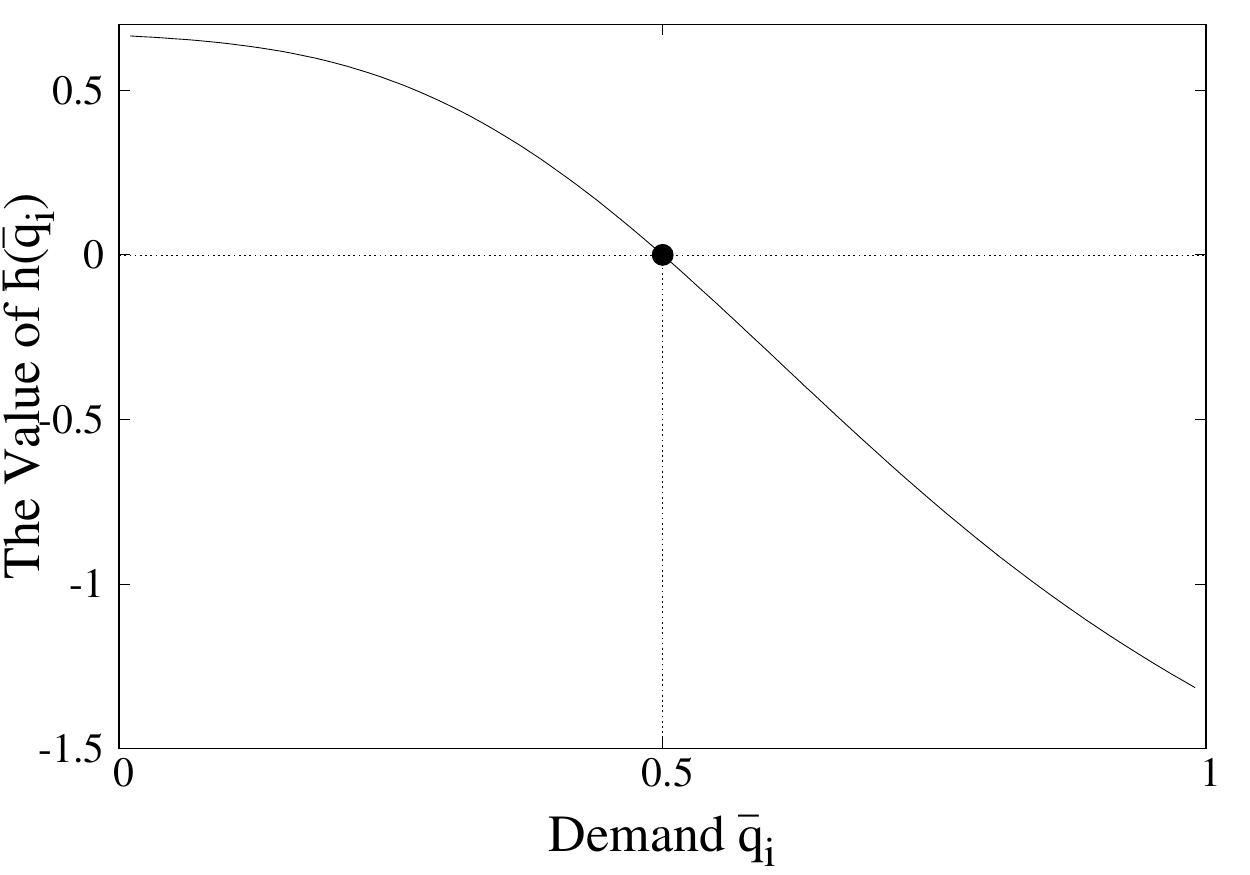}
  \caption{The graph of function $\bar{h}(\bar{q}_i)$, which shows that  $\bar{h}(\bar{q}_i)\geq 0$ for all $\bar{q}_i\leq 0.5$.}
  \label{fig:function_g}
\end{figure}


To prove $re''(\bar{q}_0) > 0$, we only need to show that  ${h}(\bar{q}_i) > 0$.
Since  $\bar{q}_1< 0.5$ in this case, we have $g(\bar{q}_1) > -1/3$, leading to that
$$
h(\bar{q}_i) > \bar{q}'_i  \times (1-\bar{q}_i)^2  \times \bar{h}(\bar{q}_i),
$$
where
\begin{equation}\label{eqn:h_bar}
\bar{h}(\bar{q}_i) \triangleq  -\frac{4}{3} \left( (1-\bar{q}_i)^2 +\bar{q}_i \right) + (3-g(\bar{q}_i)) \times  (1-\bar{q}_i)^2   + \frac{1}{2}  \frac{1}{\frac{1}{\bar{q}_i} + \frac{1}{(1-\bar{q}_i)} -1 }.
\end{equation}
It is tedious but one can verify $\bar{h}(\bar{q}_i) > 0$  for all  $0\leq \bar{q}_i< 0.5$, and a graph of this function is shown in Figure~\ref{fig:function_g}.  Thus, we have proved the quasi-convexity of revenue function $\overline{re}(\bar{q}_0)$ when $\bar{q}_1< 0.5$.
\end{proof}

$\bullet$ Case B: in the second case, $\bar{q} _1$ is larger or equal to $0.5$. We show that the optimal mechanism would still select the third product in this case.
Let $\bar{q}^*_0$ be the $\bar{q}_0$ that makes  $\bar{q}_1 = 0.5$.  According to the definition of $\bar{q}^{min}_0$, when $\bar{q}_0  = \bar{q}^{min}_0$, we have $\bar{q}_1< 0.5$, and thus we can get $\bar{q}^{min}_0 < \bar{q}^*_0$.
This threshold $\bar{q}^*_0$ separates the range $[\bar{q}^{min}_0, \bar{q}^{max}_0]$ into two intervals: $[\bar{q}^{min}_0, \bar{q}^*_0]$ and  $[\bar{q}^*_0, \bar{q}^{max}_0]$.

We next show that it is not necessary to consider the interval $[\bar{q}^*_0, \bar{q}^{max}_0]$ under the assumption of $k^* =3$.
We prove this by showing that the revenue generated when $\bar{q}_0 \in [\bar{q}^*_0, \bar{q}^{max}_0]$ is always less than the revenue of only selecting the first product, leading to a contradiction to $k^*=3$.
Since $\bar{q}_1 = V(\bar{q}_0 \times exp(\theta_1-1))$ is an increasing function with respective to $\bar{q}_0$, we have $\bar{q}_1\geq 0.5$ for  $\bar{q}_0 \in [\bar{q}^*_0, \bar{q}^{max}_0]$. The candidate  product $j\in \mathbb{S} \backslash S$ satisfies equation
$
\bar{q}_1 +\bar{q}_2 +\bar{q}_j = 1-\bar{q}_0
$
with the requirements of $\bar{q}_1\geq 0.5$ and $\bar{q}_0 \in [\bar{q}^*_0, \bar{q}^{max}_0]$. The revenue of selecting these three products is
$$
\overline{re}_{a}(\bar{q}_0) = \frac{\bar{q}_1}{1-\bar{q }_1} + \frac{\bar{q}_2}{1-\bar{q }_2} + \frac{\bar{q}_j}{1-\bar{q }_j}.
$$
For each such new product $j$, we construct a virtual product ${j}'$ with the quality parameter $\theta_{j}'$ that satisfies
$\bar{q}_{j'}   = V(\bar{q}_0\times exp(\theta_{j'}-1)) =  \bar{q}_2+\bar{q}_j  $ and $\bar{q}_1 +  \bar{q}_{j'} = 1-\bar{q}_0$.
The revenue of selecting the first product $1$ and the virtual product $j'$ is
$$
\overline{re}_b(\bar{q}_0) = \frac{\bar{q}_1}{1-\bar{q}_1} + \frac{\bar{q}_{j'}}{1-\bar{q}_{j'}} = \frac{\bar{q}_1}{1-\bar{q}_1} + \frac{1}{\bar{q}_0 +\bar{q}_1}-1 .
$$
We have shown in Lemma~\ref{lemma_quasi_convex_1} that such revenue function $re_b(\bar{q}_0)$ is quasi-convex over the range $\left[\bar{q}^{min'}_0,\bar{q}^{max'}_0\right]$\footnote{We note that $\left[\bar{q}^{min'}_0, \bar{q}^{max'}_0\right]$ is different from $\left[\bar{q}^{min}_0, \bar{q}^{max}_0\right]$. When $\bar{q}_0 = \bar{q}^{max'}_0$, we only select the first product, and in the case of
$\bar{q}_0 = \bar{q}^{max}_0$, we select the first two products.
Thus, we have $\bar{q}_1 = 1-\bar{q}^{max'}_0$ and $\bar{q}_1 +\bar{q}_2 = 1-\bar{q}^{max}_0$, resulting in that $\bar{q}^{max'}_0 \geq \bar{q}^{max}_0$. With the definition of $\bar{q}^{min'}_0$, we still have $\bar{q}^{min'}_0 < \bar{q}^*_0$.}, which implies that $re_b(\bar{q}_0)$ is also quasi-convex over $[\bar{q}^{*}_0,\bar{q}^{max'}_0]$. Thus, we have
$$
\overline{re}_b(\bar{q}_0) \leq \max \left\{ \overline{re}_b(\bar{q}^*_0), \overline{re}_b(\bar{q}^{max'}_0) \right\},  \quad \forall \bar{q}_0  \in [\bar{q}^{*}_0,\bar{q}^{max'}_0].
$$
We next show that $\overline{re}_b(\bar{q}^{max'}_0) \geq \overline{re}_b(\bar{q}^*_0)$. We first specifically write down the revenue at these two end-points, respectively.
When $\bar{q}_0=\bar{q}^{max'}_0$, meaning the platform only selects the first product,
and we have $V(\bar{q}^{max'}_0 \times exp(\theta_i-1) ) = \bar{q}_1 = 1-\bar{q}^{max'}_0$.
Similar to (\ref{eqn:revenue_3}), we can also have
$
 \frac{1-\bar{q}^{max'}_0}{\bar{q}^{max'}_0}  = W(exp(\theta_1-1)),
$
where $W(x)$ is the Lambert function. The revenue for the case of $\bar{q}_0=\bar{q}^{max'}_0$ is
$$
\overline{re}_b({\bar{q}^{max'}}_0)  = \frac{\bar{q}_1}{1-\bar{q}_1} =  \frac{1-\bar{q}^{max'}_0}{\bar{q}^{max'}_0} = W(exp(\theta_1-1)).
$$
When $\bar{q}_0  = \bar{q}^*_0$, we select the virtual product $j'$ such that $\bar{q}_1=0.5$. Thus, we have $V(\bar{q}^*_0\times exp(\theta_1-1))=0.5$, and by the definition of function $V(x)$, we can further derive
$
\bar{q}^*_0 = 0.5 \times exp(2-\theta_1).
$
The revenue for the case of $\bar{q}_0  = \bar{q}^*_0$ is
$$
\overline{re}_b(\bar{q}^*_0) = \frac{\bar{q}_1}{1-\bar{q}_1} + \frac{1}{\bar{q}^*_0 +\bar{q}_1}-1 = \frac{2}{2\times \bar{q}^*_0 +1} = \frac{2}{exp(2-\theta_1)+1}.
$$
To prove $\overline{re}_b(\bar{q}^{max}_0) \geq \overline{re}_b(\bar{q}^*_0)$, we only need to show that
$$W(exp(\theta_1-1)) \geq  \frac{2}{exp(2-\theta_1)+1}, \quad \forall \theta_1\geq 0.$$
We introduce an auxiliary notation $x \triangleq exp(\theta_1-1)$, and express the above inequality as
$$W(x) \geq \frac{2\times x}{e+x}, \quad \forall x\geq exp(-1),$$ where $e$ is Euler's number.
As $W(x)\geq 2 > \frac{2\times x}{e+x} $  for $x\geq 2 \times exp(2)$, the remaining part is just to show that
$$W(x)\geq \frac{2\times x}{e+x}, \quad \forall  x \in  [exp(-1), 2\times exp(2)].$$
The above inequality can be verified by more tedious calculations, and a plot of $W(x)$ and $\frac{2\times x}{e+x}$ over the range $x\in [exp(-1), 2\times exp(2)]$ is shown in Figure~\ref{fig:function_w}. From the above discussion, we can derive that for any $\bar{q}_0 \in [\bar{q}^*_0, \bar{q}^{max}_0]$, the following relation holds
 \begin{eqnarray*}
 \overline{re}_b(\bar{q}^{max'}_0) &  \geq  & \overline{re}_b(\bar{q}_0)  = \frac{\bar{q}_1}{1-\bar{q }_1}  +  \frac{\bar{q}_2  + \bar{q}_j}{1-(\bar{q}_2+\bar{q}_j)}  \\
&  \geq &  \frac{\bar{q}_1}{1-\bar{q }_1} + \frac{\bar{q}_2}{1-\bar{q }_2} + \frac{\bar{q}_j}{1-\bar{q }_j} = \overline{re}_{a}(\bar{q}_0).
 \end{eqnarray*}
This means that when $\bar{q}_0 \in [\bar{q}^*_0, \bar{q}^{max}_0]$, the revenue of displaying three products $re_a(\bar{q}_0)$  is always less than the revenue of only displaying the first product  $ re_b(\bar{q}^{max'}_0)$, which contradicts the assumption of $k^* =3$. Thus, we can conclude that $\bar{q}_0 \notin [\bar{q}^*_0, \bar{q}^{max}_0]$.

\begin{figure}[!tbp]
  \centering
  \includegraphics[scale=0.5]{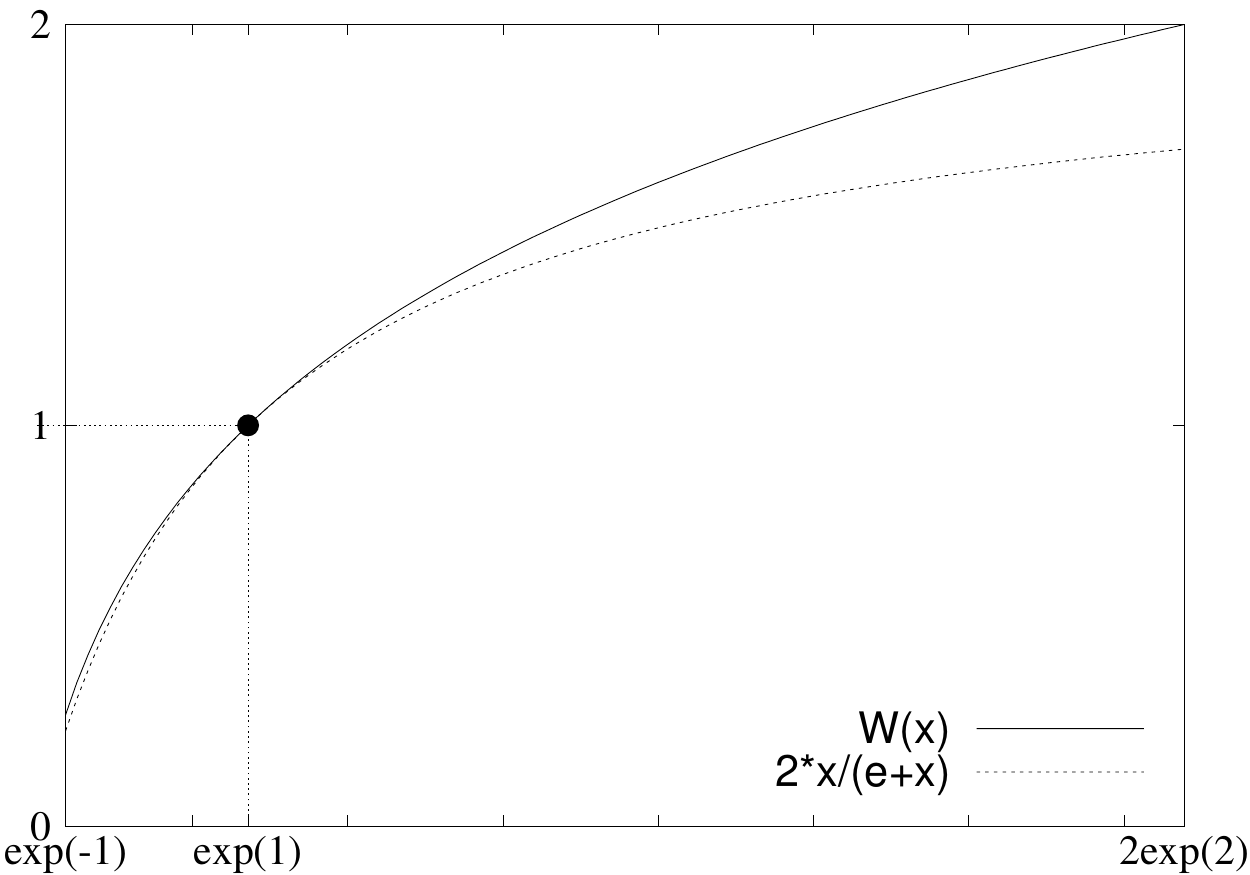}
  \caption{The graph of functions $W(x)$ and $\frac{2 \times x}{e+x}$, which shows that  $W(x)\geq \frac{2 \times x}{e+x}$ for all $exp(-1) \leq x\leq 2\times exp(2)$.}
  \label{fig:function_w}
\end{figure}

We next consider the other interval $\bar{q}_0 \in [\bar{q}^{min}_0, \bar{q}^{*}_0]$, and show that
it is still optimal to display  the third  product given the selected product set $S=\{1,2\}$.
In this case, the market share $\bar{q}_1$ is less than $0.5$. Thus, we can use the result in Lemma~\ref{lem:quasi-convex_s=2:appendix} to conclude that the revenue function $\overline{re}(\bar{q}_0)$ is quasi-convex over  $[\bar{q}^{min}_0, \bar{q}^*_0]$, and have
$$
\overline{re}(\bar{q}_0) \leq \max\left\{ \overline{re}(\bar{q}^{min}_0), \overline{re}(\bar{q}^*_0)\right\}, \quad  \forall \bar{q}_0 \in \left[\bar{q}^{min}_0, \bar{q}^{*}_0 \right].
$$
Similarly,  $\overline{re}(\bar{q}^*_0)$ cannot be the maximum value of $\overline{re}(\bar{q}_0)$ for  $\bar{q}_0 \in \left[\bar{q}^{min}_0, \bar{q}^{*}_0 \right]$, because otherwise the cardinality of the optimal product set would be $1$ due to the previous discussion.
We then have $\overline{re}(\bar{q}_0) \leq \overline{re}(\bar{q}^{min}_0)$ for $\bar{q}_0 \in \left[\bar{q}^{min}_0, \bar{q}^{*}_0 \right]$, meaning that it is  optimal to select the third product in this case.

From the above discussion for Case A and Case B, we have proved that the optimal search segmentation mechanism is to select the first three products when $k^* = 3$.
\end{proof}

From Theorem~\ref{theorem:appendix1} and Theorem~\ref{theorem:appendix2}, we have shown that the top-$k$ search segmentation mechanism is still optimal when $q_i$'s can be arbitrage values.

\section{Proof for Lemma~\ref{lem:social_cournot}}\label{app:sec:8}
\begin{proof}
In the case of $k^*=1$, we can verify from (\ref{nsw_cournot}) that the equilibrium social welfare increases with the quality of the selected product. Thus, the optimal mechanism is to display the first product when $k=1$. We assume $k\geq 2$ for the following discussion.
We recall that products are sorted in a non-decreasing order in terms of quality. As the Lambert function is an increasing function over $[0,+\infty)$, we further have $w_1\geq w_2 \geq \cdots \geq w_n >0$, where we recall $w_i  = W(exp(\theta_i-1))$.
Suppose the optimal search segmentation mechanism selects the set of products $S\subseteq \mathbb{S}$ with $ |S| = k$, which does not contain all the first $k$ products.
We will show that we can replace this product set
 $S$ with the set
$\hat{S}=\{1,2,\cdots, k\}$, and also improve equilibrium social welfare.
Consider product $j \in \mathbb{S}\backslash S$, that achieves the maximum
$w_t$'s among the unselected products, \ie, ${w}_j = \argmax \{w_t | t\in \mathbb{S}\backslash S \}$.
As the selected set $S$ does not contains all the first $k$ products, there must exist one selected product $i\in S$ such that $w_i < {w}_j$\footnote{For the case that $w_i = w_j$ and $j<i$, we can directly exchange the product $i$ and product $j$, and obtain  the same equilibrium social welfare. }.

We now show that we can replace product $j$ with product $i$ to improve social welfare. Motivated by the equilibrium social welfare $\widehat{sw}$ in (\ref{nsw_cournot}), we introduce a function
\begin{equation}\label{func:g_w}
g(w) \triangleq \log(1+\sum_{t\in S \backslash \{i\}} w_t + w ) + \frac{\sum_{t\in S\backslash \{i\}}( w^2_t + w_t) + (w^2+w)}{1+\sum_{t\in S\backslash \{i\}} w_t +w}.
\end{equation}
We note that the equilibrium social welfare of selecting product set $S$ is $g(w_i)$, the equilibrium social welfare of selecting product set $S\backslash\{i\}\cup\{j\}$ (\ie, replacing product $i$ with product $j$) is $g(w_j)$, and $g(0)$ denotes the social welfare of selecting $k-1$ products $S\backslash\{i\}$.

The key idea to prove this lemma is to show that $g(w)$ is quasi-convex over the interval $[0, {w}_j]$, which implies that
$$
g(w) \leq \max \left\{g(0) , g(w_j)\right\}, \quad \forall w\in [0,w_j].
$$
Assuming the quasi-convexity of $g(w)$, we claim $g(0)$ cannot be the maximum value of $g(w)$ under the assumption that the optimal search segmentation mechanism is to involve $k$ products. Suppose $g(0) \geq g(w)$ for all $w\in [0,w_j]$. This means that the set $S\backslash \{i\}$ with cardinality $k-1$ achieves higher social welfare than the set $S$ with cardinality $k$, which contradicts the assumption in this lemma. Thus, the maximum $g(w)$ over the interval $[0,w_j]$ is $g(w_j)$ and $g(w_j)\geq g(w_i)$, meaning that selecting product $j$ instead of product $i$ achieves higher social welfare.

We now prove the quasi-convexity of the function  $g(w)$.
We first calculate the derivative of  $g(w)$
$$
g'(w)= \frac{w^2 + B \times w + C}{(1+\sum_{t\in S \backslash \{i\}} w_t +w)^2},
$$
where $B = (2 \sum_{t\in S \backslash \{i\}} w_t+3 )$ and $C= -\sum_{t\in S \backslash \{i\}} \left( w^2_t-w_t\right) +2$.  $B$ is always positive, while $C$ could be positive, negative or zero. We continue the proof by considering the following two cases.

$\bullet$ If $C\geq0$ then $g'(w)$ is positive for any non-negative $w$, meaning $g(w)$ is an increasing function, and then is a quasi-convex function over the range $[0,w_j]$. 

$\bullet$ If $C<0$ then the equation $g'(w)=0$ has a positive root $w^*=(-B+\sqrt{B^2-4C})/2$.
We claim that  $g'(w)$ cannot be negative for all $w\in [0,w_j]$, otherwise $g(0)$ would be the maximum value for all $g(w)$, which contradicts the assumption of the lemma due to the same reasons discussed before.
Thus, we can have that $g'(w)$ is negative over the range $[0,w^*)$ and positive in $[w^*,w_j]$, \ie, $g(w)$ decreases in $[0,w^*)$ and increases in $[w^*,w_j]$. With this property, it follows that $g(w)$ is also quasi-convex in this case.

From the above discussion, we have proved that if the optimal mechanism is to display $k$ products, $g(w_j)$ is always larger than $g(w_i)$.
This means that for any selected product set $S$ that does not contain the first $k$ products,
we can always find an unselected  product $j\in \mathbb{S}\backslash S$ and a selected product $i\in S$ with $w_j>w_i$, and improve social welfare by replacing product $j$ with product $i$. Iteratively conducting this operation, we can obtain a new set $\hat{S}$ that exactly contains the  first $k$ products, and achieve the maximum equilibrium social welfare when the number of the optimal products is $k$.
\end{proof}

\section{Proof for Lemma~\ref{lem_revenue_cournot}}\label{app:sec:9}
\begin{proof}
The basic idea is also to check the quasi-convexity of the equilibrium revenue function. The only difference is to change $g(w)$ in (\ref{func:g_w}) to $\hat{g}(w)$ based on the equilibrium revenue in (\ref{nre_cournot}),  \ie,
\begin{equation}
\hat{g}(w) \triangleq  \frac{\sum_{t\in S\backslash \{i\}}( w^2_t + w_t) + (w^2+w)}{1+\sum_{t\in S\backslash \{i\}} w_t +w}.
\end{equation}
The corresponding derivative of $\hat{g}(w)$ is
$$
\hat{g}'(w) = \frac{w^2 + \hat{B} \times w + \hat{C}}{(1+\sum_{t\in S \backslash \{i\}} w_t +w)^2},
$$
where $\hat{B}  =  2 \times \left(\sum_{t\in S \backslash \{i\}} w_t +1 \right)$ and $\hat{C} = 1-\sum_{t\in S \backslash \{i\}}  w^2_t$. Here $\hat{B}$ is always positive, while $\hat{C}$ could be positive, negative or zero. The following steps are similar to those in the proof for Lemma~\ref{lem:social_cournot}, and we omit them here.
\end{proof}

\end{document}